\let\originalleft\left
\let\originalright\right
\renewcommand{\left}{\mathopen{}\mathclose\bgroup\originalleft}
\renewcommand{\right}{\aftergroup\egroup\originalright}
\newcommand{\bigO}[1]{\ensuremath{O(#1)}}
\newcommand{\EXPSPACE}{\ensuremath{\operatorname{EXPSPACE}}\xspace}
\newcommand{\PSPACE}{\ensuremath{\operatorname{PSPACE}}\xspace}
\newcommand{\Nat}{\mathbb{N}}
\newcommand{\Procs}{\ensuremath{\mathcal{P}}}
\definecolor{roleColor}{rgb}{0.1, 0.3, 0.1}\newcommand{\roleCol}[1]{{\color{roleColor}#1}}\newcommand{\roleFmt}[1]{\boldsymbol{\roleCol{\mathtt{#1}}}}
\newcommand{\buyerA}{{\color{roleColor}\roleFmt{a}}}
\newcommand{\buyerB}{{\color{roleColor}\roleFmt{b}}}
\newcommand{\seller}{{\color{roleColor}\roleFmt{s}}}
\newcommand{\procA}{{\color{roleColor}\roleFmt{p}}}
\newcommand{\procB}{{\color{roleColor}\roleFmt{q}}}
\newcommand{\procC}{{\color{roleColor}\roleFmt{r}}}
\newcommand{\procD}{{\color{roleColor}\roleFmt{s}}}
\newcommand{\procAunc}{\boldsymbol{\mathtt{p}}}
\newcommand{\procBunc}{\boldsymbol{\mathtt{q}}}
\newcommand{\RcvEvs}{\ensuremath{R}}
\newcommand{\SndEvs}{\ensuremath{S}}
\newcommand{\eventnodes}{\ensuremath{N}}
\newcommand{\edges}{\ensuremath{E}}
\newcommand{\val}{\ensuremath{m}}
\newcommand{\vertexA}{\ensuremath{v}}
\newcommand{\vertexB}{\ensuremath{u}}
\newcommand{\BMSCs}{\ensuremath{\mathcal{M}}}
\newcommand{\Alphabet}{\Sigma}
\newcommand{\MsgVals}{\ensuremath{\mathcal{V}}}
\newcommand{\CSM}[1]{\ensuremath{\{\!\!\{#1_\procA\}\!\!\}_{\procA \in \Procs}}}
\newcommand{\CSMl}[1]{\ensuremath{\{\!\!\{{#1}\}\!\!\}_{\procA \in \Procs}}}
\newcommand{\emptystring}{\varepsilon}
\newcommand{\set}[1]{\{#1\}}
\newcommand{\lang}{\mathcal{L}}
\newcommand{\interswaplang}{\mathcal{C}^{\interswap}}
\newcommand{\intraswaplang}{\mathcal{C}^{\intraswap}}
\newcommand{\indeprellang}{\mathcal{C}^{\equiv_{\mathcal{I}}}}
\newcommand{\SyncToAsync}{\ensuremath{\operatorname{split}}}
\newcommand{\channels}{\ensuremath{\mathsf{Chan}}}
\newcommand{\channel}[2]{\ensuremath{\langle#1,#2\rangle}}
\newcommand{\trace}{\operatorname{trace}}
\newcommand{\GG}{\mathbf{G}}
\newcommand{\semglobal}{\ensuremath{\mathsf{GAut}}}
\newcommand{\semglobalsync}{\ensuremath{\mathsf{GAut}}}
\newcommand{\AlphSync}{\ensuremath{Σ_{\mathit{sync}}}}
\newcommand{\AlphAsync}{\ensuremath{Σ_{\mathit{async}}}}
\newcommand{\semlocal}{\ensuremath{\mathsf{LAut}}}
\newcommand{\fin}{\operatorname{fin}}
\newcommand{\interswap}{\ensuremath{\sim}}
\newcommand{\intraswap}{\ensuremath{\approx}}
\DeclareMathOperator*{\ExtCh}{\&}
\DeclareMathOperator*{\IntCh}{⊕}
\newcommand{\merge}{\sqcap}
\newcommand\plainmerge{\mathrel{\ooalign{\hss$\merge$\hss\cr \kern0.36ex\raise0.35ex\hbox{\scalebox{0.7}{$p$}}}}}
\newcommand\semifullmerge{\mathrel{\ooalign{\hss$\merge$\hss\cr \kern0.36ex\raise0.3ex\hbox{\scalebox{0.7}{$s$}}}}}
\newcommand\fullmerge{\mathrel{\ooalign{\hss$\merge$\hss\cr \kern0.3ex\raise0.03ex\hbox{\scalebox{0.7}{$f$}}}}}
\newcommand\availmerge{\mathrel{\ooalign{\hss$\merge$\hss\cr \kern0.36ex\raise0.3ex\hbox{\scalebox{0.7}{$a$}}}}}
\DeclareMathOperator*{\Merge}{\merge}
\def \ifempty#1{\def\temp{#1} \ifx\temp\empty }
\newcommand{\snd}[3]{\ifempty{#1} #2!#3 \else #1\triangleright#2!#3 \fi}
\newcommand{\rcv}[3]{\ifempty{#2} #1?#3 \else #2\triangleleft#1?#3 \fi}
\newcommand{\msgFromTo}[3]{#1\!\to\!#2\!:\!#3}
\newcommand{\pref}{\operatorname{pref}}
\newcommand{\levelfunc}{\ensuremath{\operatorname{lvl}}}
\newcommand{\tproj}{{\ensuremath{\upharpoonright}}}
\newcommand{\wproj}{{\ensuremath{\Downarrow}}}
\def\mmerge{\mathrel{\ThisStyle{\stretchrel*{\ooalign{\raise0.2\LMex\hbox{$\SavedStyle\sqcap$}\cr \raise-0.2\LMex\hbox{$\SavedStyle\sqcap$}}}{\sqcap}}}}
\tikzstyle{hmscarrow}=[->, thick]
\tikzstyle{bmscbox}=[rounded corners, opacity=0.5]
\newcommand{\union}{\cup}
\newcommand{\inters}{\cap}
\newcommand{\Union}{\bigcup}
\newcommand{\dunion}{\uplus}
\newcommand{\Dunion}{\biguplus}
\DeclarePairedDelimiter\card{\lvert}{\rvert}
 \providecommand{\Coloneqq}{\mathrel{\mathop{::}}=} \newcommand{\is}{\coloneq}
\newcommand{\from}{\colon}
\newcommand{\inv}[1]{#1^{-1}}
\newcommand{\ie}{i.e.~}
\def\grammOr{\hspace{3pt}\mid\hspace{3pt}}
\def\grammIs{\Coloneqq}
\gdef\@grammar@bar{\catcode`\|=\active \def|{\grammOr}}
\newcommand{\gramm}[1]{\begingroup
  \def\is{\grammIs}\@grammar@bar #1\endgroup }
\newenvironment{grammar}{\begin{equation*}\def\is{& \grammIs }\@grammar@bar \aligned }
{\endaligned \end{equation*}\aftergroup\ignorespaces }
\newcommand{\hole}{\hbox{-}}
\DeclarePairedDelimiterXPP\aenc[2]{\constr{a}}{(}{)}{_{#2}}{\strip@parens#1}
\DeclarePairedDelimiterXPP\pub[1]{\constr{p}}{(}{)}{}{\strip@parens#1}
\newcommand{\Parallel}{\@ifstar{\prod}{{\textstyle\prod}}}
\newcommand{\Alt}{\@ifstar{\sum}{{\textstyle\sum}}}
\tikzset{
  sem/.style={
    transform shape,
	 node distance = 1cm,
     every state/.style = {semnode},
	 every edge/.style = {semarrow}
  }
}
\tikzset{
  semnode/.style={
    shape=circle,
    minimum size = 5mm,
    inner sep = 1pt,
    font=\tiny,
    draw
  },
  semarrow/.style={
    ->,
    shorten >=1pt,
    >=stealth',
    auto,
    font=\scriptsize,
    draw,
}
}
\newcommand{\TBPOR}{\operatorname{2BP}}
\newcommand{\TBPWS}{\operatorname{2BPWS}}
\newcommand{\TBPIR}{\operatorname{2BPIR}}
\newcommand{\MPCP}{\operatorname{MPCP}}
\newcommand{\query}{\textit{query}}
\newcommand{\price}{\textit{price}}
\newcommand{\cancel}{\textit{cancel}}
\newcommand{\no}{\textit{no}}
\newcommand{\splitmsg}{\textit{split}}
\newcommand{\yes}{\textit{yes}}
\newcommand{\buy}{\textit{buy}}
\newcommand{\done}{\textit{done}}
\newcommand{\login}{\textit{login}}
\newcommand{\subscribe}{\textit{subscribe}}
\newcommand{\subscribed}{\textit{subscribed}}
\newcommand{\queryshort}{q}
\newcommand{\priceshort}{p}
\newcommand{\cancelshort}{c}
\newcommand{\noshort}{n}
\newcommand{\splitshort}{s}
\newcommand{\yesshort}{y}
\newcommand{\buyshort}{b}
\newcommand{\doneshort}{d}
\newcommand{\erased}[1]{\ensuremath{[#1]\rightsquigarrow\emptystring}}
\tikzstyle{sndstate}=[state, diamond]
\tikzstyle{rcvstate}=[state, rectangle]
\tikzstyle{recstate}=[state, densely dashed]
\tikzstyle{varstate}=[state, densely dashdotted]
\tikzstyle{finalstate}=[state, accepting]
\newcounter{assumptioncounter}
\newtheorem*{assumption*}{Assumption}
\newcounter{algorithmcounter}
\newtheorem{algorithm}[algorithmcounter]{Algorithm}
\newlength{\savedbelowdisplayskip}
\definecolor{colorblind1}{RGB}{216, 27, 96} \definecolor{colorblind2}{RGB}{30, 136, 229} \definecolor{colorblind3}{RGB}{255, 193, 7} \definecolor{colorblind4}{RGB}{0, 77, 64}
\title{Asynchronous Multiparty Session Type \\ Implementability is Decidable -- \\ Lessons Learned from Message Sequence Charts}
\titlerunning{Asynchronous MST Implementability is Decidable -- Lessons Learned from MSCs}
\author{Felix Stutz}
{MPI-SWS, Kaiserslautern, Germany}
{fstutz@mpi-sws.org}
{https://orcid.org/0000-0003-3638-4096}
{}
\authorrunning{F.\ Stutz}
\keywords{Multiparty session types,
           Verification,
           Message sequence charts} 
\begin{document}

\setlength{\savedbelowdisplayskip}{\belowdisplayskip}
\setlength{\belowdisplayskip}{3pt}

\maketitle

\begin{abstract}
Multiparty session types (MSTs) provide efficient means to specify and verify asynchronous message-passing systems. For a global type, which specifies all interactions between roles in a system, the implementability problem asks whether there are local specifications for all roles such that their composition is deadlock-free and generates precisely the specified executions.
Decidability of the implementability problem is an open question.
We answer it positively for global types with sender-driven choice, which allow a sender to send to different receivers upon branching and a receiver to receive from different senders.
To achieve this, we generalise results from the domain of high-level message sequence charts (HMSCs). This connection also allows us to comprehensively investigate how HMSC techniques can be adapted to the MST setting.
This comprises techniques to make the problem algorithmically more tractable as well as a variant of implementability that may open new design space for MSTs.
Inspired by potential performance benefits, we introduce a generalisation of the implementability problem that we, unfortunately, prove to be~undecidable.  \end{abstract}

\section{Introduction}

Distributed message-passing systems are omnipresent and, therefore, designing and implementing them correctly is very important.
However, this is a very difficult task at the same time.
In fact, it is well-known that verifying such systems is algorithmically undecidable in general due to the combination of asynchrony (messages are buffered) and concurrency \cite{DBLP:journals/jacm/BrandZ83}.

Multiparty Session Type (MST) frameworks \cite{DBLP:conf/popl/HondaYC08,DBLP:journals/jacm/HondaYC16} provide efficient means to specify and verify such distributed message-passing systems (e.g., see the survey~\cite{DBLP:journals/ftpl/AnconaBB0CDGGGH16}).
They have also been applied to various other domains like
cyber-physical systems~\cite{DBLP:conf/ecoop/MajumdarPYZ19},
    timed systems~\cite{DBLP:conf/esop/BocchiMVY19},
web services~\cite{DBLP:conf/tgc/YoshidaHNN13}, and
    smart contracts~\cite{DBLP:conf/csfw/DasB0PS21}.
In~MST frameworks, global types are global specifications, which comprise all interactions between roles in a protocol.
From a design perspective, it makes sense to start with such a global protocol specification --- instead of a system with arbitrary communication between roles and a specification to satisfy.

Let us consider a variant of the well-known two buyer protocol from the MST literature,
e.g.,~\cite[Fig.~4(2)]{DBLP:journals/pacmpl/ScalasY19}.
Two buyers $\buyerA$ and~$\buyerB$ purchase a sequence of items from seller $\seller$.
We informally describe the protocol and \emph{emphasise} the interactions.
At the start and after every purchase (attempt), buyer $\buyerA$ can decide whether to buy the next item or whether they are \emph{done}.
For each item, buyer $\buyerA$ \emph{queries} its price and the seller $\seller$ replies with the \emph{price}.
Subsequently, buyer $\buyerA$ decides whether to \emph{cancel} the purchase process for the current item or proposes to \emph{split} to buyer $\buyerB$ that can \emph{accept} or \emph{reject}.
In both cases, buyer $\buyerA$ notifies the seller~$\seller$ if they want to \emph{buy} the item or \emph{not}.
This protocol is specified with the following global type:

{ \vspace{-5ex}
\scriptsize \[
\GG_{\TBPOR}
    \quad \is \quad
\mu t. \,
+
\begin{cases}
\msgFromTo{\buyerA}{\seller}{\query}. \,
\msgFromTo{\seller}{\buyerA}{\price}. \,
+
    \begin{cases}
    \msgFromTo{\buyerA}{\buyerB}{\splitmsg}. \,
        (
        \msgFromTo{\buyerB}{\buyerA}{\yes}. \,
        \msgFromTo{\buyerA}{\seller}{\buy}. \, t
        +
        \msgFromTo{\buyerB}{\buyerA}{\no}. \,
        \msgFromTo{\buyerA}{\seller}{\no}. \, t
        )
    \\
    \msgFromTo{\buyerA}{\buyerB}{\cancel}. \,
    \msgFromTo{\buyerA}{\seller}{\no}. \, t
    \end{cases}
\\
\msgFromTo{\buyerA}{\seller}{\done}. \,
\msgFromTo{\buyerA}{\buyerB}{\done}. \, 0
\end{cases}
\hspace{-5ex}. \] }

\vspace{-1ex}
\noindent The first term $\mu t$ binds the recursion variable $t$ which is used at the end of the first two lines and allows the protocol to recurse back to this point.
Subsequently, $+$ and the curly bracket indicate a choice by buyer $\buyerA$ as it is the sender for the next interaction, e.g.,
$\msgFromTo{\buyerA}{\seller}{\query}$.
For our asynchronous setting, this term jointly specifies the send event
$\snd{\buyerA}{\seller}{\query}$ for buyer~$\buyerA$ and its corresponding receive event
$\rcv{\buyerA}{\seller}{\query}$ for seller~$\seller$, which may happen with arbitrary delay.
The state machine in \cref{fig:fsm-semantics-2BPOR} illustrates its semantics with abbreviated message~labels.

\smallskip\noindent
{\sffamily\bfseries
The Implementability Problem for Global Types and the MST Approach
}

\noindent
A global type provides a global view of the intended protocol.
However, when implementing a protocol in a distributed setting, one needs a local specification for each role.
The \emph{implementability problem} for a global type asks whether there are local specifications for all roles such that, when complying with their local specifications, their composition never gets stuck and exposes the same executions as specified by the global type.
This is a challenging problem because roles can only partially observe the execution of a system:
each role only knows the messages it sent and received and,
in an asynchronous setting, a role does not know when one of its messages will be received by another role.
In contrast, in a synchronous setting, there are no channels, yielding finite state systems.
Still, we could not find a reference that precisely settles the decidability of synchronous implementability.
We sketch a proof in \cref{sec:related}.
In this work, we solely deal with the asynchronous setting.

In general, one distinguishes between a role in a protocol and the process which implements the local specification of a role in a system.
We use the local specifications directly as implementations so
the difference is not essential and we use the term role instead of~process.

\begin{figure}[t]

\begin{subfigure}[b]{0.33\textwidth}
\centering
\resizebox{0.87\textwidth}{!}{
    \begin{tikzpicture}[sem , node distance=0.5cm and 0.7cm]
    \node[recstate, initial above, initial text = ] (q0) {$q_0$};
    \node[state, below = of q0, yshift=1mm] (q1) {$q_1$};
    \node[state, below = of q1, xshift=-10mm, yshift=1mm] (q2) {$q_2$};
    \node[state, below = of q2, yshift=1mm] (q3) {$q_3$};
    \node[state, below = of q3, xshift=-8mm, yshift=1mm] (q4) {$q_4$};
    \node[state, below = of q4, xshift=-4mm, yshift=1mm] (q5) {$q_5$};
    \node[varstate, below = of q5, yshift=1mm] (q6) {$q_6$};
    \node[state, below = of q4, xshift=4mm, yshift=1mm] (q7) {$q_7$};
    \node[varstate, below = of q7, yshift=1mm] (q8) {$q_{8}$};
    \node[state, below = of q3, xshift=8mm, yshift=1mm] (q9) {$q_9$};
    \node[varstate, below = of q9, yshift=1mm] (q10) {$q_{10}$};
    \node[state, below = of q1, xshift=10mm, yshift=1mm] (q11) {$q_{11}$};
    \node[finalstate, below = of q11, yshift=1mm] (q12) {$q_{12}$};
\path (q0) edge node[right, yshift=.5mm] {$\emptystring$} (q1);
    \path (q1) edge node[left, yshift=1mm] {$\msgFromTo{\buyerA}{\seller}{\queryshort}$} (q2);
    \path (q2) edge node[left] {$\msgFromTo{\seller}{\buyerA}{\priceshort}$} (q3);
    \path (q3) edge node[right, yshift=1mm] {$\msgFromTo{\buyerA}{\buyerB}{\cancelshort}$} (q9);
    \path (q9) edge node[right, yshift=.5mm] {$\msgFromTo{\buyerA}{\seller}{\noshort}$} (q10);
    \path (q3) edge node[left, yshift=1mm] {$\msgFromTo{\buyerA}{\buyerB}{\splitshort}$} (q4);
    \path (q4) edge node[left, yshift=1mm] {$\msgFromTo{\buyerB}{\buyerA}{\yesshort}$} (q5);
    \path (q5) edge node[left, yshift=1mm] {$\msgFromTo{\buyerA}{\seller}{\buyshort}$} (q6);
    \path (q4) edge node[right, yshift=1mm] {$\msgFromTo{\buyerB}{\buyerA}{\noshort}$} (q7);
    \path (q7) edge node[right, yshift=.5mm] {$\msgFromTo{\buyerA}{\seller}{\noshort}$} (q8);
    \path (q1) edge node[right, yshift=1mm] {$\msgFromTo{\buyerA}{\seller}{\doneshort}$} (q11);
    \path (q11) edge node[right, yshift=.5mm] {$\msgFromTo{\buyerA}{\buyerB}{\doneshort}$} (q12);

    \draw[semarrow] (q10.south) -- node {} ++(0,-0.15) |- ++(2.3cm,0) |- (q0) node[] {} node[] {};
    \draw[semarrow] (q6.south) -- node {} ++(0,-0.15) |- ++(-1.1cm,0) |- (q0) node[] {} node[] {};
    \draw[semarrow] (q8.south) -- node {} ++(0,-0.15) |- ++(3.7cm,0) |- (q0) node[] {} node[] {};
\end{tikzpicture}
 }
\caption{State machine for \\ \hspace{4ex} semantics of $\GG_{\TBPOR}$}
\label{fig:fsm-semantics-2BPOR}
\end{subfigure}
\hfill
\begin{subfigure}[b]{0.33\textwidth}
\centering
\resizebox{0.87\textwidth}{!}{
    \begin{tikzpicture}[sem , node distance=0.5cm and 0.7cm]
    \node[recstate, initial above, initial text = ] (q0) {$q_0$};
    \node[rcvstate, below = of q0, yshift=1mm] (q1) {$q_1$};
    \node[sndstate, below = of q1, xshift=-10mm, yshift=1mm] (q2) {$q_2$};
    \node[state, below = of q2, yshift=1mm] (q3) {$q_3$};
    \node[state, below = of q3, xshift=-8mm, yshift=1mm] (q4) {$q_4$};
    \node[rcvstate, below = of q4, xshift=-4mm, yshift=1mm] (q5) {$q_5$};
    \node[varstate, below = of q5, yshift=1mm] (q6) {$q_6$};
    \node[rcvstate, below = of q4, xshift=4mm, yshift=1mm] (q7) {$q_7$};
    \node[varstate, below = of q7, yshift=1mm] (q8) {$q_{8}$};
    \node[rcvstate, below = of q3, xshift=8mm, yshift=1mm] (q9) {$q_9$};
    \node[varstate, below = of q9, yshift=1mm] (q10) {$q_{10}$};
    \node[finalstate, below = of q1, xshift=10mm, yshift=1mm] (q11) {$q'_{11}$};
\path (q0) edge node[right, yshift=.5mm] {$\emptystring$} (q1);
    \path (q1) edge node[left, yshift=1mm] {$\rcv{\buyerA}{\seller}{\queryshort}$} (q2);
    \path (q2) edge node[left, yshift=1mm] {$\snd{\seller}{\buyerA}{\priceshort}$} (q3);
    \path (q3) edge node[right, yshift=1mm] {$\emptystring$} (q9);
    \path (q9) edge node[right, yshift=.5mm] {$\rcv{\buyerA}{\seller}{\noshort}$} (q10);
    \path (q3) edge node[left, yshift=1mm] {$\emptystring$} (q4);
    \path (q4) edge node[left, yshift=1mm] {$\emptystring$} (q5);
    \path (q5) edge node[left, yshift=.5mm] {$\rcv{\buyerA}{\seller}{\buyshort}$} (q6);
    \path (q4) edge node[right, yshift=1mm] {$\emptystring$} (q7);
    \path (q7) edge node[right, yshift=.5mm] {$\rcv{\buyerA}{\seller}{\noshort}$} (q8);
    \path (q1) edge node[right, yshift=1mm] {$\rcv{\buyerA}{\seller}{\doneshort}$} (q11);

    \draw[semarrow] (q10.south) -- node {} ++(0,-0.15) |- ++(2.3cm,0) |- (q0) node[] {} node[] {};
    \draw[semarrow] (q6.south) -- node {} ++(0,-0.15) |- ++(-1.1cm,0) |- (q0) node[] {} node[] {};
    \draw[semarrow] (q8.south) -- node {} ++(0,-0.15) |- ++(3.7cm,0) |- (q0) node[] {} node[] {};
\end{tikzpicture}
 }
\caption{Projection of $\GG_{\TBPOR}$ onto $\seller$ \\ without merge}
\label{fig:projection-seller-2BPOR}
\end{subfigure}
\hfill
\begin{subfigure}[b]{0.32\textwidth}
\centering
\def\svgwidth{0.95\textwidth}
\begingroup \makeatletter \providecommand\color[2][]{\errmessage{(Inkscape) Color is used for the text in Inkscape, but the package 'color.sty' is not loaded}\renewcommand\color[2][]{}}\providecommand\transparent[1]{\errmessage{(Inkscape) Transparency is used (non-zero) for the text in Inkscape, but the package 'transparent.sty' is not loaded}\renewcommand\transparent[1]{}}\providecommand\rotatebox[2]{#2}\newcommand*\fsize{\dimexpr\f@size pt\relax}\newcommand*\lineheight[1]{\fontsize{\fsize}{#1\fsize}\selectfont}\ifx\svgwidth\undefined \setlength{\unitlength}{329.10428265bp}\ifx\svgscale\undefined \relax \else \setlength{\unitlength}{\unitlength * \real{\svgscale}}\fi \else \setlength{\unitlength}{\svgwidth}\fi \global\let\svgwidth\undefined \global\let\svgscale\undefined \makeatother \begin{picture}(1,1.01170208)\lineheight{1}\setlength\tabcolsep{0pt}\put(0,0){\includegraphics[width=\unitlength,page=1]{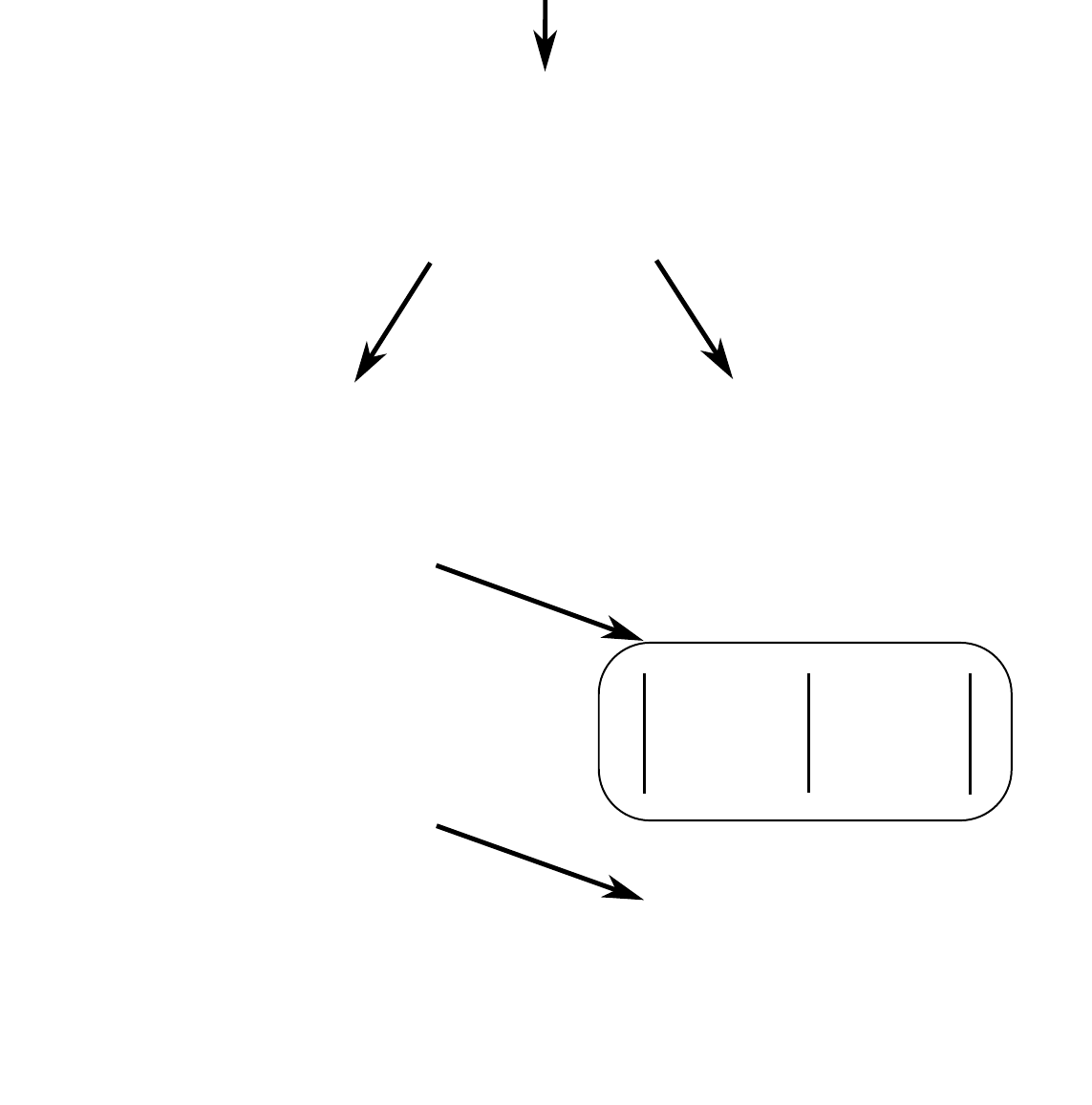}}\put(0.33496768,0.89237395){\color[rgb]{0,0,0}\makebox(0,0)[lt]{\lineheight{1.25}\smash{\begin{tabular}[t]{l}{\scriptsize $\seller$}\end{tabular}}}}\put(0,0){\includegraphics[width=\unitlength,page=2]{figs/hmsc-2BPOR.pdf}}\put(0.74482909,0.36127617){\color[rgb]{0,0,0}\makebox(0,0)[lt]{\lineheight{0.25}\smash{\begin{tabular}[t]{l}{\tiny \cancel}\end{tabular}}}}\put(0.65242462,0.31701227){\color[rgb]{0,0,0}\makebox(0,0)[lt]{\lineheight{1.25}\smash{\begin{tabular}[t]{l}{\tiny \no}\end{tabular}}}}\put(0,0){\includegraphics[width=\unitlength,page=3]{figs/hmsc-2BPOR.pdf}}\put(0.80159254,0.1240988){\color[rgb]{0,0,0}\makebox(0,0)[lt]{\lineheight{1.25}\smash{\begin{tabular}[t]{l}{\tiny \no}\end{tabular}}}}\put(0.65360937,0.07983359){\color[rgb]{0,0,0}\makebox(0,0)[lt]{\lineheight{1.25}\smash{\begin{tabular}[t]{l}{\tiny \no}\end{tabular}}}}\put(0,0){\includegraphics[width=\unitlength,page=4]{figs/hmsc-2BPOR.pdf}}\put(0.74960085,0.5524435){\color[rgb]{0,0,0}\makebox(0,0)[lt]{\lineheight{1.25}\smash{\begin{tabular}[t]{l}{\tiny \done}\end{tabular}}}}\put(0.63599412,0.59732208){\color[rgb]{0,0,0}\makebox(0,0)[lt]{\lineheight{1.25}\smash{\begin{tabular}[t]{l}{\tiny \done}\end{tabular}}}}\put(0,0){\includegraphics[width=\unitlength,page=5]{figs/hmsc-2BPOR.pdf}}\put(0.26891244,0.35984519){\color[rgb]{0,0,0}\makebox(0,0)[lt]{\lineheight{1.25}\smash{\begin{tabular}[t]{l}{\tiny \splitmsg}\end{tabular}}}}\put(0,0){\includegraphics[width=\unitlength,page=6]{figs/hmsc-2BPOR.pdf}}\put(0.31111812,0.12267387){\color[rgb]{0,0,0}\makebox(0,0)[lt]{\lineheight{1.25}\smash{\begin{tabular}[t]{l}{\tiny \yes}\end{tabular}}}}\put(0.16313489,0.07840865){\color[rgb]{0,0,0}\makebox(0,0)[lt]{\lineheight{1.25}\smash{\begin{tabular}[t]{l}{\tiny \buy}\end{tabular}}}}\put(0,0){\includegraphics[width=\unitlength,page=7]{figs/hmsc-2BPOR.pdf}}\put(0.13585148,0.60675369){\color[rgb]{0,0,0}\makebox(0,0)[lt]{\lineheight{1.00}\smash{\begin{tabular}[t]{l}{\tiny \query}\end{tabular}}}}\put(0.10817269,0.55486688){\color[rgb]{0,0,0}\makebox(0,0)[lt]{\lineheight{1.25}\smash{\begin{tabular}[t]{l}{\tiny \price}\end{tabular}}}}\put(0.48325701,0.89281857){\color[rgb]{0,0,0}\makebox(0,0)[lt]{\lineheight{1.25}\smash{\begin{tabular}[t]{l}{\scriptsize $\buyerA$}\end{tabular}}}}\put(0.63079467,0.89245544){\color[rgb]{0,0,0}\makebox(0,0)[lt]{\lineheight{1.25}\smash{\begin{tabular}[t]{l}{\scriptsize $\buyerB$}\end{tabular}}}}\put(0,0){\includegraphics[width=\unitlength,page=8]{figs/hmsc-2BPOR.pdf}}\end{picture}\endgroup  \caption{HMSC $H_{\TBPOR}$ \\ \phantom{sth}}
\label{fig:hmsc-2BPOR}
\end{subfigure}

\vspace{-1.5ex}
\caption{Two Buyer Protocol: the finite state machine for the semantics of $\GG_{\TBPOR}$ on the left, the first step of projection in the middle, and as HMSC on the right;
a transition label $\msgFromTo{\buyerA}{\seller}{\queryshort}$ jointly specifies a send event $\snd{\buyerA}{\seller}{\queryshort}$ for buyer~$\buyerA$ and a receive event $\rcv{\buyerA}{\seller}{\queryshort}$ for seller~$\seller$; styles of states indicate their kind, e.g., recursion states (dashed lines) while final states have double~lines}
\vspace{-3ex}
\end{figure}

\emph{Classical} MST frameworks employ a partial \emph{projection operator} with an in-built \emph{merge operator} to solve the implementability problem.
For each role, the projection operator takes the global type and removes all interactions the role is not involved in.
\Cref{fig:fsm-semantics-2BPOR} illustrates the semantics of $\GG_{\TBPOR}$ while \cref{fig:projection-seller-2BPOR} gives the projection onto seller $\seller$ before the merge operator is applied ---
in both, messages are abbreviated with their first letter.
It is easy to see that this procedure introduces non-determinism, e.g., in $q_3$ and $q_4$, which shall be resolved by the merge operator.
Most merge operators can resolve the non-determinism in \cref{fig:projection-seller-2BPOR}.
A~merge operator checks whether it is safe to merge the states and it might fail so it is a~partial operation.
For instance, every kind of state, indicated by a state's style in \cref{fig:projection-seller-2BPOR}, can only be merged with states of the same kind or states of circular shape. For a role, the result of the projection, if defined, is a local type.
They act as local specifications and their syntax is similar to the one of global~types.

Classical projection operators are a best-effort technique.
This yields good (mostly linear) worst-case complexity but comes at the price of rejecting implementable global types.
Intuitively, classical projection operators consider a limited search space for local types.
They bail out early when encountering difficulties and do not unfold recursion.
In addition, most MST frameworks do effectively not allow a role to send to different receivers or receive from different senders upon branching.
This restriction is called \emph{directed choice} --- in contrast to \emph{sender-driven choice} which is more permissive and allows these patterns.
Among the classical projection operators, the one by Majumdar et al.~\cite{DBLP:conf/concur/MajumdarMSZ21} is the only to handle global types with sender-driven choice but it suffers from the shortcomings of a classical projection approach.
We define different merge operators from the literature and visually explain their supported features by example.
We show that the presented projection/merge operators fail to project implementable variations of the two buyer protocol,
showcasing the sources of incompleteness for the classical approach.
For non-classical approaches, we refer to \cref{sec:related}.

As a best-effort technique, it is natural to focus on efficiency rather than completeness.
The work by Castagna et al.~\cite{DBLP:journals/corr/abs-1203-0780} is a notable exception.
Their notion of completeness~\cite[Def.~4.1]{DBLP:journals/corr/abs-1203-0780} is not as strict as the one considered in this work and only a restricted version of their characterisation is algorithmically checkable.
In general, it is not known whether the implementability problem for global types, with directed or sender-driven choice, is decidable.
We answer this open question positively for global types with sender-driven choice.
To this end, we relate the implementability problem for global types with the safe realisability problem for high-level message sequence charts and generalise results for the latter.

\smallskip\noindent
{\sffamily\bfseries
Lessons Learned from Message Sequence Charts
}

\noindent
The two buyer protocol $\GG_{\TBPOR}$ can also be specified as high-level message sequence chart (HMSC) \cite{DBLP:conf/sdl/MauwR97}, as illustrated in \cref{fig:hmsc-2BPOR}.
Each block is a basic message sequence chart (BMSC) which intuitively corresponds to straight-line code.
In each of those, time flows from top to bottom and each role is represented by a vertical line.
We only give the names in the initial block, which is marked by an incoming arrow at the top.
An arrow between two role lines specifies sending and receiving a message with its label.
The graph structure adds branching, which corresponds to choice in global types, and control flow.
Top branches from the global type are on the left in the HMSC while bottom branches are on the~right.

While research on MSTs and HMSCs has been pursued quite independently, the MST literature frequently uses HMSC-like visualisations for global types, e.g., \cite[Fig.~1]{Carbone2005ATB} and \cite[Figs.~1 and 2]{DBLP:journals/jacm/HondaYC16}.
The first formal connection was recently established by Stutz and Zufferey~\cite{DBLP:journals/corr/abs-2209-10328}.

The HMSC approach to the implementability problem, studied as safe realisability, differs from the MST approach of checking conditions during the projection.
For an HMSC, it is known that there is a candidate implementation~\cite{DBLP:journals/tse/AlurEY03}, which implements the HMSC if it is implementable.
Intuitively, one takes the HMSC and removes all interactions a role is not involved in and determinises the result.
We generalise this result to infinite executions.\footnote{For this, we impose a mild assumption: all protocols can (but do not need to) terminate.}

Hence, algorithms and conditions center around checking implementability of HMSCs.
In general, this problem is undecidable~\cite{DBLP:journals/tcs/Lohrey03}.
For \emph{globally-cooperative} HMSCs~\cite{DBLP:journals/jcss/GenestMSZ06},
Lohrey~\cite{DBLP:journals/tcs/Lohrey03} proved it to be \EXPSPACE-complete.
We show that any implementable global type belongs to this class of HMSCs.\footnote{For this, we also impose the mild assumption that all protocols can (but do not need to) terminate.}These results give rise to the following algorithm to check implementability of a global type.
One can check whether a global type is globally-cooperative (which is equivalent to checking its HMSC encoding).
If it is not globally-cooperative, it cannot be implementable.
If it is globally-cooperative, we apply the algorithm by Lohrey~\cite{DBLP:journals/tcs/Lohrey03} to check whether its HMSC encoding is implementable.
If it is, we use its candidate implementation and know that it generalises to infinite~executions.

While this algorithm shows decidability, the complexity might not be tractable.
Based on our results, we show how more tractable but still permissive approaches to check implementability of HMSCs can be adapted to the MST setting. In addition, we consider \emph{payload implementability}, which allows to add payload to messages of existing interactions and checks agreement when the additional payload is ignored.
We present a sufficient condition for global types that implies payload implementability.
These techniques can be used if the previous algorithms are not tractable or reject a global type.

Furthermore, we introduce a generalisation of the implementability problem.
A network may reorder messages from different senders for the same receiver but the implementability problem still requires the receiver to receive them in the specified order.
Our generalisation allows to consider such reorderings of arrival and can yield performance gains.
In addition, it also renders global types implementable that are not implementable in the standard setting.
Unfortunately, we prove this generalisation to be undecidable in general.

\smallskip\noindent
{\sffamily\bfseries
Contributions and Outline
}

\noindent
We introduce our MST framework in \cref{sec:mst} while \cref{sec:related} covers related work.
In the other sections, we introduce the necessary concepts to establish our main \emph{contributions}:\vspace{-1ex}
 \begin{itemize}
  \item We give a visual explanation of the classical projection operator with different merge operators and exemplify its shortcomings (\cref{sec:from-global-to-local}).
\item We prove decidability of the implementability problem for global types with sender-driven choice (\cref{sec:standard-implementability-decidable}) --- provided that protocols can (but do not need to) terminate.
  \item We comprehensively investigate how MSC techniques can be applied to the MST setting, including algorithmics with better complexity for subclasses as well as an interesting variant of the implementability problem
(\cref{sec:msc-techniques-for-mst}).
\item Lastly, we introduce a new variant of the implementability problem with a more relaxed role message ordering, which is closer to the network ordering, and prove it to be undecidable in general (\cref{intraswap-implementability-undecidable}).
 \end{itemize}

 \section{Multiparty Session Types}
\label{sec:mst}

In this section, we formally introduce our Multiparty Session Type (MST) framework.
We define the syntax of global and local types and their semantics. Subsequently, we recall the implementability problem for global types which asks if there is a deadlock-free communicating state machine that admits the same language (without additional synchronisation).

\smallskip\noindent\textbf{\sffamily Finite and Infinite Words.}
Let $\Sigma$ be an alphabet.
We denote the set of finite words over~$\Sigma$ by $\Sigma^*$ and the set of infinite words by $\Sigma^\omega\negmedspace$.
Their union is denoted by $\Sigma^\infty\negmedspace$.
For two strings $u \in \Sigma^*$ and $v \in \Sigma^\infty\negmedspace$,
we say that $u$ is a \emph{prefix} of $v$ if there is some $w\in\Sigma^\infty$ such that $u \cdot w = v$ and denote this with $u \leq v$ while $\pref(v)$ denotes all prefixes of $v$ and is lifted to languages as expected.
For a language $L \subseteq \Alphabet^{\infty}$, we distinguish between the language of finite words $L_{\fin} \is L \inters \Alphabet^*$ and the language of infinite words $L_{\inf} \is L \inters \Alphabet^\omega$.

\smallskip\noindent\textbf{\sffamily Message Alphabet.}
We fix a finite set of messages $\MsgVals$ and a finite set of roles $\Procs\negthinspace$, ranged over with $\procA$, $\procB$, $\procC$, and $\procD$.
With $Σ_{\mathit{sync}} = \set{ \msgFromTo{\procA}{\procB}{\val} \mid \procA,\procB ∈ \Procs \text{ and } \val ∈ \MsgVals}$, we denote the set of interactions where sending and receiving a message is specified at the same time.
For our asynchronous setting, we also define individual send and receive events:
    $Σ_{\procA} = \set{ \snd{\procA}{\procB}{\val}, \rcv{\procB}{\procA}{\val} \mid \procB \in \Procs\negthinspace,\; \val \in \MsgVals }$
for a role $\procA$.
For both send events $\snd{\procA}{\procB}{\val}$ and receive events $\rcv{\procB}{\procA}{\val}$, the first role is \emph{active}, i.e., the sender in the first event and the receiver in the second one.
The union for all roles yields all (asynchronous) events:
$\Alphabet = \Union_{\procA \in \Procs} \Alphabet_{\procA}$.
For the rest of this work, we fix the set of roles $\Procs\negthinspace$, the messages $\MsgVals\negthinspace$, and both sets $\AlphSync$ and~$\Alphabet$.
We may also use the term $\AlphAsync$ for $\Alphabet$.
We define an operator that splits events from $\AlphSync$,
$
\SyncToAsync(\msgFromTo{\procA}{\procB}{\val})
    \is
\snd{\procA}{\procB}{\val}. \,
\rcv{\procA}{\procB}{\val}
$,
which is lifted to sequences and languages as expected.
Given a word, we might also project it to all letters of a certain shape.
For instance, $w\wproj_{\snd{\procA}{\procB}{\_}}$ is the subsequence of $w$ with all of its send events where $\procA$ sends any message to $\procB$.
If we want to select all messages of $w$, we write $\MsgVals(w)$.

\paragraph*{Global and Local Types -- Syntax}

We give the syntax of global and local types following work by
Majumdar et al.~\cite{DBLP:conf/concur/MajumdarMSZ21}.
In this work, we consider global types as specifications for message-passing concurrency and omit features like delegation.

\begin{definition}[Syntax of global types]
\emph{Global types for MSTs} are defined by the grammar:
\vspace{-1ex}
    \begin{grammar}
     G \is
       0
     | \sum_{i ∈ I} \msgFromTo{\procA}{\procB_i}{\val_i.G_i}
     | μ t. G
     | t
    \end{grammar}
The term $0$ explicitly represents termination.
A term $\msgFromTo{\procA}{\procB_i}{\val_i}$ indicates an interaction where $\procA$ sends message $\val_i$ to $\procB_i$.
In our asynchronous semantics, it is split into a send event $\snd{\procA}{\procB_i}{\val_i}$ and a receive event $\rcv{\procA}{\procB_i}{\val_i}$.
In a choice
$
\sum_{i ∈ I} \msgFromTo{\procA}{\procB_i}{\val_i.G_i}
$,
the sender $\procA$ chooses the branch.
We require choices to be unique, i.e.,
$∀ i,j ∈ I.\, i≠j ⇒ \procB_i \neq \procB_j \lor \val_i ≠ \val_j$.
If~$\card{I} = 1$, which means there is no actual choice, we omit the sum operator.
The operators $\mu t$ and $t$ allow to encode loops.
We require them to be guarded, i.e., there must be at least one interaction between the binding $\mu t$ and the use of the recursion variable~$t$.
Without loss of generality, all occurrences of recursion variables $t$ are bound and~distinct.
\end{definition}

\noindent Our global types admit \emph{sender-driven choice} as
$\procA$ can send to different receivers upon branching: $\sum_{i ∈ I} \msgFromTo{\procA}{\procB_i}{\val_i.G_i}$.
This is also called generalised choice by Majumdar et al.~\cite{DBLP:conf/concur/MajumdarMSZ21}.
In contrast, \emph{directed choice} requires a sender to send to a single receiver, i.e.,
$∀ i,j ∈ I. \, \procB_i = \procB_j$.

\begin{example}[Global types]
The two buyer protocol $\GG_{\TBPOR}$ from the introduction is a global~type.
Instead of $\sum$, we use $+$ with curly brackets.
\end{example}

\begin{definition}[Syntax of local types]
\label{def:local-type}
For a role $\procA$, the \emph{local types} are defined as follows:
\vspace{-1ex}
    \begin{grammar}
     L \is 0
         | \IntCh_{i ∈ I} \snd{}{\procB_i}{\val_i}.L_i
         | \ExtCh_{i ∈ I} \rcv{\procB_{i}}{}{\val_i}.L_i
         | μ t. L
         | t
    \end{grammar}
We call
$\IntCh_{i ∈ I} \snd{}{\procB_i}{\val_i}$
an internal choice while
$\ExtCh_{i ∈ I} \rcv{\procB_{i}}{}{\val_i}$
is an external choice.
For both, we require the choice to be unique, i.e.,
$∀ i,j ∈ I.\, i≠j ⇒ \procB_i \neq \procB_j \lor \val_i ≠ \val_j$.
Similarly to global types,
we may omit $\IntCh$ or $\ExtCh$ if there is no actual choice and
we require recursion to be guarded as well as recursion variables to be bound and distinct.
\end{definition}

\begin{example}[Local type]
For the global type $\GG_{\TBPOR}$, a local type for seller $\seller$ is

\vspace{-3ex}
{ \scriptsize
\[
\mu t. \,
\ExtCh
\begin{cases}
\rcv{\buyerA}{}{query}. \,
\snd{}{\buyerA}{price}. \,
(
    \rcv{\buyerA}{}{buy}. \, t
    \; \ExtCh \; \rcv{\buyerA}{}{no}. \, t
    )
\\
\rcv{\buyerA}{}{done}. \, 0
\end{cases}
.
\]
}
\end{example}

\paragraph*{Implementing in a Distributed Setting}

Global types can be thought of as global protocol specifications.
Thus, a natural question and a main concern in MST theory is whether a global type can be implemented in a distributed setting.
We present communicating state machines, which are built from finite state machines, as the standard implementation model.

\begin{definition}[State machines] A \emph{state machine} $A = (Q, \Delta, \delta, q_{0}, F)$ is a $5$-tuple with
a finite set of states $Q$,
an alphabet $\Delta$,
a transition relation $\delta \subseteq Q \times (\Delta \union \set{\emptystring}) \times Q$,
an initial state $q_{0}\in Q$ from the set of states, and
a set of final states $F$ with $F \subseteq Q$.
If $(q, a, q')\in\delta$, we also write $q \xrightarrow{a} q'\negthinspace$.
A sequence $q_0\xrightarrow{w_0} q_1 \xrightarrow{w_1} \ldots$,
with $q_i~\in~Q$ and $w_i\in \Delta \union \set{\emptystring}$ for $i\geq 0$,
such that $q_0$ is the initial state, and for each $i\geq 0$, it holds that $(q_i, w_i, q_{i+1})\in\delta$, is called a \emph{run} in $A$ with its \emph{trace} $w_0w_1\ldots\in \Delta\negmedspace^\infty\negmedspace$.
A run is \emph{maximal} if it ends in a final state or is infinite.
The \emph{language} $\lang(A)$ of $A$ is the set of traces of all maximal runs.
If $Q$ is finite, we say $A$ is a \emph{finite state machine} (FSM).
\end{definition}

\begin{definition}[Communicating state machines] We call $\CSM{A}$ a \emph{communicating state machine} (CSM)  over $\Procs$ and~$\MsgVals$ if
${A}_\procA$
is a finite state machine
with alphabet~$\Sigma_\procA$ for every $\procA\in\Procs\negthinspace$.
The state machine for $\procA$ is denoted by $(Q_\procA, \Sigma_\procA, \delta_\procA, q_{0, \procA}, F_\procA)$.
Intuitively, a CSM allows a set of state machines, one for each role in $\Procs\negthinspace$,
to communicate by sending and receiving messages.
For this, each pair of roles $\procA, \procB\in \Procs\negthinspace$, $\procA \neq\procB$, is connected by two directed \emph{message channels}.
A transition $q_{\procA} \xrightarrow{\snd{\procA}{\procB}{\val}} q'_{\procA}$ in the state machine of $\procA$ denotes that $\procA$ sends message $\val$ to $\procB$ if $\procA$ is in the state~$q_\procA$ and changes its local state to~$q'_\procA$.
The channel $\channel{\procA}{\procB}$ is appended by message~$\val$.
For receptions, a transition $q_{\procB} \xrightarrow{\rcv{\procA}{\procB}{\val}} q'_{\procB}$ in the state machine of $\procB$
corresponds to $\procB$ retrieving the message $\val$ from the head of the channel when its local state is $q_\procB$ which is updated
to $q'_\procB$.
The run of a CSM always starts with empty channels and each finite state machine is in its respective initial state.
A deadlock of $\CSM{A}$ is a reachable configuration without outgoing transitions such that there is a non-empty channel or some participant is in a non-final local state.
The formalisation of this intuition is standard and can be found in\iftoggle{arxiv}
{
\cref{app:semantics-csm}.
}
{
the technical report \cite{arxiv-version}.
}

\end{definition}

\noindent A global type always specifies send and receive events together.
In a CSM execution, there may be independent events that can occur between a send and its respective receive~event.

\begin{example}[Motivation for indistinguishability relation $\interswap$]
Let us consider the following global type which is a part of the two buyer protocol:
$
    \msgFromTo{\buyerA}{\buyerB}{\cancel}. \,
    \msgFromTo{\buyerA}{\seller}{\no}. \, 0
$.
This is one of its traces:
$
    \snd{\buyerA}{\buyerB}{\cancel}. \,
    \rcv{\buyerA}{\buyerB}{\cancel}. \,
    \snd{\buyerA}{\seller}{\no}. \,
    \rcv{\buyerA}{\seller}{\no}
$.
Because the active roles in
$
    \rcv{\buyerA}{\buyerB}{\cancel}
$
and
$
    \snd{\buyerA}{\seller}{\no}
$
are different and we do not reorder a receive event in front of its respective send event, any CSM that accepts the previous trace also accepts the following trace:
$
    \snd{\buyerA}{\buyerB}{\cancel}. \,
    \snd{\buyerA}{\seller}{\no}. \,
    \rcv{\buyerA}{\buyerB}{\cancel}. \,
    \rcv{\buyerA}{\seller}{\no}
$.
\end{example}

Majumdar et al.~\cite{DBLP:conf/concur/MajumdarMSZ21} introduced the following relation to capture this phenomenon.

\begin{definition}[Indistinguishability relation $\interswap$ \cite{DBLP:conf/concur/MajumdarMSZ21}]
We define a family of \emph{indistinguishability relations}
${\interswap_i} \subseteq \Sigma^* \times \Sigma^*$, for $i\geq 0$.
For $w\in\Sigma^*$, we have $w \interswap_0 w$.
For $i=1$, we~define:
\vspace{-2ex}
\begin{enumerate}[label=\textnormal{\arabic*.}]
\item
If $\procA ≠ \procC$, then
$
 w.\snd{\procA}{\procB}{\val}.\snd{\procC}{\procD}{\val'}.u
 \; \interswap_{1} \;
 w.\snd{\procC}{\procD}{\val'}.\snd{\procA}{\procB}{\val}.u
$.

\item
If $\procB ≠ \procD$, then
$
 w.\rcv{\procA}{\procB}{\val}.\rcv{\procC}{\procD}{\val'}.u
 \; \interswap_{1} \;
 w.\rcv{\procC}{\procD}{\val'}.\rcv{\procA}{\procB}{\val}.u
$.

\item
If $\procA ≠ \procD \land (\procA ≠ \procC ∨ \procB ≠ \procD)
$, then
$
 w.\snd{\procA}{\procB}{\val}.\rcv{\procC}{\procD}{\val'}.u
 \; \interswap_{1} \;
 w.\rcv{\procC}{\procD}{\val'}.\snd{\procA}{\procB}{\val}.u
$.
\item
If $\card{w \wproj_{\snd{\procA}{\procB}{\_}}} >
    \card{w \wproj_{\rcv{\procA}{\procB}{\_}}}$,
then
$
 w.\snd{\procA}{\procB}{\val}.\rcv{\procA}{\procB}{\val'}.u
 \; \interswap_{1} \;
 w.\rcv{\procA}{\procB}{\val'}.\snd{\procA}{\procB}{\val}.u
$.
\end{enumerate}
\vspace{-1ex}
Let $w$, $w'$, and $w''$ be words s.t.~$w \interswap_1 w'$ and $w' \interswap_i w''$ for some~$i$.
Then, $w \interswap_{i+1} w''$.
We define $w \interswap u$ if $w \interswap_n u$ for some $n$.
It is straightforward that $\interswap$ is an equivalence relation.
Define $u \preceq_\interswap v$ if there is $w\in\Sigma^*$ such that $u.w \interswap v$.
Observe that $u \interswap v$ iff
$u \preceq_\interswap v$ and $v \preceq_\interswap u$.
For infinite words $u, v\in\Sigma^\omega$, we define $u \preceq_\interswap^\omega v$
if for each finite prefix $u'\negthinspace$ of $u$, there is a finite prefix~$v'$ of~$v$ such that
$u' \preceq_\interswap v'$.
Define $u \interswap v$ iff $u \preceq_\interswap^\omega v$ and $v\preceq_\interswap^\omega u$.

We lift the equivalence relation $\interswap$ on words to languages:

For a language $L$, we define
{
\small
$
  \interswaplang(L) = \left\{ w' \mid \bigvee
    \begin{array}{l}
    w' \in \Alphabet^* \land ∃ w ∈ \Alphabet^*. \; w \in L \text{ and } w' \interswap w \\
    w' ∈ \Alphabet^ω \land \exists w \in \Alphabet^\omega. \; w \in
    L \text{ and } w' \preceq_\interswap^\omega w
  \end{array} \right\}.
$
}
\end{definition}

This relation characterises what can be achieved in a distributed setting using~CSMs.

\begin{lemma}[L.~21 \cite{DBLP:conf/concur/MajumdarMSZ21}]
\label{lm:csm-closed-interswap}
Let $\CSM{A}$ be a CSM.
Then, $\lang(\CSM{A}) = \interswaplang(\lang(\CSM{A}))$.
\end{lemma}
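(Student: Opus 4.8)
The plan is to prove the two inclusions of $\lang(\CSM{A}) = \interswaplang(\lang(\CSM{A}))$ separately. The inclusion $\lang(\CSM{A}) \subseteq \interswaplang(\lang(\CSM{A}))$ is immediate: a finite $w$ satisfies $w \interswap w$ and an infinite $w$ satisfies $w \preceq_\interswap^\omega w$ by reflexivity, so $w$ witnesses its own membership. All the content is in the reverse inclusion, i.e.\ that $\lang(\CSM{A})$ is closed under $\interswap$. I would isolate the following \emph{commutation (diamond) lemma} as the workhorse, working with global configurations written as a pair of a local-state vector and a channel-content vector. If $a,b$ are two events matching one of the four patterns of $\interswap_1$ (so that $x.a.b.y \interswap_1 x.b.a.y$) and $C$ is the configuration reached after reading $x$, then the two-event fragment $C \xrightarrow{a} \cdot \xrightarrow{b} C''$ is enabled and reaches $C''$ \emph{iff} $C \xrightarrow{b} \cdot \xrightarrow{a} C''$ is enabled and reaches the \emph{same} $C''$. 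Everything else reduces to this by induction and a compactness argument.

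To prove the commutation lemma I would show that in each case the two events act on disjoint resources, so their effects commute. A send $\snd{\procA}{\procB}{\val}$ advances only the local machine of $\procA$ and appends to $\channel{\procA}{\procB}$; a receive $\rcv{\procC}{\procD}{\val'}$ advances only the local machine of $\procD$ and consumes from the head of $\channel{\procC}{\procD}$. In case~1 ($\procA \neq \procC$) the two sends touch different local machines and, since the senders differ, different channels. In case~2 ($\procB \neq \procD$) the two receptions touch different receivers and different channels. In case~3 the side condition $\procA \neq \procD \land (\procA \neq \procC \lor \procB \neq \procD)$ guarantees distinct active roles and that the written channel $\channel{\procA}{\procB}$ differs from the read channel $\channel{\procC}{\procD}$, so appending and head-removal cannot interfere. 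Case~4 is the only one touching a common channel $\channel{\procA}{\procB}$: here the side condition $\card{x \wproj_{\snd{\procA}{\procB}{\_}}} > \card{x \wproj_{\rcv{\procA}{\procB}{\_}}}$ forces that channel to be non-empty already at $C$, so the reception reads the old head irrespective of the freshly appended message, and enqueueing at the tail commutes with dequeueing from the head of a non-empty FIFO queue. In every case the reached configuration $C''$ is identical and the swapped order is enabled, which is exactly the claimed biconditional. The interspersed silent $\emptystring$-moves are purely local, and since the active roles of $a$ and $b$ are distinct in all four cases, any such move can be shifted out of the two-event block, so silent steps do not affect the argument.

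For finite words, $w' \interswap w$ means $w' \interswap_n w$ for some $n$, i.e.\ $w'$ arises from $w$ by $n$ admissible adjacent swaps. Letting $\reach(w)$ denote the set of configurations reachable by a run with trace $w$, the commutation lemma yields the set equality $\reach(x.a.b.y) = \reach(x.b.a.y)$ for each single swap, and iterating along the $n$ swaps connecting $w'$ to $w$ gives $\reach(w') = \reach(w)$. Since a finite word is accepted exactly when $\reach$ contains a final configuration (all local states final and all channels empty), $w \in \lang(\CSM{A})$ implies $w' \in \lang(\CSM{A})$.

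For infinite words we have $w' \preceq_\interswap^\omega w$ with $w \in \lang(\CSM{A})$, and I would assemble an accepting run for $w'$ by compactness. For each finite prefix $u'$ of $w'$ there is a finite prefix $v'$ of $w$ and a word $s$ with $u'.s \interswap v'$; as $v'$ labels an initial segment of a fixed maximal run of $w$, the finite case shows $u'.s$ reaches the same reachable configuration, so $u'$ itself is the trace of a partial run. Now consider the tree whose level-$n$ vertices are the configurations reachable by a run whose trace is the length-$n$ prefix of $w'$, with an edge from a level-$n$ configuration $C$ to a level-$(n{+}1)$ configuration $C'$ whenever $C'$ is reachable from $C$ by consuming the next event of $w'$ (together with silent moves). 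Every level is non-empty by the preceding observation and finite (the local states range over finite sets and each channel content is bounded by the prefix length), so the tree is infinite and finitely branching; König's lemma provides an infinite path, i.e.\ an infinite run with trace $w'$, which is maximal, whence $w' \in \lang(\CSM{A})$. The main obstacle is the commutation lemma, and within it case~4, where the two events share a FIFO channel and correctness hinges on using the counting side condition to ensure the queue is non-empty so that enqueue and dequeue genuinely commute; a secondary care-point is treating the silent $\emptystring$-moves and the finiteness of each level cleanly in the infinite-word compactness step.
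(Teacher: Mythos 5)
The paper never proves this statement: it is imported verbatim, with a citation, as Lemma~21 of Majumdar et al.~\cite{DBLP:conf/concur/MajumdarMSZ21}, so there is no in-paper proof to compare against; judged on its own merits (and against the swap-by-swap argument in the cited work), your proof is correct. The decomposition is the right one: the trivial inclusion by reflexivity, a per-swap commutation (diamond) lemma on configurations covering the four cases of $\interswap_1$, iteration along the finite swap chain to get $\reach(w') = \reach(w)$ and hence preservation of acceptance, and a compactness argument for the infinite-word half of $\interswaplang(\hole)$ --- which, as you correctly exploit, only requires handling the one-sided relation $\preceq_\interswap^\omega$. You also isolate exactly where the side conditions matter: in case~3, $\procA \neq \procD \land (\procA \neq \procC \lor \procB \neq \procD)$ is precisely disjointness of the affected local machine and channel, and in case~4 the counting condition forces $\channel{\procA}{\procB}$ to be non-empty at $C$, so the dequeue returns the old head in either order and enqueue/dequeue commute. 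Two spots deserve tightening, though neither is a genuine gap. First, the interleaved $\emptystring$-moves cannot all be ``shifted out of the two-event block'' in one direction: silent moves of $b$'s active role may be needed to \emph{enable} $b$ and must be moved in front of $b$ (they commute with $a$ precisely because the active roles of $a$ and $b$ differ in all four cases), while silent moves of $a$'s active role remain after $a$. Second, the levelled structure you feed to K\"onig's lemma is a DAG rather than a tree, since a level-$(n+1)$ configuration can have several level-$n$ predecessors; either unfold it into the finitely branching tree of level-respecting paths from level~$0$, or invoke the standard levelwise-finite variant of K\"onig's lemma. Working with configurations rather than partial runs as nodes is, incidentally, the right choice: component FSMs may contain $\emptystring$-cycles, so a tree of runs need not be finitely branching.
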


\paragraph*{Global and Local Types -- Semantics}

Hence, we define the semantics of global types using the indistinguishability relation~$\interswap$.

\begin{definition}[Semantics of global types]
\label{def:language-global-mst}
We construct a state machine $\semglobal(\GG)$ to obtain the semantics of a global type $\GG$.
We index every syntactic subterm of $\GG$ with a unique index to distinguish common syntactic subterms, denoted with $[G, k]$ for syntactic subterm $G$ and index $k$.
Without loss of generality, the index for $\GG$ is $1$\emph{:} $[\GG, 1]$.
For clarity, we do not quantify indices.
We define
$\semglobalsync(\GG) = (Q_{\semglobalsync(\GG)}, \AlphSync, δ_{\semglobalsync(\GG)}, q_{0, \semglobalsync(\GG)}, F_{\semglobalsync(\GG)})$ where\vspace{-1ex}
\begin{itemize}
\item $Q_{\semglobalsync(\GG)}$ is the set of all indexed syntactic subterms $[G, k]$ of $\GG$
\item $δ_{\semglobalsync(\GG)}$ is the smallest set containing
            $(
            [\sum_{i ∈ I} \msgFromTo{\procA}{\procB_i}{\val_i.[G_i, k_i]}, k],
            \msgFromTo{\procA}{\procB_i}{\val_i},
            [G_i, k_i]
            )$ for~each $i ∈ I$,
            and
            $([μ t. [G',k'_2], k'_1], ε, [G', k'_2])$ and $([t, k'_3], ε, [μ t. [G', k'_2], k'_1])$, \item $q_{0, \semglobalsync(\GG)} = [\GG, 1]$, and
$F_{\semglobalsync(\GG)} = \set{[0, k] \mid k \text{ is an index for subterm } 0}$.
\end{itemize}
We consider asynchronous communication so each interaction is split into its send and receive event.
In addition, we consider CSMs as implementation model for global types and, from \cref{lm:csm-closed-interswap}, we know that CSM languages are always closed under the indistinguishability relation $\interswap$.
Thus, we also apply its closure to obtain the semantics of $\GG$\emph{:}
$
 \lang(\GG)
    \is
 \interswaplang(\SyncToAsync(\lang(\semglobalsync(\GG))))
$.
\end{definition}

The closure $\interswaplang(\hole)$ corresponds to similar reordering rules in standard MST developments, e.g.,
\cite[Def.~3.2 and 5.3]{DBLP:journals/jacm/HondaYC16}.

\begin{example}
\cref{fig:fsm-semantics-2BPOR} (p.\pageref{fig:fsm-semantics-2BPOR}) illustrates the FSM $\semglobalsync(\GG_{\TBPOR})$.
In the following global type, $\procA$ sends a list of book titles to~$\procB$:
{ $
\mu t. \,
(
    \msgFromTo{\procA}{\procB}{title}. \, t \,
    +
    \msgFromTo{\procA}{\procB}{done}. \, 0
).
$ }
Its semantics is the union of two cases: if the list of book titles is infinite, i.e.,
$
\interswaplang(
 (\snd{\procA}{\procB}{title}. \,
  \rcv{\procA}{\procB}{title} )^\omega
)
$;
and the one if the list is finite, i.e.,
$
\interswaplang(
 (\snd{\procA}{\procB}{title}. \,
  \rcv{\procA}{\procB}{title} )^* . \,
  \snd{\procA}{\procB}{done}. \,
  \rcv{\procA}{\procB}{done}
)
$.
Here, there are only two roles so $\interswaplang(\hole)$ can solely delay receive events (Rule 4 of $\interswap$).
\end{example}

We distinguish states depending on which subterm they correspond to:
\emph{binder states} with their dashed line correspond to a recursion variable binder,
while \emph{recursion states} with their dash-dotted lines indicate the use of a recursion variable.
We omit $\emptystring$ for transitions from recursion to binder states.

\subparagraph*{Local Types.} For the semantics of local types, we analogously construct a state machine $\semlocal(\hole)$.
In constrast, we omit the closure $\interswaplang(\hole)$ because languages of roles are closed under~$\interswap$ (cf.~\cite[Lm.\ 22]{DBLP:conf/concur/MajumdarMSZ21}).
For the full definition, we refer to\iftoggle{arxiv}
{
\cref{app:semantics-local-types}.
}
{
the technical report \cite{arxiv-version}.
}
Compared to global types, we distinguish two more kinds of states for local types:
a~\emph{send state} (internal choice) has a diamond shape while a \emph{receive state} (external choice) has a rectangular shape.
For states with $\emptystring$ as next action, we keep the circular shape and call them \emph{neutral states}.
Because of the $\emptystring$-transitions, \cref{fig:projection-seller-2BPOR}~(p.\pageref{fig:projection-seller-2BPOR}) does not represent the state machine for any local type but illustrates the use of different styles for different kinds of~states.

\paragraph*{The Implementability Problem for Global Types}

The implementability problem for global types asks whether a global type can be implemented in a distributed setting.
The projection operator takes the intermediate representation of local types as local specifications for roles.
We define implementability directly on the implementation model of CSMs.
Intuitively, every collection of local types constitutes a CSM through their semantics.

\begin{definition}[Implementability \cite{DBLP:conf/concur/MajumdarMSZ21}]
A global type $\GG$ is said to be \emph{implementable} if there exists a \emph{deadlock-free} CSM $\CSM{A}$ such that
 their languages are the same \emph{(protocol fidelity)}, i.e.,
$\lang(\GG) = \lang(\CSM{A})$.
We say that $\CSM{A}$ implements $\GG$.
\end{definition}
 \section{Projection -- From Global to Local Types}
\label{sec:from-global-to-local}

In this section, we define and visually explain a typical approach to the implementability problem:
the \emph{classical projection operator}.
It tries to translate global types to local types and, while doing so, checks if this is safe.
Behind the scenes, these checks are conducted by a partial merge operator.
We consider different variants of the merge operator from the literature and exemplify the features they support.
We provide visual explanations of the classical projection operator with these merge operators on the state machines of global types by example.
\iftoggle{arxiv}
{
In \cref{sec:additional-explanation},
}
{
In the technical report~\cite{arxiv-version},
}
we give general descriptions but they are not essential for our observations.
Lastly, we summarise the shortcomings of the full merge operator and exemplify them with variants of the two buyer protocol from the introduction.

\paragraph*{Classical Projection Operator with Parametric Merge}

\begin{definition}[Projection operator]
For a merge operator $\merge$, the \emph{projection} of a global type $\GG$ onto a role $\procC \in \Procs$ is a local type that is defined as follows\emph{:}\footnote{The case split for the recursion binder changes slightly across different definitions. We chose a simple but also the least restrictive condition. We simply check whether the recursion is vacuous, i.e.\ $\mu t. t$, and omit it in this case. We also require to omit $\mu t$ if $t$ is never used in the result.}
\vspace{.5ex}
\begin{footnotesize}
$
 0 \tproj^\merge_\procC
 \is
 0
 \hfill
 t \tproj^\merge_\procC
 \is
 t
$
\end{footnotesize}
\\
\begin{footnotesize}
$
 \left( \sum_{i ∈ I} \msgFromTo{\procA}{\procB_i}{\val_i.G_i} \right)
 \tproj^\merge_\procC \is
 \begin{cases}
        \IntCh_{i ∈ I} \snd{}{\procB_i}{\val_i}.(G_i \tproj^\merge_\procC)
            \hfill \text{ if } \procC = \procA
        \\[1mm]
            \ExtCh_{i \in I}
            \rcv{\procA}{}{\val_i}.(G_i \tproj^\merge_\procC)
            \hfill \text{ if } \procC = \procB
        \\
        \Merge_{i \in I}
            G_i \tproj^\merge_\procC
            \hfill \quad \text{otherwise}
 \end{cases}
 \hfill
 \left( μ t. G \right)
 \tproj^\merge_\procC \is
 \begin{cases}
    μ t. (G \tproj^\merge_\procC)
        & \text{if } G \tproj^\merge_\procC \neq t \\
    0
        & \text{otherwise}
 \end{cases}
\hfill
 $
\end{footnotesize}
\end{definition}
Intuitively, a projection operator takes the state machine $\semglobalsync(\GG)$ for a global type~$\GG$ and projects each transition label to the respective alphabet of the role, e.g.,
$\msgFromTo{\procA}{\procB}{\val}$
becomes
$\rcv{\procA}{\procB}{\val}$ for role $\procB$.
This can introduce non-determinism that ought to be resolved by a partial merge operator.
Several merge operators have been proposed in the literature.

\begin{definition}[Merge operators]
Let $L₁$ and $L₂$ be local types for a role $\procC$, and $\merge$ be a~merge operator.
We define different cases for the result of $L_1 \merge L_2$\emph{:}
\vspace{1ex}

\begin{minipage}{0.22\textwidth}
    \begin{enumerate}[series=mergecases,leftmargin=!,labelindent=0pt,label=\textnormal{(\arabic*)}]
    \item \label{merge:plainonly}
            \begin{small}
$L_1 \;$ if $L_1 = L_2$
            \end{small}
    \end{enumerate}
\end{minipage}
\begin{minipage}{0.73\textwidth}
    \begin{enumerate}[resume=mergecases,label=\textnormal{(\arabic*)}]
    \item \label{merge:semifullonly}
        \begin{small}
        \hspace{-1ex}
$\left(
        \begin{array}{lr}
                   \ExtCh_{i \in I \setminus J} \rcv{\procB}{}{m_i}.\mathit{L}_{1,i}' & \ExtCh \\
                   \ExtCh_{i \in I∩J} \rcv{\procB}{}{m_i}.(\mathit{L}_{1,i}' \merge \mathit{L}_{2,i}') & \ExtCh \\
                   \ExtCh_{i \in J \setminus I} \rcv{\procB}{}{m_i}.\mathit{L}_{2,i}'
        \end{array}
        \right)$
                $\;$
            if $\begin{cases}
                    L₁ = \ExtCh_{i ∈ I} \rcv{\procB}{}{\val_i.\mathit{L}_{1,i}'} \text{, } \\
                    L₂ = \ExtCh_{i ∈ J} \rcv{\procB}{}{\val_i.\mathit{L}_{2,i}'}
                \end{cases}$
        \end{small}
    \end{enumerate}
\end{minipage}

\vspace{1ex}
\begin{minipage}{0.72\textwidth}
\begin{enumerate}[resume=mergecases,leftmargin=!,labelindent=0pt,label=\textnormal{(\arabic*)}]
\item \label{merge:fullonly}
    \begin{small}
$μt₁.(\mathit{L}_1' \merge \mathit{L}_2'[t₂/t₁])$
            $\,$
            if $L₁ =  μt₁.\mathit{L}_1' \text{ and } L₂ = μt₂.\mathit{L}_2'$
    \end{small}
\end{enumerate}
\end{minipage}

\vspace{1ex}
\noindent Each merge operator is defined by a collection of cases it can apply.
If none of the respective cases applies, the result of the merge is undefined.
The plain merge $\plainmerge$ \emph{\cite{DBLP:conf/sfm/CoppoDPY15}} can only apply Case \ref{merge:plainonly}.
The~semi-full merge $\semifullmerge$ \emph{\cite{DBLP:conf/icdcit/YoshidaG20}} can apply Cases \ref{merge:plainonly} and \ref{merge:semifullonly}.
The full merge $\fullmerge$ \emph{\cite{DBLP:journals/pacmpl/ScalasY19}} can apply all Cases \ref{merge:plainonly}, \ref{merge:semifullonly}, and \ref{merge:fullonly}.
\end{definition}

We will also consider the availability merge operator $\availmerge$ by Majumdar et al.~\cite{DBLP:conf/concur/MajumdarMSZ21} which builds on the full merge operator but generalises Case \ref{merge:semifullonly} to allow sender-driven choice.
We will explain the main differences in \cref{rem:sender-driven-choice}.

\begin{remark}[Correctness of projection]
This would be the correctness criterion for projection:
Let $\GG$ be some global type and let plain merge $\plainmerge$, semi full merge $\semifullmerge$, full merge $\fullmerge$, or availability merge $\availmerge$ be the merge operator $\merge$.
If $\GG \tproj^\merge_\procA$ is defined for each role $\procA$,
then the CSM $\CSMl{\semlocal(\GG \tproj^\merge_\procA)}$ implements $\GG$. \\
We do not actually prove this so we do not state it as lemma.
\emph{But why does this hold?} \newline
The implementability condition is the combination of deadlock freedom and protocol fidelity.
Coppo et al.~\cite{DBLP:conf/sfm/CoppoDPY15} show that \emph{subject reduction} entails protocol fidelity and progress while progress, in turn, entails deadlock freedom.
Subject reduction has been proven for the plain merge operator \cite[Thm.\ 1]{DBLP:conf/sfm/CoppoDPY15} and the
semi-full operator \cite[Thm.~1]{DBLP:conf/icdcit/YoshidaG20}.
Scalas and Yoshida pointed out that several versions of classical projection with the full merge are flawed \cite[Sec.\ 8.1]{DBLP:journals/pacmpl/ScalasY19}.
Hence, we have chosen a full merge operator whose correctness follows from the correctness of the more general availability merge operator.
For the latter, correctness was proven by Majumdar et al.~\cite[Thm.\ 16]{DBLP:conf/concur/MajumdarMSZ21}.
\end{remark}

\begin{example}[Projection without merge / Collapsing erasure]
In the introduction, we considered $\GG_{\TBPOR}$ and the FSM for its semantics in \cref{fig:fsm-semantics-2BPOR}.
We projected (without merge) onto seller $\seller$ to obtain the FSM in \cref{fig:projection-seller-2BPOR}.
In general, we also collapse neutral states with a single $\emptystring$-transition and their only successor.
We call this \emph{collapsing erasure}.
We only need to actually collapse states for the protocol in
\cref{fig:neg-ex-semi-full-merge}.
In all other illustrations, we indicate the interactions the role is not involved with the following notation:
$\erased{\msgFromTo{\procA}{\procB}{l}}$.

\end{example}

\paragraph*{On the Structure of $\pmb{\semglobalsync(}\GG\pmb{)}$}

We now show that the state machine for every local and global type has a certain shape.
This simplifies the visual explanations of the different merge operators.
Intuitively, every such state machine has a tree-like structure where backward transitions only happen at leaves of the tree, are always labelled with $\emptystring$, and only lead to ancestors.
The FSM in~\cref{fig:fsm-semantics-2BPOR} (p.\pageref{fig:fsm-semantics-2BPOR}) illustrates this shape where the root of the tree is at the top.

\begin{definition}[Ancestor-recursive, non-merging, intermediate recursion, etc.]
Let $A = (Q, \Delta, \delta, q_{0}, F)$ be a finite state machine.
We say that $A$ is \emph{ancestor-recursive} if there is a function $\levelfunc \from Q \to \Nat$ such that, for every transition $q \xrightarrow{x} q' \in \delta$, one of the two holds:
\vspace{-1.5ex}
\begin{enumerate}[labelindent=0pt,labelwidth=\widthof{(b)},label=\textnormal{(\alph*)},itemindent=0em,leftmargin=!]
 \item $\levelfunc(q) > \levelfunc(q')$, or
 \item $x = \emptystring$ and there is a run from the initial state $q_0$ (without going through $q$) to $q'$ which can be completed to reach $q$:
$q_0\xrightarrow{\hole} \ldots \xrightarrow{\hole}q_n$ is a run with $q_n = q'$ and $q \neq q_i$ for every $0 \leq i \leq n$, and
 the run can be extended to
 $q_0\xrightarrow{\hole} \ldots \xrightarrow{\hole}q_n \xrightarrow{\hole} \ldots \xrightarrow{\hole}q_{n+m}$ with $q_{n+m} = q$.
Then, the state $q'$ is called \emph{ancestor} of $q$.
\end{enumerate}
\vspace{-1.5ex}
We call the first \textnormal{(a)} kind of transition \emph{forward transition} while the second \textnormal{(b)} kind is a \emph{backward transition}.
The state machine $A$ is said to be free from \emph{intermediate recursion} if every state $q$ with more than one outgoing transition, i.e.,
$
\card{\set{q' \mid q \xrightarrow{\hole} q' \in \delta}} > 1
$,
has only forward transitions.
We say that $A$ is \emph{non-merging} if every state only has one incoming edge with greater level, i.e., for every state $q'$, $\set{q \mid q \xrightarrow{\hole} q' \in \delta \land \levelfunc(q) > \levelfunc(q')} \leq 1$.
The state machine $A$ is \emph{dense} if,
for every $q \xrightarrow{x} q' \in \delta$,
the transition label $x$ is $\emptystring$ implies that $q$ has only one outgoing transition.
Last, the \emph{cone} of $q$ are all states~$q'$ which are reachable from $q$ and have a smaller level than $q$, i.e., $\levelfunc(q) > \levelfunc(q')$.
\end{definition}

\begin{proposition}[Shape of $\semglobalsync(\GG)$ and $\semlocal(L)$]
\label{prop:shape-of-sem-fsms}
Let $\GG$ be some global type and $L$ be some local type.
Then, both $\semglobalsync(\GG)$ and $\semlocal(L)$ are
ancestor-recursive, free from intermediate recursion, non-merging, and dense.
\end{proposition}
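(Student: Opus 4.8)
The plan is to read off all four properties directly from the construction of $\semglobalsync(\GG)$ in \cref{def:language-global-mst}, using that its states are exactly the indexed syntactic subterms of $\GG$ and that its transitions mirror the syntax tree of $\GG$. Concretely, there are only three kinds of transitions: every choice subterm $\sum_{i \in I} \msgFromTo{\procA}{\procB_i}{\val_i.G_i}$ has one labelled transition to each immediate subterm $[G_i, k_i]$; every binder $[\mu t. G', k'_1]$ has a single $\emptystring$-transition to its body $[G', k'_2]$; and every recursion variable $[t, k'_3]$ has a single $\emptystring$-transition back to its binder $[\mu t. G', k'_1]$. The key idea is to take $\levelfunc([G,k])$ to be the size (number of syntactic nodes) of the subterm $G$. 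Then the transitions arising from choices and from binders both go from a subterm to one of its \emph{proper} subterms, so they strictly decrease $\levelfunc$, whereas the transition $[t, k'_3] \xrightarrow{\emptystring} [\mu t. G', k'_1]$ strictly increases $\levelfunc$ (a leaf $t$ has size $1$) and is the only kind of transition that can do so.

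With this level function I would first establish ancestor-recursiveness. The size-decreasing transitions are forward transitions, satisfying clause~(a). For the remaining transitions $[t, k'_3] \xrightarrow{\emptystring} [\mu t. G', k'_1]$, I use the hypothesis that all recursion variables are bound and distinct: the binder $\mu t. G'$ is then a proper syntactic ancestor of the occurrence $t$. Hence descending the syntax tree from the root $[\GG, 1]$ yields a run to $[\mu t. G', k'_1]$ that never visits $[t, k'_3]$ (which lies strictly inside the subtree below the binder) and that extends, by continuing to descend, to a run reaching $[t, k'_3]$. This is precisely clause~(b), so $[\mu t. G', k'_1]$ is an ancestor of $[t, k'_3]$ and every such $\emptystring$-transition is a legal backward transition. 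Therefore $\semglobalsync(\GG)$ is ancestor-recursive.

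The other three properties then follow by a routine case split on the three transition shapes. For freedom from intermediate recursion, the only states with more than one outgoing transition are choice subterms with $\card{I} > 1$, and all their transitions go to proper subterms, i.e.\ are forward; binder and recursion-variable states each have exactly one outgoing transition. For non-merging, an incoming transition $q \xrightarrow{\hole} q'$ with $\levelfunc(q) > \levelfunc(q')$ is exactly a size-decreasing, hence parent-to-child, transition of the syntax tree, and each subterm has a unique syntactic parent, so at most one such edge enters any state. For density, the only $\emptystring$-transitions originate from binder states and recursion-variable states, each of which has exactly one outgoing transition, so the premise of density is met vacuously by every choice transition and satisfied by the $\emptystring$-transitions.

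Finally, $\semlocal(L)$ is obtained by the analogous construction, with internal-choice ($\IntCh$) and external-choice ($\ExtCh$) subterms playing the role of choice subterms and the same $\mu t$/$t$ back-edges; the identical argument with $\levelfunc$ taken as subterm size applies verbatim. I expect the only genuinely non-mechanical step to be matching the operational notion of ``ancestor'' in clause~(b) to the syntactic binder--occurrence relation, which is where boundedness and distinctness of recursion variables are essential; the verification of the remaining three properties is pure bookkeeping over the three transition shapes.
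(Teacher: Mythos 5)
Your proof is correct, and it matches the paper's treatment: the paper states this proposition without any explicit proof, regarding it as immediate from the construction, and your argument---taking $\levelfunc$ to be subterm size, so that choice and binder transitions strictly descend the syntax tree while the only level-increasing edges are the $\emptystring$-back-edges from variable states to their binders, justified by boundedness and distinctness of recursion variables---is precisely the verification the paper leaves implicit. The paper's remark immediately after the proposition (``the only forward $\emptystring$-transitions occur precisely from binder states while backward transitions happen from variable states to binder states'') confirms this is exactly the intended reading.
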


For both, the only forward $\emptystring$-transitions occur precisely from binder states while backward transitions happen from variable states to binder states. The illustrations for our examples always have the initial state, which is the state with the greatest level, at the top.
This is why we use greater and higher as well as smaller and lower interchangeably for levels.

\paragraph*{Features of Different Merge Operators by Example}

In this section, we exemplify which features each of the merge operators supports.
We present a sequence of implementable global types.
Despite, some cannot be handled by some (or all) merge operators.
If a global type is not projectable using some merge operator, we say it is \emph{rejected} and it constitutes a \emph{negative} example for this merge operator.
We focus on role $\procC$ when projecting.
Thus, rejected mostly means that there is (at least) no projection onto $\procC$.
If a global type is projectable by some merge operator, we call it a \emph{positive} example.
All examples strive for minimality and follow the idea that roles decide whether to take a left~($l$) or right~($r$) branch of a~choice.

\begin{figure}[t]

\begin{subfigure}[b]{0.38\textwidth}
\centering
\resizebox{0.92\textwidth}{!}{
    \begin{tikzpicture}[sem , node distance=0.5cm and 0.7cm]
    \node[recstate, initial above, initial text = ] (q0) {$q_0$};
    \node[state, below = of q0] (q0p) {$q'_0$};
    \node[rcvstate, below = of q0p, xshift=-14mm,yshift=2mm] (q1) {$q_1$};
    \node[finalstate, below = of q1, xshift=-4mm] (q2) {$q_2$};
    \node[varstate, below = of q1, xshift=4mm] (q3) {$q_3$};
    \node[rcvstate, below = of q0p, xshift=14mm,yshift=2mm] (q4) {$q_4$};
    \node[finalstate, below = of q4, xshift=-4mm] (q5) {$q_5$};
    \node[varstate, below = of q4, xshift=4mm] (q6) {$q_6$};
\path (q0) edge node[right] {$\emptystring$} (q0p);
    \path (q0p) edge node[left,yshift=1.5mm,xshift=1mm] {$\erased{\msgFromTo{\procA}{\procB}{l}}$} (q1);
    \path (q1) edge node[left] {$\msgFromTo{\procB}{\procC}{l}$} (q2);
    \path (q1) edge node[right] {$\msgFromTo{\procB}{\procC}{r}$} (q3);
    \path (q0p) edge node[right,yshift=1.5mm] {$\erased{\msgFromTo{\procA}{\procB}{r}}$} (q4);
    \path (q4) edge node[left] {$\msgFromTo{\procB}{\procC}{l}$} (q5);
    \path (q4) edge node[right] {$\msgFromTo{\procB}{\procC}{r}$} (q6);

    \draw[semarrow] (q3.south) -- node {} ++(0,-0.15) |- ++(-1.65cm,0) |- (q0) node[] {} node[] {};
    \draw[semarrow] (q6.south) -- node {} ++(0,-0.15) |- ++(0.9cm,0) |- (q0) node[] {} node[] {};
\end{tikzpicture}
 }
\caption{Positive example \\ for plain merge}
\label{fig:pos-ex-plain-merge}
\label{fig:pos-ex-plain-merge-projected}
\end{subfigure}
\hfill
\begin{subfigure}[b]{0.20\textwidth}
\centering
\resizebox{0.97\textwidth}{!}{
\begin{tikzpicture}[sem , node distance=0.7cm and 0.7cm]
    \node[recstate, initial above, initial text = ] (q0) {$q_0$};
\node[rcvstate, below = of q0] (q1) {$q_1$};
    \node[finalstate, below = of q1, xshift=-4mm] (q2) {$q_2$};
    \node[varstate, below = of q1, xshift=4mm] (q3) {$q_3$};
\path (q0) edge node[right] {$\emptystring$} (q1);
    \path (q1) edge node[left] {$\rcv{\procB}{\procC}{l}$} (q2);
    \path (q1) edge node[right] {$\rcv{\procB}{\procC}{r}$} (q3);

    \draw[semarrow] (q3.south) -- node {} ++(0,-0.15) |- ++(0.9cm,0) |- (q0) node[] {} node[] {};
\end{tikzpicture}
 }
\caption{After plain merge}
\label{fig:pos-ex-plain-merge-merged}
\end{subfigure}
\hfill
\begin{subfigure}[b]{0.38\textwidth}
\centering
\resizebox{0.97\textwidth}{!}{
\begin{tikzpicture}[sem , node distance=0.6cm and 0.7cm]
    \node[state, initial above, initial text = ] (q0) {$q_0$};
    \node[rcvstate, below = of q0, xshift=-14mm] (q1) {$q_1$};
    \node[finalstate, accepting, below = of q1, xshift=-4mm] (q2) {$q_2$};
\node[rcvstate, below = of q0, xshift=14mm] (q4) {$q_4$};
\node[finalstate, below = of q4, xshift=4mm] (q6) {$q_6$};
\path (q0) edge node[left] {$\erased{\msgFromTo{\procA}{\procB}{l}}$} (q1);
    \path (q1) edge node[left] {$\msgFromTo{\procB}{\procC}{l}$} (q2);
\path (q0) edge node[right] {$\erased{\msgFromTo{\procA}{\procB}{r}}$} (q4);
\path (q4) edge node[right] {$\msgFromTo{\procB}{\procC}{r}$} (q6);
\end{tikzpicture}
 }
\caption{Negative example \\ for plain merge}
\label{fig:neg-ex-plain-merge}
\end{subfigure}

\vspace{-1ex}
\caption{
The FSM on the left represents an implementable global type that is accepted by plain merge.
It implicitly shows the FSM after collapsing erasure:
every interaction $\procC$ is not involved in is given as
$\erased{\msgFromTo{\procA}{\procB}{l}}$.
The FSM in the middle is the result of the plain merge.
The FSM on the right represents an implementable global type that is rejected by plain merge.
It is obtained from the left one by removing one choice option in each branch of the initial choice.
}
\vspace{-3.0ex}
\label{fig:pos-ex-plain-merge-steps}
\end{figure}

\begin{example}[Positive example for plain merge]
The following global type is implementable:
\vspace{-1.5ex}
{ \scriptsize \[
\mu t. +
\begin{cases}
\msgFromTo{\procA}{\procB}{l}. \,
    (
    \msgFromTo{\procB}{\procC}{l}. \, 0
    +
    \msgFromTo{\procB}{\procC}{r}. \, t
    )
\\
\msgFromTo{\procA}{\procB}{r}. \,
    (
    \msgFromTo{\procB}{\procC}{l}. \, 0
    +
    \msgFromTo{\procB}{\procC}{r}. \, t
    )
\end{cases}
.
\] }
\hspace{-1.5ex} The state machine for its semantics is given in \cref{fig:pos-ex-plain-merge}.
After collapsing erasure, there is a non-deterministic choice from $q'_0$ leading to $q_1$ and $q_4$ since $\procC$ is not involved in the initial choice.
The plain merge operator can resolve this non-determinism since both cones of $q_1$ and $q_4$ represent the same subterm.
Technically, there is an isomorphism between the states in both cones which preserves the kind of states as well as the transition labels and the backward transitions from isomorphic recursion states lead to the same binder state.
The result is illustrated in~\cref{fig:pos-ex-plain-merge-merged}.
It is also the FSM of a local type for $\procC$ which is the result of the (syntactic) plain merge:
{ \small $
\mu t. \,
    (
    \rcv{\procB}{}{l}. \, 0
    \, \ExtCh \,
    \rcv{\procB}{}{r}. \, t
    )
$ }.
\end{example}

Our explanation on FSMs allows to check congruence of cones to merge while the definition requires syntactic equality.
If we swap the order of branches
 $\msgFromTo{\procB}{\procC}{l}$
 and
 $\msgFromTo{\procB}{\procC}{r}$
in \cref{fig:pos-ex-plain-merge}
on the right, the syntactic merge rejects.
Still, because both are semantically the same protocol specification, we expect tools to check for such easy fixes.

\begin{example}[Negative example for plain merge]
We consider the following simple implementable global type where the choice by $\procA$ is propagated by $\procB$ to $\procC$:
{ \scriptsize $
+
\begin{cases}
\msgFromTo{\procA}{\procB}{l}. \,
\msgFromTo{\procB}{\procC}{l}. \, 0
\\
\msgFromTo{\procA}{\procB}{r}. \,
\msgFromTo{\procB}{\procC}{r}. \, 0
\end{cases}
$. \\ }
The corresponding state machine is illustrated in
\cref{fig:neg-ex-plain-merge}.
Here, $q_0$ exhibits non-determinism but the plain merge fails because $q_1$ and $q_4$ have different outgoing transition labels. \end{example}

Intuitively, the plain merge operator forbids that any, but the two roles involved in a choice, can have different behaviour after the choice.
It basically forbids propagating a choice.
The semi-full merge overcomes this shortcoming and can handle the previous example.
We present a slightly more complex one to showcase the features it supports.

\begin{figure}[t]

\begin{subfigure}[b]{0.38\textwidth}
\centering
\resizebox{0.92\textwidth}{!}{
    \begin{tikzpicture}[sem , node distance=0.5cm and 0.7cm]
    \node[recstate, initial above, initial text = ] (q0) {$q_0$};
    \node[state, below = of q0] (q0p) {$q'_0$};
    \node[rcvstate, below = of q0p, xshift=-14mm, yshift=2mm] (q1) {$q_1$};
    \node[state, accepting, below = of q1, xshift=-4mm] (q2) {$q_2$};
    \node[state, accepting, below = of q1, xshift=4mm] (q3) {$q_3$};
    \node[rcvstate, below = of q0p, xshift=14mm, yshift=2mm] (q4) {$q_4$};
    \node[state, accepting, below = of q4, xshift=-4mm] (q5) {$q_5$};
    \node[varstate, below = of q4, xshift=4mm] (q6) {$q_6$};
\path (q0) edge node[left] {$\emptystring$} (q0p);
    \path (q0p) edge node[left, yshift=1.5mm, xshift=1mm] {$\erased{\msgFromTo{\procA}{\procB}{l}}$} (q1);
    \path (q1) edge node[left] {$\msgFromTo{\procB}{\procC}{l}$} (q2);
    \path (q1) edge node[right] {$\msgFromTo{\procB}{\procC}{m}$} (q3);
    \path (q0p) edge node[right, yshift=1.5mm] {$\erased{\msgFromTo{\procA}{\procB}{r}}$} (q4);
    \path (q4) edge node[left] {$\msgFromTo{\procB}{\procC}{m}$} (q5);
    \path (q4) edge node[right] {$\msgFromTo{\procB}{\procC}{r}$} (q6);

    \draw[semarrow] (q6.south) -- node {} ++(0,-0.15) |- ++(0.9cm,0) |- (q0) node[] {} node[] {};
\end{tikzpicture}
 }
\caption{Positive example \\for semi-full merge}
\label{fig:pos-ex-semi-full-merge}
\label{fig:pos-ex-semi-full-merge-projected}
\end{subfigure}
\hfill
\begin{subfigure}[b]{0.2\textwidth}
\centering
\resizebox{0.92\textwidth}{!}{
    \begin{tikzpicture}[sem , node distance=0.7cm and 0.7cm]
    \node[recstate, initial above, initial text = ] (q0) {$q_0$};
    \node[rcvstate, below = of q0] (q0p) {$q_{1\mid4}$};
\node[finalstate, below = of q0p, xshift=-10mm] (q2) {$q_2$};
    \node[finalstate, below = of q0p, yshift=-10mm] (q3) {$q_{3\mid5}$};
\node[varstate, below = of q0p, xshift=10mm] (q6) {$q_6$};
\path (q0) edge node[left] {$\emptystring$} (q0p);
\path (q0p) edge node[left] {$\rcv{\procB}{\procC}{l}$} (q2);
    \path (q0p) edge node[sloped] {$\msgFromTo{\procB}{\procC}{m}$} (q3);
\path (q0p) edge node[right] {$\rcv{\procB}{\procC}{r}$} (q6);

    \draw[semarrow] (q6.south) -- node {} ++(0,-0.15) |- ++(0.6cm,0) |- (q0) node[] {} node[] {};
\end{tikzpicture}
 }
\caption{After semi-full merge}
\label{fig:pos-ex-semi-full-merge-merged}
\end{subfigure}
\hfill
\begin{subfigure}[b]{0.38\textwidth}
\centering
\resizebox{0.97\textwidth}{!}{
    \begin{tikzpicture}[sem , node distance=0.7cm and 0.7cm]
    \node[state, initial above, initial text = ] (q0) {$q_0$};
    \node[rcvstate, below = of q0, xshift=-14mm] (q1) {$q_1$};
    \node[finalstate, below = of q1, xshift=-4mm] (q2) {$q_2$};
\node[rcvstate, below = of q0, xshift=14mm] (q4) {$q_4$};
\node[finalstate, below = of q4, xshift=4mm] (q6) {$q_6$};
\path (q0) edge node[left] {$\erased{\msgFromTo{\procA}{\procB}{l}}$} (q1);
    \path (q1) edge node[left] {$\msgFromTo{\procA}{\procC}{l}$} (q2);
\path (q0) edge node[right] {$\erased{\msgFromTo{\procA}{\procB}{r}}$} (q4);
\path (q4) edge node[right] {$\msgFromTo{\procB}{\procC}{r}$} (q6);
\end{tikzpicture}
 }
\caption{Negative example \\ for full merge}
\label{fig:neg-ex-full-merge}
\end{subfigure}

\vspace{-1ex}
\caption{
The FSM on the left represents an implementable global type (and implicitly the collapsing erasure onto $\procC$) that is accepted by semi-full merge.
The FSM in the middle is the result of the semi-full merge.
The FSM on the right is a negative example for the full merge operator.
}
\vspace{-3.0ex}
\end{figure}

\begin{example}[Positive example for semi-full merge]
Let us consider this implementable global type:{ \scriptsize $
\mu t. +
\begin{cases}
\msgFromTo{\procA}{\procB}{l}. \,
    (
    \msgFromTo{\procB}{\procC}{l}. \, 0
    +
    \msgFromTo{\procB}{\procC}{m}. \, 0
    )
\\
\msgFromTo{\procA}{\procB}{r}. \,
    (
    \msgFromTo{\procB}{\procC}{m}. \, 0
    +
    \msgFromTo{\procB}{\procC}{r}. \, t
    )
\end{cases}
$} \hspace{-2.5ex},
illustrated in \cref{fig:pos-ex-semi-full-merge}.
After applying collapsing erasure, there is a non-deterministic choice from $q_0$ leading to $q_1$ and $q_4$ since $\procC$ is not involved in the initial choice,
We apply the semi-full merge for both states.
Both are receive states so Case \ref{merge:semifullonly} applies.
First, we observe that
$\rcv{\procB}{\procC}{l}$
and
$\rcv{\procB}{\procC}{r}$
are unique to one of the two states so both transitions, with the cones of the states they lead to, can be kept.
Second,
there is $\rcv{\procB}{\procC}{m}$ which is common to both states.
We recursively apply the semi-full merge and, with Case \ref{merge:plainonly}, observe that the result $q_{3\mid5}$ is simply a final state.
Overall, we obtain the state machine in
\cref{fig:pos-ex-semi-full-merge-merged}, which is equivalent to the result of the syntactic projection with semi-full merge:
{ \scriptsize $
\mu t. \,
(
    \rcv{\procB}{}{l}. \, 0
    \, \ExtCh \,
    \rcv{\procB}{}{m}. \, 0
    \, \ExtCh \,
    \rcv{\procB}{}{r}. \, t
)
$ }.
\end{example}

\begin{example}[Negative example for semi-full merge and positive example for full merge]
\label{ex:neg-ex-semi-full-merge}
The semi-full merge operator rejects the following implementable global type:

\vspace{-3ex}
{ \scriptsize \[
+
\begin{cases}
\msgFromTo{\procA}{\procB}{l}. \,
    \mu t_1. \,
    (
    \msgFromTo{\procB}{\procC}{l}. \,
    \msgFromTo{\procB}{\procA}{l}. \, t_1
    +
    \msgFromTo{\procB}{\procC}{m}. \,
    \msgFromTo{\procB}{\procA}{m}. \, 0
    )
\\
\msgFromTo{\procA}{\procB}{r}. \,
    \mu t_2. \,
    (
    \msgFromTo{\procB}{\procC}{m}. \,
    \msgFromTo{\procB}{\procA}{m}. \, 0
    +
    \msgFromTo{\procB}{\procC}{r}. \,
    \msgFromTo{\procB}{\procA}{r}. \, t_2
    )
\end{cases}
. \] }
\hspace{-1.5ex} Its FSM and the FSM after collapsing erasure is given in \cref{fig:neg-ex-semi-full-merge,fig:neg-ex-semi-full-merge-projected}.
Intuitively, it would need to recursively merge the parts after both recursion binders in order to merge the branches with receive event $\rcv{\procB}{\procC}{m}$ but it cannot do so.
The full merge can handle this global type.
It can descend beyond $q_1$ and $q_4$ and is able to merge $q'_1$ and $q'_4$.
To obtain $q''_{3\mid5}$, it applies Case \ref{merge:plainonly} while
$q'_{1\mid4}$ is only feasible with Case \ref{merge:semifullonly}.
The result is embedded into the recursive structure to obtain the FSM in \cref{fig:pos-ex-full-merge-merged}. It is equivalent to the (syntactic) result, which renames the recursion variable for one branch:
{ \scriptsize $
\mu t_1. \,
(
\rcv{\procB}{}{l}. \, t_1
\, \ExtCh \,
\rcv{\procB}{}{m}. \, 0
\, \ExtCh \,
\rcv{\procB}{}{r}. \, t_1
)
.$}
\end{example}

\begin{figure}[t]

\begin{subfigure}[b]{0.42\textwidth}
\centering
\resizebox{0.85\textwidth}{!}{
    \begin{tikzpicture}[sem , node distance=0.5cm and 0.7cm]
    \node[state, initial above, initial text = ] (q0) {$q_0$};
    \node[recstate, below = of q0, xshift=-16mm, yshift=1mm] (q1) {$q_1$};
    \node[state, below = of q1, yshift=1mm] (q1p) {$q'_1$};
    \node[state, below = of q1p, xshift=-4mm, yshift=1mm] (q2) {$q_2$};
    \node[varstate, below = of q2] (q2p) {$q'_2$};
    \node[state, below = of q1p, xshift=4mm, yshift=1mm] (q3) {$q_3$};
    \node[finalstate, below = of q3] (q3p) {$q'_3$};
    \node[recstate, below = of q0, xshift=16mm, yshift=1mm] (q4) {$q_4$};
    \node[state, below = of q4, yshift=1mm] (q4p) {$q'_4$};
    \node[state, below = of q4p, xshift=-4mm, yshift=1mm] (q5) {$q_5$};
    \node[finalstate, below = of q5] (q5p) {$q'_5$};
    \node[state, below = of q4p, xshift=4mm, yshift=1mm] (q6) {$q_6$};
    \node[varstate, below = of q6] (q6p) {$q'_6$};
\path (q0) edge node[left, yshift=1mm] {$\msgFromTo{\procA}{\procB}{l}$} (q1);
    \path (q1) edge node[left, yshift=.5mm] {$\emptystring$} (q1p);
    \path (q1p) edge node[left, yshift=.5mm] {$\msgFromTo{\procB}{\procC}{l}$} (q2);
    \path (q1p) edge node[right, yshift=.5mm] {$\msgFromTo{\procB}{\procC}{m}$} (q3);
    \path (q0) edge node[right, yshift=1mm] {$\msgFromTo{\procA}{\procB}{r}$} (q4);
    \path (q4) edge node[left, yshift=.5mm] {$\emptystring$} (q4p);
    \path (q4p) edge node[left, yshift=.5mm] {$\msgFromTo{\procB}{\procC}{m}$} (q5);
    \path (q4p) edge node[right, yshift=.5mm] {$\msgFromTo{\procB}{\procC}{r}$} (q6);

    \path (q2) edge node[left] {$\msgFromTo{\procB}{\procA}{l}$} (q2p);
    \path (q3) edge node[right] {$\msgFromTo{\procB}{\procA}{m}$} (q3p);
    \path (q5) edge node[left] {$\msgFromTo{\procB}{\procA}{m}$} (q5p);
    \path (q6) edge node[right] {$\msgFromTo{\procB}{\procA}{r}$} (q6p);

    \draw[semarrow] (q2p.south) -- node {} ++(0,-0.15) |- ++(-1.1cm,0) |- (q1) node[] {} node[] {};
    \draw[semarrow] (q6p.south) -- node {} ++(0,-0.15) |- ++(1.1cm,0) |- (q4) node[] {} node[] {};
\end{tikzpicture}
 }
\caption{Negative example for semi-full merge \\ and positive example for full merge}
\label{fig:neg-ex-semi-full-merge}
\end{subfigure}
\hfill
\begin{subfigure}[b]{0.33\textwidth}
\centering
\resizebox{0.90\textwidth}{!}{
    \begin{tikzpicture}[sem , node distance=0.7cm and 0.7cm]
    \node[state, initial above, initial text = ] (q0) {$q_0$};
    \node[recstate, below = of q0, xshift=-14mm] (q1) {$q_1$};
    \node[rcvstate, below = of q1] (q1p) {$q'_1$};
    \node[varstate, below = of q1p, xshift=-4mm] (q2) {$q''_2$};
\node[finalstate, below = of q1p, xshift=4mm] (q3) {$q''_3$};
\node[recstate, below = of q0, xshift=14mm] (q4) {$q_4$};
    \node[rcvstate, below = of q4] (q4p) {$q'_4$};
    \node[finalstate, below = of q4p, xshift=-4mm] (q5) {$q''_5$};
\node[varstate, below = of q4p, xshift=4mm] (q6) {$q''_6$};
\path (q0) edge node[left] {$\emptystring$} (q1);
    \path (q1) edge node[left] {$\emptystring$} (q1p);
    \path (q1p) edge node[left] {$\rcv{\procB}{\procC}{l}$} (q2);
    \path (q1p) edge node[right] {$\rcv{\procB}{\procC}{m}$} (q3);
    \path (q0) edge node[right] {$\emptystring$} (q4);
    \path (q4) edge node[left] {$\emptystring$} (q4p);
    \path (q4p) edge node[left] {$\rcv{\procB}{\procC}{m}$} (q5);
    \path (q4p) edge node[right] {$\rcv{\procB}{\procC}{r}$} (q6);

    \draw[semarrow] (q2.south) -- node {} ++(0,-0.15) |- ++(-0.77cm,0) |- (q1) node[] {} node[] {};
    \draw[semarrow] (q6.south) -- node {} ++(0,-0.15) |- ++(0.77cm,0) |- (q4) node[] {} node[] {};
\end{tikzpicture}
 }
\caption{After collapsing erasure \\ \phantom{sth}}
\label{fig:neg-ex-semi-full-merge-projected}
\end{subfigure}
\hfill
\begin{subfigure}[b]{0.23\textwidth}
\centering
\resizebox{0.90\textwidth}{!}{
    \begin{tikzpicture}[sem , node distance=0.7cm and 0.7cm]
\node[recstate, initial above, initial text = ] (q1) {$q_{1\mid4}$};
    \node[rcvstate, below = of q1] (q1p) {$q'_{1\mid4}$};
    \node[varstate, below = of q1p, xshift=-8mm] (q2) {$q''_2$};
\node[finalstate, below = of q1p, yshift=-10mm] (q3) {$q''_{3\mid5}$};
\node[varstate, below = of q1p, xshift=8mm] (q6) {$q''_6$};
\path (q1) edge node[left] {$\emptystring$} (q1p);
    \path (q1p) edge node[left] {$\rcv{\procB}{\procC}{l}$} (q2);
    \path (q1p) edge node[sloped, xshift=0.7mm, yshift=-1mm] {$\rcv{\procB}{\procC}{m}$} (q3);
\path (q1p) edge node[right] {$\rcv{\procB}{\procC}{r}$} (q6);

    \draw[semarrow] (q2.south) -- node {} ++(0,-0.15) |- ++(-0.9cm,0) |- (q1) node[] {} node[] {};
    \draw[semarrow] (q6.south) -- node {} ++(0,-0.15) |- ++(0.9cm,0) |- (q1) node[] {} node[] {};
\end{tikzpicture}
 }
\caption{After full merge \\ \phantom{sth}}
\label{fig:pos-ex-full-merge-merged}
\end{subfigure}

\vspace{-1ex}
\caption{
The FSM on the left represents an implementable global type that is rejected by the semi-full merge.
It is accepted by the full merge:
collapsing erasure yields the FSM in the middle and applying the full merge the FSM on the right.
}
\label{fig:neg-ex-semi-full-merge-steps}
\vspace{-3ex}
\end{figure}

\begin{example}[Negative example for full merge]
We consider a simple implementable global type where $\procA$ propagates its decision to $\procC$ in the top branch while $\procB$ propagates it in the bottom branch:
{ \scriptsize $
+
\begin{cases}
\msgFromTo{\procA}{\procB}{l}. \,
\msgFromTo{\procA}{\procC}{l}. \, 0
\\
\msgFromTo{\procA}{\procB}{r}. \,
\msgFromTo{\procB}{\procC}{r}. \, 0
\end{cases}
$ }\hspace{-2.5ex}.
It is illustrated in \cref{fig:neg-ex-full-merge}.
This cannot be projected onto $\procC$ by the full merge operator for which all receive events need to have the same sender.
\end{example}

\begin{remark}[On sender-driven choice]
\label{rem:sender-driven-choice}
Majumdar et al.~\cite{DBLP:conf/concur/MajumdarMSZ21} proposed a classical projection operator that overcomes this shortcoming.
It can project the previous example.
In general, allowing to receive from different senders has subtle consequences.
Intuitively, messages from different senders could overtake each other in a distributed setting and one cannot rely on the FIFO order provided by the channel of a single sender.
Majumdar et al.\ employ a message availability analysis to ensure that there cannot be any confusion about which branch shall be taken.
Except for the possibility to merge cases where a receiver receives from multiple senders, their merge operator suffers from the same shortcomings as all classical projection operators.
For details, we refer to their work~\cite{DBLP:conf/concur/MajumdarMSZ21}.
\end{remark}

\paragraph*{Shortcomings of Classical Projection/Merge Operators}

We present slight variations of the two buyer protocol that are implementable but rejected by all of the presented projection/merge operators.

\begin{example}
We obtain an implementable variant by omitting both message interactions
$\msgFromTo{\buyerA}{\seller}{\no}$
with which buyer $\buyerA$ notifies seller $\seller$ that they will not buy the item:

\vspace{-3ex}
{
\scriptsize
\[
\mu t. \,
+
\begin{cases}
\msgFromTo{\buyerA}{\seller}{\query}. \,
\msgFromTo{\seller}{\buyerA}{\price}. \,
\bigl(
    \msgFromTo{\buyerA}{\buyerB}{\splitmsg}. \,
        (
        \msgFromTo{\buyerB}{\buyerA}{\yes}. \,
        \msgFromTo{\buyerA}{\seller}{\buy}. \, t
        +
        \msgFromTo{\buyerB}{\buyerA}{\no}. \, t
)
+
    \msgFromTo{\buyerA}{\buyerB}{\cancel}. \, t
\bigr)
\\
\msgFromTo{\buyerA}{\seller}{\done}. \,
\msgFromTo{\buyerA}{\buyerB}{\done}. \, 0
\end{cases}
.
\]
}
\hspace{-1.5ex} This global type cannot be projected onto seller $\seller$.
The merge operator would need to merge a recursion variable with an external choice.
Visually, the merge operator does not allow to unfold the variable $t$ and try to merge again.
However, there is a local type for seller~$\seller$:

\vspace{-3ex}
{
\scriptsize \[
\mu t_1. \,
\ExtCh
\begin{cases}
\rcv{\buyerA}{}{\query}. \,
\mu t_2. \,
\snd{}{\buyerA}{\price}. \,
(
    \rcv{\buyerA}{}{\buy}. \, t_1
\ExtCh
    \rcv{\buyerA}{}{\query}. \, t_2
\ExtCh
    \rcv{\buyerA}{}{\done}. \, 0
)
\\
\rcv{\buyerA}{}{\done}. \, 0
\end{cases}
.
\] }
\hspace{-1.5ex} The local type has two recursion variable binders while the global type only has~one.
Classical projection operators can never yield such a structural change:
the merge operator can only merge states but not introduce new ones or introduce new backward transitions.
\end{example}

\begin{example}[Two Buyer Protocol with Subscription]
\label{ex:2BPWS}
\label{ex:two-buyers-with-subscription}

In this variant, buyer $\buyerA$ first decides whether to subscribe to a yearly discount offer or not --- before purchasing the sequence of items --- and notifies buyer $\buyerB$ if it does so:
{ \scriptsize $
\GG_{\TBPWS}
    \; \is \;
+
\begin{cases}
\msgFromTo{\buyerA}{\seller}{\login}. \,
\GG_{\TBPOR}
\\
\msgFromTo{\buyerA}{\seller}{\subscribe}. \,
\msgFromTo{\buyerA}{\buyerB}{\subscribed}. \,
\GG_{\TBPOR}
\end{cases}
\hspace{-3ex}. $}
The merge operator needs to merge a recursion variable binder $\mu t$ with the external choice $\rcv{\buyerA}{\buyerB}{\subscribed}$.
Still, there is a local type $L_{\buyerB}$ for $\buyerB$ such that $\lang(L_{\buyerB}) = \lang(\GG_{\TBPWS}) \wproj_{\Alphabet_\buyerB}$:

\vspace{-3ex}
{ \scriptsize
\[
L_{\buyerB} \is
\ExtCh
    \begin{cases}
    \rcv{\buyerA}{}{\splitmsg}. \,
        (
        \snd{}{\buyerA}{\yes}. \, L(t_1)
        \IntCh
        \snd{}{\buyerA}{\no}. \, L(t_2)
        )
    \\
    \rcv{\buyerA}{}{\cancel}. \, L(t_3)
    \\
    \rcv{\buyerA}{}{\done}. \, 0
    \\
    \rcv{\buyerA}{}{\subscribed}. \,
    L(t_4)
    \end{cases}
\hspace{-5ex}
\text{ where }
\;
    L(t) \is
    \mu t. \,
    \ExtCh
        \begin{cases}
\rcv{\buyerA}{}{\splitmsg}. \,
            (
            \snd{}{\buyerA}{\yes}. \, t
            \IntCh
            \snd{}{\buyerA}{\no}. \, t
            )
        \\
\rcv{\buyerA}{}{\cancel}. \, t
        \\
\rcv{\buyerA}{}{\done}. \, 0
\end{cases}
.
\]
}
\hspace{-1.5ex} In fact, one can also rely on the fact that buyer $\buyerA$ will comply with the intended protocol.
Then, it suffices to introduce one recursion variable $t$ in the beginning and substitute every $L(\hole)$ with~$t$, yielding a local type $L'_\buyerB$ with $\lang(L_\buyerB) \subseteq \lang(L'_\buyerB)$.
\end{example}

Similarly, classical projection operator cannot handle global types where choices can be disambiguated with semantic properties, e.g., counting modulo a constant.
Scalas and Yoshida~\cite{DBLP:journals/pacmpl/ScalasY19} also identified another shortcoming:
most classical projection operators require all branches of a loop to contain the same set of active roles.
Thus, they cannot project the following global type.
It is implementable and if it was projectable, the result would be equivalent to the local types given in
their example~\cite[Fig.~4~(2)]{DBLP:journals/pacmpl/ScalasY19}.

\begin{example}[Two Buyer Protocol with Inner Recursion]
\label{ex:2BPIR}
This variant allows to recursively negotiate how to split the price (and omits the outer recursion):

{ \vspace{-3ex}
  \scriptsize \[
\GG_{\TBPIR}
    \quad \is \quad
\msgFromTo{\buyerA}{\seller}{\query}. \,
\msgFromTo{\seller}{\buyerA}{\price}. \,
\mu t. \,
+
\begin{cases}
\msgFromTo{\buyerA}{\buyerB}{\splitmsg}. \,
(
    \msgFromTo{\buyerB}{\buyerA}{\yes}. \,
    \msgFromTo{\buyerA}{\seller}{\buy}. \, 0
+
    \msgFromTo{\buyerB}{\buyerA}{\no}. \, t
    )
\\
\msgFromTo{\buyerA}{\buyerB}{\cancel}. \,
\msgFromTo{\buyerA}{\seller}{\no}. \, 0
\end{cases}
.
\] }
\end{example}

These shortcomings have been addressed by some non-classical approaches.
For example, Scalas and Yoshida~\cite{DBLP:journals/pacmpl/ScalasY19} employ model checking while Castagna et al.~\cite{DBLP:journals/corr/abs-1203-0780} characterise implementable global types with an undecidable well-formedness condition and give a sound algorithmically checkable approximation.
It is not known whether the implementability problem for global types, neither with directed or sender-driven choice, is decidable.
We answer this question positively for the more general case of sender-driven~choice.

 \section{Implementability for Global Types from MSTs is Decidable}
\label{sec:standard-implementability-decidable}

In this section, we show decidability of the implementability problem for global types with sender-driven choice, using results from the domain of message sequence charts.
We introduce high-level message sequence charts (HMSCs) and recall an HMSC encoding for global types.
In general, implementability for HMSCs is undecidable but we show that global types, when encoded as HMSCs, belong to a class of HMSCs for which implementability is~decidable.

\subsection{High-level Message Sequence Charts}
\label{sec:hmscs}

Our definitions of (high-level) message sequence charts follow work by
Genest et al.~\cite{DBLP:journals/fuin/GenestKM07} and
Stutz and Zufferey~\cite{DBLP:journals/corr/abs-2209-10328}.
If reasonable, we adapt terminology to the MST setting.

\begin{definition}[Message Sequence Charts]
\label{def:msc}
A \emph{message sequence chart (MSC)} is a $5$-tuple
$M = (\textcolor{colorblind1}{\eventnodes},\textcolor{colorblind2}{p},\textcolor{colorblind3}{f},\textcolor{colorblind4}{l},(\leq_\procA)_{\procA \in \Procs })$
where

\noindent
\begin{minipage}{0.66\textwidth}
\begin{itemize}
\vspace{0.5ex}
\item $\textcolor{colorblind1}{\eventnodes}$ is a set of send $(\SndEvs)$ and receive $(\RcvEvs)$ event nodes such that $\eventnodes = \SndEvs ⊎ \RcvEvs$ (where $\dunion$ denotes disjoint union),\item $\textcolor{colorblind2}{p} \from \eventnodes \to \Procs$ maps each event node to the role acting on it,
\item $\textcolor{colorblind3}{f} \from \SndEvs \to \RcvEvs$ is an injective function linking \\ corresponding send and receive event nodes,
\item $\textcolor{colorblind4}{l} \from \eventnodes \to Σ$ labels every event node with an event, and
\item $(\leq_\procA)_{\procA \in \Procs }$ is a family of total orders for the\\
event nodes of each role: $\leq_\procA \; \subseteq \; \inv{p}(\procA) \times \inv{p}(\procA)$.
\end{itemize}
\end{minipage}\begin{minipage}{0.03\textwidth}
\phantom{s}
\end{minipage}\begin{minipage}{0.31\textwidth}
\begin{footnotesize}
    \centering
    \def\svgwidth{0.83\textwidth}
    \begingroup \makeatletter \providecommand\color[2][]{\errmessage{(Inkscape) Color is used for the text in Inkscape, but the package 'color.sty' is not loaded}\renewcommand\color[2][]{}}\providecommand\transparent[1]{\errmessage{(Inkscape) Transparency is used (non-zero) for the text in Inkscape, but the package 'transparent.sty' is not loaded}\renewcommand\transparent[1]{}}\providecommand\rotatebox[2]{#2}\newcommand*\fsize{\dimexpr\f@size pt\relax}\newcommand*\lineheight[1]{\fontsize{\fsize}{#1\fsize}\selectfont}\ifx\svgwidth\undefined \setlength{\unitlength}{101.66858343bp}\ifx\svgscale\undefined \relax \else \setlength{\unitlength}{\unitlength * \real{\svgscale}}\fi \else \setlength{\unitlength}{\svgwidth}\fi \global\let\svgwidth\undefined \global\let\svgscale\undefined \makeatother \begin{picture}(1,0.41336342)\lineheight{1}\setlength\tabcolsep{0pt}\put(0,0){\includegraphics[width=\unitlength,page=1]{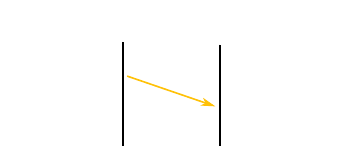}}\put(0.03119107,0.18469745){\color[rgb]{0,0.30196078,0.25098039}\makebox(0,0)[lt]{\lineheight{1.25}\smash{\begin{tabular}[t]{l}$\snd{\procAunc}{\procBunc}{\val}$\end{tabular}}}}\put(0,0){\includegraphics[width=\unitlength,page=2]{figs/bmsc.pdf}}\put(0.65077843,0.08451202){\color[rgb]{0,0.30196078,0.25098039}\makebox(0,0)[lt]{\lineheight{1.25}\smash{\begin{tabular}[t]{l}$\rcv{\procAunc}{\procBunc}{\val}$\end{tabular}}}}\put(0.32458469,0.3304579){\color[rgb]{0,0,0}\makebox(0,0)[lt]{\lineheight{1.25}\smash{\begin{tabular}[t]{c}$\procAunc$\end{tabular}}}}\put(0.5993836,0.32909311){\color[rgb]{0,0,0}\makebox(0,0)[lt]{\lineheight{1.25}\smash{\begin{tabular}[t]{c}$\procBunc$\end{tabular}}}}\end{picture}\endgroup    \captionof{figure}{Highlighting the elements of an MSC $(\textcolor{colorblind1}{\eventnodes},\textcolor{colorblind2}{p},\textcolor{colorblind3}{f},\textcolor{colorblind4}{l},(\leq_{\procAunc})_{\procAunc \in \Procs })$ \cite[Fig.~3]{DBLP:journals/corr/abs-2209-10328}}
  \label{fig:bmsc-color}
    \end{footnotesize}
\vspace{-1.5ex}
\end{minipage}
\vspace{1ex}

\noindent An MSC $M$ induces a partial order $\leq_M$ on $\eventnodes$ that is defined co-inductively\emph{:}

\vspace{-3ex}
{
\scriptsize
\begin{mathpar}

\inferrule*[right=proc]{\label{constr:proc}
e \leq_\procA e'
}{
e \leq_M e'
}

\inferrule*[right=snd-rcv]{\label{constr:snd-rcv}
s \in \SndEvs
}{
s \leq_M f(s)
}

\inferrule*[right=refl]{\label{constr:refl}
}{
e \leq_M e
}

\inferrule*[right=trans]{\label{constr:trans}
e \leq_M e' \\
e' \leq_M e''
}{
e \leq_M e''
}
\end{mathpar}
}
\hspace{-2ex} The labelling function $l$ respects the function $f$:
for every send event node $e$, we have that
$l(e) = \snd{p(e)}{p(f(e))}{\val}$ and $l(f(e)) = \rcv{p(e)}{p(f(e))}{\val}$ for some $\val \in \MsgVals$.
\end{definition}

All MSCs in our work respect FIFO, i.e., there are no $\procA$ and $\procB$ such that there are $e_1, e_2 \in \inv{p}(\procA)$ with $e_1 \neq e_2 $, $l(e_1) = l(e_2)$, $e_1 \leq_\procA e_2$ and $f(e_2) \leq_\procB f(e_1)$ (also called degenerate) and
for every pair of roles $\procA$, $\procB$, and for every two event nodes $e_1 \leq_M e_2$ with $l(e_i) = \snd{\procA}{\procB}{\_}$\phantom{x}for $i \in \set{1,2}$, it holds that $\MsgVals(w_\procA) = \MsgVals(f(w_\procA))$ where $w_\procA$ is the (unique) linearisation of $\inv{p}(\procA)$.
A \emph{basic MSC (BMSC)} has a finite number of nodes $\eventnodes$ and $\BMSCs$ denotes the set of all BMSCs.
When unambiguous, we omit the index $M$ for $\leq_M$ and write $\leq$.
We define $\lneq$ as expected.
The language $\lang(M)$ of an MSC $M$ collects all words $l(w)$ for which $w$ is a linearisation of $\eventnodes$ that is compliant with $\leq_M$.

If one thinks of a BMSC as straight-line code, a high-level message sequence chart adds control flow.
It embeds BMSCs into a graph structure which allows for choice and recursion.

\begin{definition}[High-level Message Sequence Charts] A \emph{high-level message sequence chart (HMSC)} is a $5$-tuple $(V, \edges, \vertexA^I\negmedspace, V^T\negmedspace\!, \mu)$ where
$V$ is a finite set of vertices,
$\edges \subseteq V \times V$ is a set of directed edges,
$\vertexA^I \in V$ is an initial vertex,
$V^T \subseteq V$ is a set of terminal vertices, and
$\mu: V \to \BMSCs$ is a function mapping every vertex to a BMSC.
 A path in an HMSC is a sequence of vertices $\vertexA_1, \ldots$ from $V$ that is connected by edges, i.e.,
 $(\vertexA_i, \vertexA_{i+1}) \in \edges$ for every $i$.
 A path is \emph{maximal} if it is infinite or ends in a vertex from $V^T\negmedspace$.
\end{definition}

Intuitively, the language of an HMSC is the union of all languages of the finite and infinite MSCs generated from maximal paths in the HMSC and is formally defined in\iftoggle{arxiv}
{
\cref{app:hmscs}.
}
{
the technical report \cite{arxiv-version}.
}
Like global types, an HMSC specifies a protocol.
The implementability question was also posed for HMSCs and studied as \emph{safe realisability}.
If the CSM is not required to be deadlock-free, it is called weak realisability.

\begin{definition}[Safe realisability of HMSCs \cite{DBLP:journals/tcs/AlurEY05}]
 An HMSC $H$ is said to be \emph{safely realisable} if there exists a deadlock-free CSM $\CSM{A}$ such that
 $\lang(H) = \lang(\CSM{A})$.
\end{definition}

\smallskip\noindent
{\sffamily\bfseries
Encoding Global Types from MSTs as HMSCs}
\label{sec:mst-to-hmsc}

\noindent
Stutz and Zufferey~\cite[Sec.\ 5.2]{DBLP:journals/corr/abs-2209-10328} provide a formal encoding $H(\hole)$ from global types to HMSCs.
We refer to\iftoggle{arxiv}
{
\cref{app:hmsc-encoding}}
{
the technical report~\cite{arxiv-version}}
for the definition.
We adapt their correctness result to our setting.
In particular, our semantics of $\GG$ use the closure operator $\interswaplang(\hole)$ while they distinguish between a type and execution language.
We also omit the closure operator on the right-hand side because HMSCs are closed with regard to this operator~\cite[Lm.~5]{DBLP:journals/corr/abs-2209-10328}.

\begin{theorem}
\label{thm:correctnessMPSTtoHMSC}
Let $\GG$ be a global type.
Then,
the following holds:
$\lang(\GG) = \lang(H(\GG))$.
\end{theorem}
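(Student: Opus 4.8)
The plan is to reduce the statement to the correctness result of Stutz and Zufferey~\cite{DBLP:journals/corr/abs-2209-10328} for the encoding $H(\hole)$ and to reconcile the two formulations of the semantics. Unfolding \cref{def:language-global-mst}, the left-hand side is $\lang(\GG) = \interswaplang(\SyncToAsync(\lang(\semglobalsync(\GG))))$, so the two differences to address are: (i) our semantics bakes the split operator $\SyncToAsync$ and the closure $\interswaplang(\hole)$ into a single language, whereas Stutz and Zufferey separate a type language from an execution language; and (ii) the right-hand side $\lang(H(\GG))$ carries no explicit closure. Difference (ii) is harmless because HMSC languages are invariant under $\interswaplang(\hole)$ (\cite[Lm.~5]{DBLP:journals/corr/abs-2209-10328}), i.e.\ $\lang(H(\GG)) = \interswaplang(\lang(H(\GG)))$, so it suffices to match the two sides up to this closure.

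First I would make the structural correspondence between $\semglobalsync(\GG)$ and $H(\GG)$ explicit. Both are built from the indexed syntactic subterms of $\GG$, so maximal runs of $\semglobalsync(\GG)$ and maximal paths of $H(\GG)$ are in bijection (collapsing the $\emptystring$-transitions for $\mu t$ and $t$): a maximal run with trace $w \in \AlphSync^\infty$ corresponds to the maximal path whose concatenated single-interaction BMSCs form an MSC $M_w$. The core lemma is then the per-path identity $\lang(M_w) = \interswaplang(\SyncToAsync(\set{w}))$, read as equality of the finite, resp.\ infinite, fragments. Here one checks that the partial order $\leq_{M_w}$ --- per-role total order, send-before-matching-receive, and FIFO matching via $f$ --- generates exactly the reorderings permitted by the four rules of $\interswap$: Rules 1 and 2 are the freedom to commute independent sends, resp.\ receives, on disjoint roles; Rule 3 commutes a send past an unrelated receive; and Rule 4, guarded by the count condition $\card{w\wproj_{\snd{\procA}{\procB}{\_}}} > \card{w\wproj_{\rcv{\procA}{\procB}{\_}}}$, is precisely the FIFO constraint allowing a send to commute with a receive on the same channel only when that receive consumes an earlier message, so that $s \leq_M f(s)$ is never violated. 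Taking the union over all maximal paths and using that $\interswaplang(\hole)$ distributes over unions of per-word closures, $\lang(H(\GG)) = \bigcup_w \lang(M_w) = \interswaplang(\SyncToAsync(\lang(\semglobalsync(\GG)))) = \lang(\GG)$.

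The main obstacle is the infinite-word case. For $w \in \AlphSync^\omega$ the generated MSC $M_w$ is infinite, and its language is defined through the prefix-based order $\preceq_\interswap^\omega$ rather than $\interswap$ directly; I would have to show that every linearisation of $M_w$ arises as a limit of $\interswap$-permutations of finite prefixes of $\SyncToAsync(w)$ and, conversely, that each such word is a valid linearisation --- this is exactly where the mild assumption that protocols can terminate is used, to guarantee that no send is left permanently unmatched and that every event of $M_w$ eventually appears. Matching the $\omega$-definitions on both sides (our $\preceq_\interswap^\omega$ and the HMSC infinite-MSC language) and verifying that the degeneracy-free and FIFO conditions on $M_w$ hold along every path are the technical heart; once the finite and infinite per-path identities are in place, the union together with the closure-invariance of \cite[Lm.~5]{DBLP:journals/corr/abs-2209-10328} closes the argument.
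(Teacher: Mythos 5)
Your high-level route is exactly the paper's: \cref{thm:correctnessMPSTtoHMSC} is not proven from scratch there but obtained by adapting the correctness result of Stutz and Zufferey~\cite[Sec.~5.2]{DBLP:journals/corr/abs-2209-10328}, with precisely the two reconciliations you identify --- their type/execution-language distinction versus the closure-based semantics of \cref{def:language-global-mst}, and dropping the closure on the right-hand side because HMSC languages are closed under $\interswap$~\cite[Lm.~5]{DBLP:journals/corr/abs-2209-10328}. Your per-path identity $\lang(M_w) = \interswaplang(\SyncToAsync(\set{w}))$ and the rule-by-rule matching of $\interswap$ against the MSC partial order are, in spirit, the content of that cited proof, and the finite-word half of your sketch is sound.

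The genuine gap is in your infinite-word case: you invoke ``the mild assumption that protocols can terminate'', i.e.\ $0$-reachability, but \cref{thm:correctnessMPSTtoHMSC} carries no such hypothesis. It is stated for arbitrary global types, before the assumption is even introduced; the paper attaches that assumption only to later results (\cref{thm:erasure-candidate-impl,lm:finite-generalises-to-infinite,lm:implementability-entails-glob-coop}). A proof that uses it establishes strictly less than claimed --- it would say nothing about, e.g., $\GG = \mu t.\, \msgFromTo{\procA}{\procB}{\val}.\, \msgFromTo{\procC}{\procD}{\val}.\, t$ from \cref{ex:independent-pairs}, for which the equality must still hold. Moreover, the justification you give is not what $0$-reachability buys: in any pre-closure word $\SyncToAsync(\trace(\rho))$, every send is immediately followed by its matching receive by construction of $\SyncToAsync$, whether or not the run can terminate, so unmatched sends are not the issue; and the closure for infinite words is the one-directional $\preceq_\interswap^\omega$, which admits words with indefinitely delayed receives regardless of whether the protocol can exit its loops. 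What the infinite case actually demands is aligning the definition of the language of the infinite MSC generated by an infinite path of $H(\GG)$ with this $\preceq_\interswap^\omega$-closure --- definitional bookkeeping that the cited correctness result handles and that nowhere needs termination. Removing that appeal, and arguing the infinite case purely from the definitions (or citing the adaptation, as the paper does), is necessary for your proof to match the stated theorem.
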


\subsection{Implementability is Decidable}
\label{sec:imp-decidable}

We introduce a mild assumption for global types.
Intuitively, we require that every run of the protocol can always terminate but does not need to.
Basically, this solely rules out global types that have loops without exit (cf.~\cref{ex:independent-pairs}).
In practice, it is reasonable to assume a mechanism to terminate a protocol for maintenance for instance.
Note that this assumption constitutes a structural property of a protocol and no fairness condition on runs of the protocol.

\begin{assumption*}[$0$-Reachable]
\label{assumption:can-always-finish}
We say a global type $\GG$ is $0$-reachable if every prefix of a word in its language can be completed to a finite word in its language.
Equivalently, we require that the vertex for the syntactic subterm $0$ is reachable from any vertex in~$H(\GG)$. \end{assumption*}

The MSC approach to safe realisability for HMSCs is different from the classical projection approach to implementability.
Given an HMSC, there is a canonical candidate implementation which always implements the HMSC if an implementation exists \cite[Thm.~13]{DBLP:journals/tse/AlurEY03}.
Therefore, approaches center around checking safe realisability of HMSC languages and establishing conditions on HMSCs that entail safe realisability.

\begin{definition}[Canonical candidate implementation~\cite{DBLP:journals/tse/AlurEY03}]
\label{def:canonical-candidate-impl}
Given an HMSC~$H$ and a role~$\procA$,
let $A'_\procA = (Q', \Alphabet_\procA, \delta', q'_{0}, F')$ be a state machine with
$Q' \is \set{q_w
            \mid
           w \in \pref(\lang(H)\wproj_{\Alphabet_\procA})}$,
$F' \is \set{q_w
            \mid
           w \in \lang_{\fin}(H) \wproj_{\Alphabet_\procA}}$,
and
$\delta'(q_w, x, q_{wx})$ for $x \in \AlphAsync$.
The resulting state machine~$A'_\procA$ is not necessarily finite so
$A'_\procA$ is determinised and minimised which yields the FSM $A_\procA$.
We call $\CSM{A}$ the canonical candidate implementation of $H$.
\end{definition}

Intuitively, the intermediate state machine $A'_\procA$ constitutes a tree whose maximal finite paths give $\lang(H) \wproj_{\Alphabet_\procA} \inters \Alphabet_{\procA}^*$.
This set can be infinite and, thus, the construction might not be effective.
We give an effective construction of a deterministic FSM for the same language which was very briefly hinted at by Alur et al.~\cite[Proof of Thm.~3]{DBLP:journals/tcs/AlurEY05}. 

\begin{definition}[Projection by Erasure]
Let $\procA$ be a role and $M = (\eventnodes, p, f, l, (\leq_\procA)_{\procA \in \Procs })
$ be an MSC.
We denote the set of nodes of $\procA$ with
$N_\procA \is \set{n \mid p(n) = \procA}$ and define a two-ary $\operatorname{next}$-relation on $N_\procA$\emph{:}
$\operatorname{next}(n_1, n_2)$ iff
$n_1 \lneq n_2$ and there is no $n'$ with $n_1 \lneq n' \lneq n_2$.
We define the projection by erasure of $M$
on to $\procA$:
$
M \wproj_\procA
    =
(Q_M, \Alphabet_\procA, \delta_M, q_{M,0}, \set{q_{M,f}})
$
with
\vspace{-2ex}
{ \scriptsize
\begin{align*}
Q_M
& \is \set{q_n \mid n \in N_\procA} \dunion \set{q_{M,0}} \dunion \set{q_{M,f}}
\text{ and }
\\
\delta_M & \is
    \set{
        q_{M, 0} \xrightarrow{\emptystring} q_{n_1}
            \mid
        \forall n_2.\, n_1 \leq n_2
    }
    \, \dunion \,
    \set{
        q_{n_1} \xrightarrow{l(n_1)} q_{n_2}
            \mid
        \operatorname{next}(n_1, n_2)
    }
    \, \dunion \,
    \set{
        q_{n_2} \xrightarrow{l(n_2)} q_{M,f}
            \mid
        \forall n_1.\, n_1 \leq n_2
    }
\end{align*}
}
\hspace{-2ex} where $\dunion$ denotes disjoint union.
Let $H = (V, \edges, \vertexA^I\negmedspace, V^T\negmedspace\!, \mu)$ be an HMSC.
We construct the projection by erasure for every vertex and identify them with the vertex, e.g., $Q_\vertexA$ instead of~$Q_{\mu(\vertexA)}$.
We construct an auxiliary FSM
$
(Q'_H, \Alphabet_\procA, \delta'_H, q'_{H,0}, F'_H)
$
with
$Q'_H = \Dunion_{\vertexA \in V} Q_\vertexA$,
$\delta'_H =
    \Dunion_{\vertexA \in V} \delta_\vertexA \dunion
    \set{ q_{\vertexA_1,f} \xrightarrow{\emptystring} q_{\vertexA_2,0} \mid (\vertexA_1, \vertexA_2) \in E}$,
$q'_{H,0} = q_{\vertexA^I\negmedspace,0}$,
and
$F'_H = \Dunion_{\vertexA \in V^F} q_{\vertexA, f}$.
We determinise and minimise
$(Q'_H, \Alphabet_\procA, \delta'_H, q'_{H,0}, F'_H)$
to obtain
$H \wproj_\procA \is (Q_H, \Alphabet_\procA, \delta_H, q_{H,0}, F_H)$,
which we define to be the \emph{projection by erasure of $H$ onto $\procA$}.
The CSM formed from the projections by erasure $\CSMl{H \wproj_{\procA}}$ is called \emph{erasure candidate implementation}.
\end{definition}

\begin{restatable}[Correctness of Projection by Erasure]{lemma}{projectionByErasureCorrect}
\label{lm:projection-by-erasure-correct}
Let $H$ be an HMSC, $\procA$ be a role, and $H \wproj_\procA$ be its projection by erasure.
Then, the following language equality holds:
$\lang(H \wproj_\procA) = \lang(H) \wproj_{\Alphabet_\procA}$.
\end{restatable}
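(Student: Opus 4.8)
The plan is to prove the equality for the \emph{auxiliary} FSM $B_H := (Q'_H, \Alphabet_\procA, \delta'_H, q'_{H,0}, F'_H)$ first, and then transfer it through determinisation and minimisation. Since both operations preserve the accepted language over finite \emph{and} infinite traces (counting infinite runs as maximal), it suffices to show
\[
\lang\bigl((Q'_H, \Alphabet_\procA, \delta'_H, q'_{H,0}, F'_H)\bigr) = \lang(H) \wproj_{\Alphabet_\procA}.
\]

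First I would isolate the behaviour of a single vertex gadget. For any MSC $M$, the order $\leq_M$ restricted to the nodes $N_\procA$ coincides with the total order $\leq_\procA$, so every $\leq_M$-linearisation of $M$ projects onto $\Alphabet_\procA$ to the \emph{same} word, namely $l(n_1) \cdots l(n_k)$ reading the $\procA$-nodes in $\leq_\procA$-order (using that the $\procA$-node labels are exactly the $\Alphabet_\procA$-letters). Hence $\lang(M) \wproj_{\Alphabet_\procA}$ is this single word, and it is precisely the unique word emitted along the chain $q_{M,0} \xrightarrow{\emptystring} q_{n_1} \xrightarrow{l(n_1)} \cdots \xrightarrow{l(n_k)} q_{M,f}$ in $M \wproj_\procA$. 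This gives $\lang(M \wproj_\procA) = \lang(M) \wproj_{\Alphabet_\procA}$ for a single MSC; the case $N_\procA = \emptyset$, where the gadget must still connect $q_{M,0}$ to $q_{M,f}$, I would record as a boundary case of the construction.

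Next I would lift this to paths using the compositional structure of MSC concatenation: for a path $v_0 v_1 \cdots$ the generated MSC $M_\pi$ has, for each role, a per-role order that is the concatenation of the per-BMSC orders, so $\lang(M_\pi) \wproj_{\Alphabet_\procA}$ is the concatenation of the per-vertex projected words. On the machine side, the $\emptystring$-edges $q_{v_i,f} \to q_{v_{i+1},0}$ force every maximal run of $B_H$ to follow such a path: it enters each gadget at $q_{v,0}$, traverses its $\procA$-chain, exits at $q_{v,f}$, and hops to the next vertex. This yields a trace-preserving correspondence between maximal runs of $B_H$ and maximal paths of $H$ — finite accepting runs (ending in some $q_{v,f}$ with $v \in V^T$) to finite maximal paths, infinite runs to infinite paths — from which both inclusions follow via the single-MSC analysis.

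The main obstacle is the treatment of infinite paths along which $\procA$ participates only finitely often. Such a path generates an infinite MSC whose projection onto $\Alphabet_\procA$ is a \emph{finite} word, while the corresponding run of $B_H$ is infinite (it keeps taking $\emptystring$-hops between vertices) yet emits only finitely many $\Alphabet_\procA$-labels. I must verify that the FSM semantics makes this finite word a trace of the maximal (infinite) run even though it need not end in a final state, and that these finite-from-infinite words land in $\lang(B_H)$ exactly when the infinite path lies in $\lang(H)$. Tied to this is the requirement that determinisation and minimisation preserve the \emph{mixed} finite/infinite language: by König's lemma a deterministic run on an infinite word exists iff the subset construction never hits the dead state, and a finite word is the trace of a maximal infinite run iff it reaches a subset-state with an infinite $\emptystring$-continuation, so the construction sketched by Alur et al.\ must be checked to respect maximal runs rather than only final-state acceptance. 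The remaining pieces — the concatenation property of $\leq_\procA$ under MSC composition and the empty-$N_\procA$ boundary case — are routine.
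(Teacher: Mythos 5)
Your proposal follows essentially the same route as the paper's proof: a per-vertex equality $\lang(\mu(v)\wproj_\procA)=\lang(\mu(v))\wproj_{\Alphabet_\procA}$ (which the paper declares straightforward and you justify via totality of $\leq_\procA$ on $\procA$'s nodes), lifted to maximal paths/runs through the $\emptystring$-linked gadget structure, using that all linearisations agree on the $\procA$-projection (the paper phrases this via the fact that $\interswap$ never reorders same-role events). The subtleties you isolate — that determinisation/minimisation must preserve the mixed finite/infinite maximal-trace semantics, that infinite paths on which $\procA$ acts only finitely often produce finite traces of infinite runs, and that a gadget with no $\procA$-events needs an $\emptystring$-edge connecting $q_{M,0}$ to $q_{M,f}$ — are genuine points that the paper's three-line proof passes over silently, so your extra care is a refinement of the same argument rather than a different one.
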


The proof is straightforward and can be found in\iftoggle{arxiv}
{
\cref{proof:projection-by-erasure-correct}.
}
{
the technical report \cite{arxiv-version}.
}
From this result and the construction of the canonical candidate implementation, it follows that the projection by erasure admits the same finite language.

\begin{corollary}
\label{cor:erasure-canonical}
Let $H$ be an HMSC, $\procA$ be a role, $H \wproj_\procA$ be its projection by erasure, and
$A_\procA$ be the canonical candidate implementation.
Then, it holds that
$\lang_{\fin}(H \wproj_\procA)
    =
\lang_{\fin}(A_\procA)$.
\end{corollary}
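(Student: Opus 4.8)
The plan is to prove the two languages equal by showing that each coincides with the single language $\lang_{\fin}(H)\wproj_{\Alphabet_\procA}$, the projection onto $\procA$ of the \emph{finite} executions of $H$. The two halves are of quite different character: the equality for $A_\procA$ is purely definitional, whereas the equality for $H\wproj_\procA$ is where \cref{lm:projection-by-erasure-correct} does the work and where the only genuine subtlety lies.

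First I would dispatch the canonical-candidate side directly from \cref{def:canonical-candidate-impl}. The intermediate machine $A'_\procA$ is a prefix tree: its states $q_w$ range over $\pref(\lang(H)\wproj_{\Alphabet_\procA})$, each transition appends a single letter ($\delta'(q_w,x,q_{wx})$), and its final states are exactly $\set{q_w \mid w \in \lang_{\fin}(H)\wproj_{\Alphabet_\procA}}$. Hence a finite word $w$ is accepted iff the (unique) state $q_w$ exists and is final, i.e.\ iff $w \in \lang_{\fin}(H)\wproj_{\Alphabet_\procA}$, giving $\lang_{\fin}(A'_\procA) = \lang_{\fin}(H)\wproj_{\Alphabet_\procA}$. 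Since determinisation and minimisation preserve the language of finite words, I conclude $\lang_{\fin}(A_\procA) = \lang_{\fin}(H)\wproj_{\Alphabet_\procA}$.

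For the erasure side I would invoke \cref{lm:projection-by-erasure-correct}, which already states $\lang(H\wproj_\procA) = \lang(H)\wproj_{\Alphabet_\procA}$ for the full (finite and infinite) language; intersecting both sides with $\Alphabet_\procA^*$ yields $\lang_{\fin}(H\wproj_\procA) = (\lang(H)\wproj_{\Alphabet_\procA}) \inters \Alphabet_\procA^*$. What remains is to identify this set with $\lang_{\fin}(H)\wproj_{\Alphabet_\procA}$. The inclusion $\lang_{\fin}(H)\wproj_{\Alphabet_\procA} \subseteq (\lang(H)\wproj_{\Alphabet_\procA})\inters\Alphabet_\procA^*$ is immediate, as projecting a finite execution always gives a finite word of the projected language.

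The main obstacle is the reverse inclusion, which is the only place the $0$-reachable assumption is needed: a finite $\procA$-word $w$ could in principle arise as the projection of an \emph{infinite} execution $v \in \lang(H)$ in which $\procA$ eventually stops acting, and one must show $w$ is also the projection of some finite execution. I would resolve this structurally on $H\wproj_\procA$: after $\emptystring$-elimination the determinised machine has no $\emptystring$-transitions, so every infinite run consumes infinitely many letters and therefore has an infinite trace. Consequently a finite word lies in $\lang_{\fin}(H\wproj_\procA)$ only if it labels a run ending in a final state, which by the projection-by-erasure construction corresponds precisely to a finite maximal path of $H$ reaching a terminal vertex — and $0$-reachability is exactly what guarantees such a terminating completion exists, so no finite trace is "stranded" on an otherwise infinite path. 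This pins down $\lang_{\fin}(H\wproj_\procA) = \lang_{\fin}(H)\wproj_{\Alphabet_\procA}$, and chaining with the first half gives $\lang_{\fin}(H\wproj_\procA) = \lang_{\fin}(H)\wproj_{\Alphabet_\procA} = \lang_{\fin}(A_\procA)$, as required.
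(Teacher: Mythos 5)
Your first half is exactly the paper's reasoning and is fine: $A'_\procA$ is a prefix tree whose final states are, by \cref{def:canonical-candidate-impl}, precisely the words of $\lang_{\fin}(H)\wproj_{\Alphabet_\procA}$, and determinisation/minimisation preserve this, so $\lang_{\fin}(A_\procA) = \lang_{\fin}(H)\wproj_{\Alphabet_\procA}$. The gap is in your second half. The corollary is stated for an \emph{arbitrary} HMSC $H$: $0$-reachability is not among its hypotheses (in the paper it is a property of global types, used only later, in \cref{thm:erasure-candidate-impl} and \cref{lm:finite-generalises-to-infinite}), so a proof that invokes it establishes a strictly weaker statement. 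Worse, $0$-reachability would not even rescue the route you chose. Your remaining obligation is that every \emph{finite} word of $\lang(H)\wproj_{\Alphabet_\procA}$ is the projection of a \emph{finite} execution, and $0$-reachability only guarantees that any prefix of an execution can be \emph{completed} to a finite one --- the completion may introduce new $\procA$-events, which changes the projection. Concretely, take $H$ with an initial vertex carrying the BMSC of $\msgFromTo{\procB}{\procC}{\val}$ and a self-loop, plus an edge to a terminal vertex carrying $\msgFromTo{\procA}{\procB}{\val}$. This $H$ is $0$-reachable; the infinite execution looping forever projects onto $\procA$ to $\emptystring$, yet every finite execution projects to the nonempty word $\snd{\procA}{\procB}{\val}$. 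Hence $\emptystring \in (\lang(H)\wproj_{\Alphabet_\procA}) \inters \Alphabet_\procA^*$ but $\emptystring \notin \lang_{\fin}(H)\wproj_{\Alphabet_\procA}$: the identification you are after fails, with or without $0$-reachability.

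What this shows is that the erasure half cannot be obtained by quoting the \emph{statement} of \cref{lm:projection-by-erasure-correct} and intersecting with $\Alphabet_\procA^*$; finite projections of infinite executions are exactly the point where that detour breaks down (and where the lemma itself is delicate, since after determinisation such words can only be accepted via final states). The correct argument is the structural one you sketch in passing, used on its own and without any reachability hypothesis: in the determinised, $\emptystring$-free FSM $H\wproj_\procA$, a finite word is accepted only by a run ending in a final state; final states arise, through the subset construction, precisely from the states $q_{\vertexA,f}$ with $\vertexA \in V^T$; and runs of the auxiliary FSM into such states correspond exactly to executions of $H$ along finite maximal paths. This gives $\lang_{\fin}(H\wproj_\procA) = \lang_{\fin}(H)\wproj_{\Alphabet_\procA}$ for every HMSC, which chained with your first half yields the corollary. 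The paper's own one-line proof makes the same silent identification you tried to justify; your proposal has the merit of exposing that step, but the fix is to argue from the construction of $H\wproj_\procA$, not to add an assumption.
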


The projection by erasure can be computed effectively and is deterministic.
Thus, we use it in place of the canonical candidate implementation.
Given a global type, the erasure candidate implementation for its HMSC encoding implements it if it is implementable.

\begin{restatable}{theorem}{erasureCandidateImpl}\label{thm:erasure-candidate-impl}
Let $\GG$ be a global type and $\CSMl{H(\GG) \wproj_\procA}$ be its erasure candidate implementation.
If $\lang_{\fin}(\GG)$ is implementable\xspace \footnote{\emph{Implementability} is lifted to languages as expected.}\negthinspace
, then
$\CSMl{H(\GG) \wproj_\procA}$ is deadlock-free and\linebreak[4] \mbox{$\lang_{\fin}(\CSMl{H(\GG) \wproj_\procA}) = \lang_{\fin}(\GG)$}.\end{restatable}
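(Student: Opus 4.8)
The plan is to route the whole argument through the HMSC encoding and the canonical candidate implementation of Alur et al., invoking \cref{cor:erasure-canonical} only at the very end to swap the canonical candidate for the erasure candidate. The two conclusions to establish are deadlock-freedom and the finite-language equality; both will first be obtained for the canonical candidate and then transported.

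First I would move the hypothesis into the HMSC world. By \cref{thm:correctnessMPSTtoHMSC} we have $\lang(\GG) = \lang(H(\GG))$, hence in particular $\lang_{\fin}(\GG) = \lang_{\fin}(H(\GG))$. Since implementability (lifted to languages) is the existence of a deadlock-free CSM over $\Procs$ and $\MsgVals$ realising the prescribed finite language, and since this is exactly the object demanded of a (safe) realisation of the finite MSC language of an HMSC, unfolding both definitions shows that the assumption ``$\lang_{\fin}(\GG)$ is implementable'' is precisely the hypothesis that $\lang_{\fin}(H(\GG))$ is realisable by a deadlock-free CSM. I would make this correspondence explicit, checking that both sides range over the same class of CSMs and use the same notion of deadlock.

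Next I would apply the canonical candidate implementation result \cite[Thm.~13]{DBLP:journals/tse/AlurEY03}: whenever the finite language of an HMSC admits any deadlock-free implementation, its canonical candidate implementation $\CSM{A}$ already realises it. This yields that $\CSM{A}$ is deadlock-free and $\lang_{\fin}(\CSM{A}) = \lang_{\fin}(H(\GG)) = \lang_{\fin}(\GG)$, discharging both conclusions for the canonical candidate. It then remains to transfer them to the erasure candidate $\CSMl{H(\GG)\wproj_\procA}$.

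The final step is the transfer, which I expect to be the main obstacle. The components of $\CSM{A}$ are the determinised, minimised machines $A_\procA$, and by \cref{cor:erasure-canonical} each agrees with the corresponding projection by erasure on its finite language: $\lang_{\fin}(A_\procA) = \lang_{\fin}(H(\GG)\wproj_\procA)$. Because all component machines are deterministic, I would argue that both the finite CSM-language and deadlock-freedom depend only on the finite languages of the components: a finite execution drives each role deterministically into the state indexed by its local projection, so the reachable configurations coincide for the two CSMs, and at every such configuration the enabled send and receive actions are exactly the letters keeping the relevant local projection a live prefix of the common finite language. Hence the erasure candidate reaches the same configurations with the same enabled transitions, inherits deadlock-freedom, and satisfies $\lang_{\fin}(\CSMl{H(\GG)\wproj_\procA}) = \lang_{\fin}(\CSM{A}) = \lang_{\fin}(\GG)$. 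The delicate point is precisely that deadlock-freedom is a statement about enabledness of transitions, which must be shown insensitive to how $A_\procA$ and $H(\GG)\wproj_\procA$ differ on \emph{infinite} words; I would isolate this as a separate lemma about CSMs built from deterministic components that agree on their finite languages, so that the finite-only scope of \cref{cor:erasure-canonical} suffices.
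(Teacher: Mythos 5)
Your proposal follows essentially the same route as the paper's proof: it first transfers the hypothesis through \cref{thm:correctnessMPSTtoHMSC} to get $\lang_{\fin}(\GG) = \lang_{\fin}(H(\GG))$, then invokes Theorem~13 of Alur et al.~\cite{DBLP:journals/tse/AlurEY03} for the canonical candidate implementation, and finally replaces each $A_\procA$ by $H(\GG)\wproj_\procA$ using \cref{cor:erasure-canonical} together with determinism of the component FSMs. The only difference is that you spell out explicitly why this replacement preserves reachable configurations, enabled transitions, and hence deadlock-freedom — a justification the paper compresses into a single sentence — which is a sound and welcome elaboration rather than a deviation.
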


This result does only account for finite languages so we extend it for infinite sequences.
For both, the proof can be found in\iftoggle{arxiv}
{
\cref{proof:erasure-candidate-impl} and \cref{proof:finite-generalises-to-infinite}.
}
{
the technical report~\cite{arxiv-version}.
}

\begin{restatable}[''Finite implementation`` generalises to infinite language for $0$-reachable global types]{lemma}{finiteGeneralisesToInfinite}
\label{lm:finite-generalises-to-infinite}
Let $\GG$ be a $0$-reachable global type and $\CSM{A}$ be an implementation for~$\lang_{\fin}(\GG)$.Then, it holds that
$\lang_{\inf}(\CSM{A}) = \lang_{\inf}(\GG)$,
and, thus,
$\lang(\CSM{A}) = \lang(\GG)$.
\end{restatable}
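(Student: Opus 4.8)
The plan is to characterise the infinite part of each language as the \emph{limit} of the common finite part. For $K \subseteq \Sigma^*$ write $\lim(K) := \set{w \in \Sigma^\omega \mid \pref(w) \cap \Sigma^* \subseteq \pref(K)}$, the infinite words all of whose finite prefixes are prefixes of $K$. Since $\lang_{\fin}(\CSM{A}) = \lang_{\fin}(\GG)$ by assumption, we have $\pref(\lang_{\fin}(\CSM{A})) = \pref(\lang_{\fin}(\GG))$, hence $\lim(\lang_{\fin}(\CSM{A})) = \lim(\lang_{\fin}(\GG))$; call this common set $\Lambda$. I would establish the two identities $\lang_{\inf}(\GG) = \Lambda$ and $\lang_{\inf}(\CSM{A}) = \Lambda$, which together give $\lang_{\inf}(\CSM{A}) = \lang_{\inf}(\GG)$ and, adding the (equal) finite parts, $\lang(\CSM{A}) = \lang(\GG)$. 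Each identity splits into an easy inclusion and a König-style one.

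For $\lang_{\inf}(\GG) \subseteq \Lambda$ I would invoke $0$-reachability directly: a finite prefix of an infinite word $w \in \lang(\GG)$ is a prefix of a word in $\lang(\GG)$, and the assumption says it completes to a finite word of $\lang(\GG)$, i.e. lies in $\pref(\lang_{\fin}(\GG))$; thus $w \in \Lambda$. The reverse inclusions $\Lambda \subseteq \lang_{\inf}(\CSM{A})$ and $\Lambda \subseteq \lang_{\inf}(\GG)$ are both applications of König's lemma resting on one shared observation: along any run reading a fixed finite word $u$, the contents of each channel $\channel{\procA}{\procB}$ are uniquely determined by $u$ (the messages $\procA$ sent to $\procB$ in $u$ minus those already received, in FIFO order). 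Hence the set of global configurations reachable by a run whose trace is exactly $u$ is \emph{finite}: the channel part is fixed and the control part ranges over the finite product of local state sets. Fixing $w \in \Lambda$ with prefixes $w_n$, I build a layered graph whose layer $n$ consists of the configurations reachable by a run with trace $w_n$, with an edge from layer $n$ to layer $n+1$ whenever one configuration reaches the next by reading the $(n{+}1)$-st letter of $w$ (absorbing any intervening $\emptystring$-transitions into that single step). Each layer is non-empty because $w_n \in \pref(\lang_{\fin}(\CSM{A})) \subseteq \pref(\lang(\CSM{A}))$ is readable, and finite by the observation; König's lemma then yields an infinite run whose trace is $w$, and since infinite runs are maximal, $w \in \lang_{\inf}(\CSM{A})$. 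The same finitely-branching argument on $\semglobalsync(\GG)$, splitting interactions with $\SyncToAsync$ and threading the closure $\interswaplang(\hole)$ through (using \cref{thm:correctnessMPSTtoHMSC} and \cref{lm:csm-closed-interswap}), gives $w \in \lang_{\inf}(\GG)$. Note that this direction uses only readability of the prefixes, not deadlock-freedom.

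The remaining inclusion $\lang_{\inf}(\CSM{A}) \subseteq \Lambda$ is the main obstacle: it states that every finite prefix of an infinite accepted word is already a prefix of a \emph{finite} accepted word, equivalently that every reachable configuration of $\CSM{A}$ can be driven to a final configuration. Deadlock-freedom alone does not suffice — it forbids stuck non-final configurations but permits non-terminating continuations — so I would combine it with $0$-reachability via the finite-language match. Since $\pref(\lang_{\fin}(\CSM{A})) = \pref(\lang_{\fin}(\GG))$ and, by $0$-reachability, $\pref(\lang_{\fin}(\GG))$ already contains every finite prefix of every (finite or infinite) word of $\lang(\GG)$, it suffices to prove the safety inclusion $\pref(\lang(\CSM{A})) \cap \Sigma^* \subseteq \pref(\lang(\GG))$: a configuration reached by reading $u$ then has $u \in \pref(\lang(\GG)) = \pref(\lang_{\fin}(\GG)) = \pref(\lang_{\fin}(\CSM{A}))$, so the prefix of $w$ under consideration completes to a finite accepted word. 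Proving this safety inclusion is the crux, and it is exactly where deadlock-freedom, the determined-channel/FIFO structure, and $0$-reachability must be used jointly; I would attack it by using the $\interswap$-closure of $\lang(\CSM{A})$ to reorder pending receptions so that the (determined) channels can be drained and a final configuration reached, mirroring the termination that $0$-reachability guarantees on the specification side.
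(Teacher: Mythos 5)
Your decomposition is essentially the paper's own proof reorganised, and three of its four pieces are sound. The set $\Lambda$ is exactly the common prefix-limit the paper works with; your inclusion $\lang_{\inf}(\GG) \subseteq \Lambda$ is the paper's appeal to $0$-reachability; and your two K\"onig arguments correspond to the paper's two tree constructions. Indeed, your layered graph of configurations for $\Lambda \subseteq \lang_{\inf}(\CSM{A})$, justified by the observation that channel contents are determined by the trace, is a cleaner rendering of what the paper achieves with witness runs, monotonicity and prophecy variables. Be aware, though, that for $\Lambda \subseteq \lang_{\inf}(\GG)$ the phrase ``threading the closure $\interswaplang(\hole)$ through'' hides real work: a prefix of $w$ is only $\interswap$-related to a trace of $\semglobalsync(\GG)$, so ``the state reached after $w_n$'' is not well defined, and the paper's oracle/monotonicity discussion exists precisely to repair this. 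That part of your sketch is incomplete but repairable along the paper's lines.

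The genuine gap is the one you flag yourself: $\lang_{\inf}(\CSM{A}) \subseteq \Lambda$ is never proved, and the attack you propose cannot be completed, because the hypotheses you allow yourself (deadlock-freedom in the paper's sense, $\lang_{\fin}(\CSM{A}) = \lang_{\fin}(\GG)$, and $0$-reachability of $\GG$) do not entail it. Concretely, take $\GG = 0$, which is $0$-reachable with $\lang_{\fin}(\GG) = \set{\emptystring}$ and $\lang_{\inf}(\GG) = \emptyset$, and the CSM over $\Procs = \set{\procA,\procB}$ in which $A_\procA$ has a final initial state $q_0$ and a non-final state $q_1$ with transitions $q_0 \xrightarrow{\snd{\procA}{\procB}{\val}} q_1$ and $q_1 \xrightarrow{\snd{\procA}{\procB}{\val}} q_1$, while $A_\procB$ is a single final initial state with a self-loop receiving $\val$. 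Every reachable configuration has an enabled send, so there is no deadlock in the paper's sense; the only finite maximal trace is $\emptystring$, so the finite languages agree; yet infinite maximal traces exist, so your crux inclusion (and the lemma's literal conclusion) fails for this CSM. What is actually needed is the stronger property that \emph{every reachable configuration of $\CSM{A}$ can reach a final configuration}; no amount of $\interswap$-reordering or channel-draining can conjure it from the stated assumptions. Note that the paper's own proof places all its weight on this same point: it asserts, for $w \in \lang_{\inf}(\CSM{A})$, that every $u \in \pref(w)$ lies in $\pref(\lang_{\fin}(\GG))$ ``since $\GG$ is $0$-reachable'', which presupposes $\pref(\lang_{\inf}(\CSM{A})) \subseteq \pref(\lang_{\fin}(\CSM{A}))$ --- exactly the reachability-of-final-configurations property, which holds for the erasure candidate implementation obtained via safe realisability but not for an arbitrary CSM satisfying the lemma's hypotheses. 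So your identification of the crux, and your scepticism about deadlock-freedom, are exactly right; the step must be imported as a property of the particular candidate implementation (or as a strengthened hypothesis), not derived as you planned.
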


\begin{corollary}
\label{cor:erasure-implements-global-type}
Let $\GG$ be a $0$-reachable implementable global type.
Then, the erasure candidate implementation $\CSMl{H(\GG) \wproj_\procA}$ implements $\GG$.
\end{corollary}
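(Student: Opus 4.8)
The plan is to obtain this corollary purely by composing the two preceding results, \cref{thm:erasure-candidate-impl} and \cref{lm:finite-generalises-to-infinite}; the corollary carries no new content beyond stitching their hypotheses and conclusions together. Abbreviate the erasure candidate implementation by $\CSM{A} \is \CSMl{H(\GG) \wproj_\procA}$. By definition, $\CSM{A}$ \emph{implements} $\GG$ exactly when $\CSM{A}$ is deadlock-free and $\lang(\CSM{A}) = \lang(\GG)$, so these are the two facts I would establish.

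First I would unfold the hypothesis that $\GG$ is implementable into the precise form consumed by \cref{thm:erasure-candidate-impl}, namely that $\lang_{\fin}(\GG)$ is implementable. Implementability of $\GG$ gives a deadlock-free CSM $\CSM{B}$ with $\lang(\CSM{B}) = \lang(\GG)$; intersecting both sides with $\Alphabet^*$ and using $L_{\fin} = L \inters \Alphabet^*$ yields $\lang_{\fin}(\CSM{B}) = \lang_{\fin}(\GG)$. Thus the deadlock-free CSM $\CSM{B}$ witnesses that the finite language $\lang_{\fin}(\GG)$ is implementable. Applying \cref{thm:erasure-candidate-impl} then delivers that $\CSM{A}$ is deadlock-free and that $\lang_{\fin}(\CSM{A}) = \lang_{\fin}(\GG)$; in particular $\CSM{A}$ is itself an implementation for the finite language $\lang_{\fin}(\GG)$.

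Finally I would feed $\CSM{A}$ into \cref{lm:finite-generalises-to-infinite}, where the $0$-reachability assumption on $\GG$ is used. Since $\CSM{A}$ is a deadlock-free CSM with $\lang_{\fin}(\CSM{A}) = \lang_{\fin}(\GG)$, the lemma already concludes $\lang(\CSM{A}) = \lang(\GG)$ (by matching the infinite languages as well). Combined with the deadlock-freedom obtained in the previous step, this is precisely the statement that $\CSM{A}$ implements $\GG$. There is no real obstacle at this level: all the difficulty resides in the cited results --- the finite-language correctness of projection by erasure (\cref{thm:erasure-candidate-impl}) and the $0$-reachability-based lifting to infinite words (\cref{lm:finite-generalises-to-infinite}). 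The only step needing a moment's care is the first reduction, ensuring that implementability of $\GG$ hands over exactly the finite-language witness the theorem expects, which is immediate from $L_{\fin} = L \inters \Alphabet^*$.
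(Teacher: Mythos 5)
Your proposal is correct and takes essentially the same route as the paper: the corollary is stated there without a separate proof precisely because it is the direct composition of \cref{thm:erasure-candidate-impl} and \cref{lm:finite-generalises-to-infinite} (the paper itself cites exactly this pair when reusing the fact in its decidability proof), which is what you do. Your only added step --- passing from implementability of $\GG$ to implementability of $\lang_{\fin}(\GG)$ via $\lang_{\fin}(\GG) = \lang(\GG) \inters \Alphabet^*$ so that the implementing CSM also witnesses the finite-language hypothesis --- is the correct way to discharge the one detail the paper leaves implicit.
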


So far, we have shown that, if $\GG$ is implementable, the erasure candidate implementation for its HMSC encoding $H(\GG)$ implements $\GG$.
For HMSCs, this is undecidable in general~\cite{DBLP:journals/tcs/Lohrey03}.
We show that, because of their syntactic restrictions on choice, global types fall into the class of globally-cooperative HMSCs for which implementability is decidable.

\begin{definition}[Communication graph \cite{DBLP:journals/jcss/GenestMSZ06}]
Let $M = (\eventnodes, p, f, l, (\leq_\procA)_{\procA \in \Procs})$ be an MSC.
The \emph{communication graph} of $M$ is a directed graph with node $\procA$ for every role $\procA$ that sends or receives a message in $M$ and edges $\procA \to \procB$ if $M$ contains a message from $\procA$ to $\procB$, i.e.,
there is $e \in N$ such that
$p(e) = \procA$ and $p(f(e)) = \procB$.
\end{definition}

It is important that the communication graph of $M$ does not have a node for every role but only the active ones, i.e., the ones that send or receive in $M$.

\begin{definition}[Globally-cooperative HMSCs \cite{DBLP:journals/jcss/GenestMSZ06}]
An HMSC $H = (V,\edges,v^I,V^T\negmedspace,μ)$ is called \emph{globally-cooperative} if for every loop, i.e., $\vertexA_1, \ldots, \vertexA_n$ with $(\vertexA_i, \vertexA_{i+1}) \in \edges$ for every $1 \leq i < n$ and $(\vertexA_n, \vertexA_1) \in \edges$,
the communication graph of $\mu(\vertexA_1)\ldots\mu(\vertexA_n)$ is weakly connected, i.e., all nodes are connected if every edge is considered undirected.
\end{definition}

\noindent
We can check this directly for a global type $\GG$.
It is straight\-forward to define a communication graph for
words from $\AlphSync^*$.
We check it on $\semglobalsync(\GG)$:
for each binder state, we check the communication graph for the shortest trace to every corresponding recursion~state.

\begin{theorem}[Thm.~3.7 \cite{DBLP:journals/tcs/Lohrey03}]
\label{lm:safe-realisability-glob-coop-expspace}
Let $H$ be a globally-cooperative HMSC.
Restricted to its finite language $\lang_{\fin}(H)$, safe realisability is \EXPSPACE-complete.
\end{theorem}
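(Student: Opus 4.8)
The plan is to prove both halves of the completeness statement separately: an \EXPSPACE\ algorithm for the upper bound, and a reduction from a known \EXPSPACE-complete problem for the lower bound. For the upper bound I would not attack safe realisability directly but route it through the canonical candidate implementation, exploiting the fact (already set up in this section) that a single fixed candidate suffices. Concretely, take the erasure candidate $\CSM{A}$ built from the projections $H \wproj_\procA$. By \cref{lm:projection-by-erasure-correct} each component satisfies $\lang(\CSM{A}) \wproj_{\Alphabet_\procA} = \lang_{\fin}(H) \wproj_{\Alphabet_\procA}$, so the candidate has, by construction, exactly the right per-role projections. The decisive observation is that $H$ is safely realisable on its finite language iff this candidate is deadlock-free and $\lang_{\fin}(\CSM{A}) = \lang_{\fin}(H)$. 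Since $\lang_{\fin}(H) \subseteq \lang(\CSM{A})$ always holds (every specified scenario can be replayed component-wise), only the converse inclusion $\lang_{\fin}(\CSM{A}) \subseteq \lang_{\fin}(H)$ and the absence of deadlocks remain to be decided.

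The next step is to bound the complexity of these two checks, and this is exactly where global cooperativity enters. Each projection $H \wproj_\procA$ is obtained by determinising and minimising an NFA of size polynomial in $H$, so each $A_\procA$ is at most singly exponential. A product CSM over such components has unbounded channels, so in general neither inclusion nor deadlock-freedom is decidable — this is the source of Lohrey's undecidability for arbitrary HMSCs. The point of the globally-cooperative restriction is that it forces $\lang(H)$, viewed as a set of linearisations, to be a \emph{recognisable} Mazurkiewicz-trace language: loop-connectedness of the communication graph along every cycle guarantees that the trace-closure of $\lang_{\fin}(H)$ is accepted by a finite automaton of at most exponential size. Consequently the ``implied'' behaviours of $\CSM{A}$ — extra linearisations that the product admits but $H$ does not specify, and configurations from which no accepting continuation exists — can all be captured by finite automata whose size is singly exponential in $H$. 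Testing emptiness of $\lang_{\fin}(\CSM{A}) \setminus \lang_{\fin}(H)$ and detecting reachable deadlocks then become reachability/inclusion questions over automata of exponential size, solvable in space polynomial in that size, i.e. in \EXPSPACE.

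For the lower bound I would give a polynomial-time reduction from an \EXPSPACE-complete problem, for instance the acceptance problem for a Turing machine running in space $2^n$ (equivalently, a corridor-tiling problem with an exponentially wide row). The idea is to engineer a globally-cooperative HMSC whose safe realisability encodes a faithful simulation of the machine: configurations of width $2^n$ are represented succinctly by having roles exchange messages that spell out cell contents together with binary-counter positions, and the branching structure of the HMSC forces consistency between successive configurations. The non-trivial part of the construction is keeping every loop weakly connected so that the instance genuinely lies in the globally-cooperative class, while still making a violation of the simulation manifest as either a missing scenario or a deadlock in the candidate implementation; safe realisability then holds iff the machine rejects (or accepts, depending on the polarity of the encoding).

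I expect the main obstacle to be the upper-bound argument, and specifically the recognisability step: turning the combinatorial global-cooperativity condition into a genuinely finite (exponential-size) automaton for the trace-closure of $\lang_{\fin}(H)$, and then verifying that the inclusion and deadlock checks over the product with $\CSM{A}$ really stay within single-exponential resource bounds rather than sliding into a higher level or into undecidability. The hardness reduction is delicate but routine once the encoding template is fixed; the conceptual content lives entirely in why loop-connectedness buys recognisability, which is precisely the dividing line between this \EXPSPACE\ result and the undecidability of the unrestricted problem.
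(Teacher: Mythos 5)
First, a point of reference: the paper never proves this statement. It is imported verbatim as Theorem~3.7 of Lohrey~\cite{DBLP:journals/tcs/Lohrey03}, so there is no in-paper proof to compare against; the only account given is the summary in \cref{sec:msc-techniques-for-mst}, namely that membership is obtained by reducing safe realisability of globally-cooperative HMSCs to that of $\mathcal{I}$-closed HMSCs, for which the problem is \PSPACE-complete (Lohrey's Theorem~3.6). Measured against that route, your sketch has the right two-part shape, but its decisive step contains a genuine error. You claim that global cooperativity forces $\lang_{\fin}(H)$, ``viewed as a set of linearisations,'' to be accepted by a finite automaton of singly exponential size, so that implied behaviours and deadlocks of the candidate CSM become automata-theoretic questions. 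At the level of send/receive events this is false: the HMSC consisting of a single looping vertex whose BMSC is the lone message $\msgFromTo{\procA}{\procB}{\val}$ is globally-cooperative (its communication graph is one edge, hence weakly connected), yet its event-linearisation language is not regular --- a prefix may contain arbitrarily many unmatched sends, so any acceptor would have to count unbounded channel contents. What loop-connectedness actually buys (via Ochma\'nski-type results in Mazurkiewicz trace theory, and this is what Lohrey exploits) is recognisability one level up, over the alphabet $\AlphSync$ of atomic interactions with the independence relation $\mathcal{I}$: $H$ can be converted into an equivalent $\mathcal{I}$-closed HMSC of exponential size. That conversion, however, is only the reduction; it decides nothing by itself. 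One still needs a decision procedure for safe realisability of $\mathcal{I}$-closed HMSCs --- detecting implied scenarios and deadlocks of a CSM whose FIFO channels remain unbounded --- and that \PSPACE procedure is the technical core of Lohrey's result. Your sentence compressing this into ``reachability/inclusion questions over automata of exponential size'' assumes precisely the part that has to be proven. (Your use of the canonical/erasure candidate via Theorem~13 of Alur et al.~\cite{DBLP:journals/tse/AlurEY03} and \cref{lm:projection-by-erasure-correct} is fine and matches how this paper uses those facts, but it does not remove the difficulty: the components are finite, the product configuration space is not.)

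The hardness half of your proposal is likewise only a template, and there the delicacy is exactly the point you defer: as the paper remarks in the paragraph ``On Lower Bounds for Implementability,'' the known \EXPSPACE-hardness construction exploits the unrestricted branching of HMSCs, and keeping every loop weakly connected while making an incorrect simulation manifest as a missing scenario or a deadlock is where all the work lies, so ``delicate but routine'' undersells it. In short: right skeleton, but the recognisability claim is asserted at the wrong alphabet (where it is false), and the \PSPACE core for the $\mathcal{I}$-closed case --- the actual content of the upper bound --- is missing.
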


\begin{restatable}{lemma}{implementabilityEntailsGlobCoop}
\label{lm:implementability-entails-glob-coop}
Let $\GG$ be an implementable $0$-reachable global type.
Then, its HMSC encoding $H(\GG)$ is globally-cooperative.
\end{restatable}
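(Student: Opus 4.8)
The plan is to argue by contradiction with a desynchronisation (pumping) argument. Suppose $H(\GG)$ is not globally-cooperative, so there is a loop $\vertexA_1,\ldots,\vertexA_n$ for which the communication graph of the BMSC $M_L = \mu(\vertexA_1)\cdots\mu(\vertexA_n)$ is not weakly connected. Since recursion in global types is guarded, $M_L$ contains at least one message, so its communication graph has at least one edge; being disconnected, the roles active in the loop partition into two nonempty sets $P$ and $Q$ that are unions of connected components, so that no message is sent between $P$ and $Q$ inside $M_L$. Using the $0$-reachability assumption, I would fix a finite path that reaches $\vertexA_1$, traverses the loop, and then leaves it to reach the vertex for $0$; let $M$ be the induced (terminating) MSC, and let $M'$ be the MSC of the same path with \emph{one extra full traversal} of the cycle inserted. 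Both are generated by maximal paths, so by \cref{thm:correctnessMPSTtoHMSC} their linearisations lie in $\lang(H(\GG)) = \lang(\GG)$, which equals $\lang(\CSM{A})$ for any implementation $\CSM{A}$ of $\GG$. Fix $w_{M}\in\lang(M)$ and $w_{M'}\in\lang(M')$.

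Next I would assemble a single word $w$ that $\CSM{A}$ must accept but that $\GG$ does not generate, by letting $P$ use its behaviour from $M'$ (two loop iterations) and $Q$ its behaviour from $M$ (one iteration). Concretely, schedule $w$ in four blocks: (i) a valid linearisation of the common entry; (ii) the loop events of the $P$-roles for two iterations; (iii) the loop events of the $Q$-roles for one iteration; (iv) a valid linearisation of the common exit. Because $P$ and $Q$ are disjoint unions of components with no $P$--$Q$ message in $M_L$, the restriction of $M_L$ to $P$ (resp.\ $Q$) is a sub-MSC whose messages are internal, so blocks (ii) and (iii) are individually schedulable and mutually independent (disjoint roles, disjoint channels); since each message is matched and received within its own block, every channel is empty at each block boundary, so $w$ is a legal CSM run. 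Moreover $w\wproj_{\Alphabet_\procA}$ equals $w_{M'}\wproj_{\Alphabet_\procA}$ for $\procA\in P$, equals $w_{M}\wproj_{\Alphabet_\procA}$ for $\procA\in Q$, and coincides for roles inactive in the loop; as $w_{M},w_{M'}\in\lang(\CSM{A})$, every component $A_\procA$ can follow a locally accepting run on its projection, all roles end in final states, and channels are empty. Hence $w\in\lang(\CSM{A})$.

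Finally, $w$ records $P$ performing one more loop iteration than $Q$. Along any path of $H(\GG)$ the number of loop iterations is uniform across all roles active in that loop, and this count is read off each role's own projection. Since every rule of $\interswap$ only swaps events with \emph{distinct} active roles, the projection $w\wproj_{\Alphabet_\procA}$ is invariant under $\interswap$; thus if $w$ were in $\lang(\GG)=\lang(H(\GG))$ it would have to be $\interswap$-equivalent to a linearisation of a \emph{single} path MSC that simultaneously assigns $P$ two iterations and $Q$ one, which is impossible. Therefore $w\notin\lang(\GG)=\lang(\CSM{A})$, contradicting $w\in\lang(\CSM{A})$, and $H(\GG)$ must be globally-cooperative. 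The main obstacle is the middle step: rigorously verifying that the desynchronised schedule $w$ is a genuine accepting CSM run with FIFO channel matching preserved throughout. This is exactly where the absence of $P$--$Q$ communication inside the loop is indispensable, since it is what lets the two components advance through the loop at different rates without ever leaving a pending cross-channel message.
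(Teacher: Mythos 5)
Your first two steps are essentially sound, and they parallel the paper's own strategy: the paper also builds a desynchronised word (there, $w_1.w_2.(w_2\wproj_{\Alphabet_{\mathcal{S}_1}}).w_3$) and argues any implementation must accept it, though it routes that argument through the erasure candidate implementation and Lohrey's Lemma~3.2 rather than your direct ``local projections are accepted $+$ FIFO-valid complete interleaving $\Rightarrow$ accepted by the CSM'' argument, which also works. The genuine gap is the final step, the claim $w \notin \lang(\GG)$. You dispatch it with the assertion that ``along any path of $H(\GG)$ the number of loop iterations is uniform across all roles active in that loop \ldots which is impossible.'' That is not a proof but a restatement of what must be shown: the events of $w$ carry no tag identifying which vertex or iteration they come from, so a path witnessing $w \in \lang(H(\GG))$ need not traverse the loop $\vertexA_1,\ldots,\vertexA_n$ the same number of times for everyone --- indeed it need not traverse that loop at all, since the same interactions may occur in other vertices of $H(\GG)$. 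This is exactly where the paper invests the bulk of its effort: it uses determinacy of $H(\GG)$, a consequence of the global-type syntax (all successors of a vertex have the same unique sender and carry pairwise distinct messages), takes the maximal common prefix of the hypothetical witnessing word with the loop word, shows the first divergence must be a send event of that unique sender, and derives a contradiction by a case analysis on whether that sender lies in $\mathcal{S}_1$.

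The gap is not cosmetic, because your argument proves too much. Nowhere in your final step do you invoke a property that distinguishes $H(\GG)$ from an arbitrary HMSC (guardedness and $0$-reachability hold for many HMSCs), so if the step were valid it would show that \emph{every} implementable HMSC with reachable terminal vertices is globally-cooperative. The paper's own \cref{ex:implementable-not-glob-coop} refutes this: $H_{\operatorname{ing}}$ in \cref{fig:implementable-hmsc-not-glob-coop} is implementable but not globally-cooperative, and precisely your word $w$ --- with $P = \set{\procA,\procB}$ performing one more iteration of the first loop than $Q = \set{\procC,\procD}$ --- \emph{is} in $\lang(H_{\operatorname{ing}})$, generated by a path that compensates via the second loop (which contains only the $\procA\to\procB$ interaction). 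So the impossibility you assert genuinely depends on sender-driven choice and determinism of global types, and any complete proof must use them. A final remark: you flag FIFO-validity of the interleaved schedule as ``the main obstacle''; that part is in fact routine, and the step you breeze through is the hard one.
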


\iftoggle{arxiv}
{
The proof can be found in \cref{proof:imp-entails-glob-coop} }
{
The proof can be found in the technical report~\cite{arxiv-version} }
and is far from trivial.
We explain the main intuition for the proof with the following example where we exemplify why the same result does not hold for HMSCs in general.

\begin{figure}[t]

    \begin{subfigure}[b]{0.42\textwidth}
    \centering
    \def\svgwidth{0.50\textwidth}
    \begingroup \makeatletter \providecommand\color[2][]{\errmessage{(Inkscape) Color is used for the text in Inkscape, but the package 'color.sty' is not loaded}\renewcommand\color[2][]{}}\providecommand\transparent[1]{\errmessage{(Inkscape) Transparency is used (non-zero) for the text in Inkscape, but the package 'transparent.sty' is not loaded}\renewcommand\transparent[1]{}}\providecommand\rotatebox[2]{#2}\newcommand*\fsize{\dimexpr\f@size pt\relax}\newcommand*\lineheight[1]{\fontsize{\fsize}{#1\fsize}\selectfont}\ifx\svgwidth\undefined \setlength{\unitlength}{174.51961728bp}\ifx\svgscale\undefined \relax \else \setlength{\unitlength}{\unitlength * \real{\svgscale}}\fi \else \setlength{\unitlength}{\svgwidth}\fi \global\let\svgwidth\undefined \global\let\svgscale\undefined \makeatother \begin{picture}(1,1.025932)\lineheight{1}\setlength\tabcolsep{0pt}\put(0,0){\includegraphics[width=\unitlength,page=1]{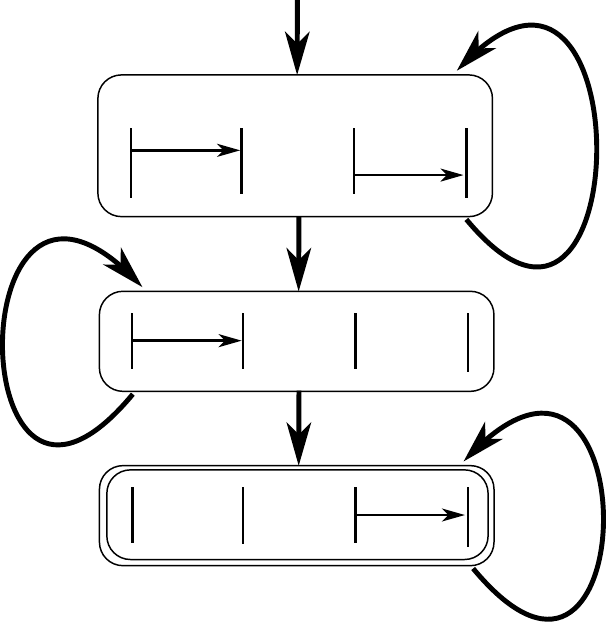}}\put(0.19564343,0.83389756){\color[rgb]{0,0,0}\makebox(0,0)[lt]{\lineheight{1.25}\smash{\begin{tabular}[t]{l}{\footnotesize $\procA$}\end{tabular}}}}\put(0.37454991,0.83321277){\color[rgb]{0,0,0}\makebox(0,0)[lt]{\lineheight{1.25}\smash{\begin{tabular}[t]{l}{\footnotesize $\procB$}\end{tabular}}}}\put(0.55958787,0.83568425){\color[rgb]{0,0,0}\makebox(0,0)[lt]{\lineheight{1.25}\smash{\begin{tabular}[t]{l}{\footnotesize $\procC$}\end{tabular}}}}\put(0.74648787,0.83560048){\color[rgb]{0,0,0}\makebox(0,0)[lt]{\lineheight{1.25}\smash{\begin{tabular}[t]{l}{\footnotesize $\procD$}\end{tabular}}}}\put(0.25709351,0.78742722){\color[rgb]{0,0,0}\makebox(0,0)[lt]{\lineheight{1.25}\smash{\begin{tabular}[t]{l}{\scriptsize $\val$}\end{tabular}}}}\put(0.25702983,0.47227849){\color[rgb]{0,0,0}\makebox(0,0)[lt]{\lineheight{1.25}\smash{\begin{tabular}[t]{l}{\scriptsize $\val$}\end{tabular}}}}\put(0.62090453,0.74491243){\color[rgb]{0,0,0}\makebox(0,0)[lt]{\lineheight{1.25}\smash{\begin{tabular}[t]{l}{\scriptsize $\val$}\end{tabular}}}}\put(0.62571117,0.18504196){\color[rgb]{0,0,0}\makebox(0,0)[lt]{\lineheight{1.25}\smash{\begin{tabular}[t]{l}{\scriptsize $\val$}\end{tabular}}}}\end{picture}\endgroup  \caption{HMSC $H_{\operatorname{ing}}$}
    \label{fig:implementable-hmsc-not-glob-coop}
    \end{subfigure}
    \hfill
    \begin{subfigure}[b]{0.57\textwidth}
    \centering
    \resizebox{0.60\textwidth}{!}{
        \begin{tikzpicture}[sem, node distance=0.7cm and 0.5cm]
\node[state, initial above, initial text = ] (qA0) {$q_{0, \procA}$};
    \node[state, below = of qA0, yshift=1.5mm] (qA1) {$q_{1, \procA}$};
    \node[state, accepting, below = of qA1, yshift=1.5mm] (qA2) {$q_{2, \procA}$};

    \path (qA0) edge node[right, yshift=1mm] {$\snd{\procA}{\procB}{\val}$} (qA1);
    \path (qA1) edge node[right, yshift=1mm] {$\snd{\procA}{\procB}{\val}$} (qA2);
    \path (qA2) edge[loop below] node {$\snd{\procA}{\procB}{\val}$} (qA2);

\node[state, initial above, xshift=15mm, initial text = ] (qB0) {$q_{0, \procB}$};
    \node[state, below = of qB0, yshift=1.5mm] (qB1) {$q_{1, \procB}$};
    \node[state, accepting, below = of qB1, yshift=1.5mm] (qB2) {$q_{2, \procB}$};

    \path (qB0) edge node[right, yshift=1mm] {$\rcv{\procA}{\procB}{\val}$} (qB1);
    \path (qB1) edge node[right, yshift=1mm] {$\rcv{\procA}{\procB}{\val}$} (qB2);
    \path (qB2) edge[loop below] node {$\rcv{\procA}{\procB}{\val}$} (qB2);

\node[state, initial above, xshift=30mm, initial text = ] (qC0) {$q_{0, \procC}$};
\node[state, accepting, below = of qC0, yshift=1.5mm] (qC2) {$q_{1, \procC}$};

    \path (qC0) edge node[right, yshift=1mm] {$\snd{\procC}{\procD}{\val}$} (qC2);
\path (qC2) edge[loop below] node {$\snd{\procC}{\procD}{\val}$} (qC2);

\node[state, initial above, xshift=45mm, initial text = ] (qD0) {$q_{0, \procD}$};
\node[state, accepting, below = of qD0, yshift=1.5mm] (qD2) {$q_{1, \procD}$};

    \path (qD0) edge node[right, yshift=1mm] {$\rcv{\procC}{\procD}{\val}$} (qD2);
\path (qD2) edge[loop below] node {$\rcv{\procC}{\procD}{\val}$} (qD2);
\end{tikzpicture}
     }
    \caption{An implementation for $H_{\operatorname{ing}}$}
    \label{fig:csm-ing}
    \end{subfigure}

    \vspace{-1ex}
    \caption{An implementable HMSC which is not globally-cooperative with its implementation}
    \vspace{-2ex}

\end{figure}

\begin{example}[Implementable HMSC but not globally cooperative]
\label{ex:implementable-not-glob-coop}
HMSC $H_{\operatorname{ing}}$ in \cref{fig:implementable-hmsc-not-glob-coop} is implementable but neither globally-cooperative nor representable with a global type.
In the first loop, $\procA$~sends a message~$\val$ to $\procB$ while $\procC$ sends a message $\val$ to $\procD$ so the communication graph is not weakly connected.
In the second loop, only the interaction between $\procA$ and $\procB$ is specified, while, in the third one, it is only the one between $\procC$ and~$\procD$.
For a variant of the protocol without the second loop, any candidate implementation can always expose an execution with more interactions between $\procA$ and~$\procB$ than the ones between $\procC$ and~$\procD$, due to the lack of synchronisation. Here, the second loop can make up for such executions so any execution has a path in $H_{\operatorname{ing}}$.
The CSM in \cref{fig:csm-ing} implements $H_{\operatorname{ing}}$.
\iftoggle{arxiv}
{
In~\cref{app:explanation-implementable-not-glob-coop},}
{
In the technical report~\cite{arxiv-version},}
we explain in detail why there is a path in $H_{\operatorname{ing}}$ for any trace of the~CSM and how to modify the example not to have final states with outgoing~transitions.
\end{example}

\begin{theorem}
\label{lm:implementability-expspace}
Checking implementability of $0$-reachable global types with sender-driven choice is in \EXPSPACE.
\end{theorem}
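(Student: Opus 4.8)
The plan is to assemble the lemmas already established in this section into a single decision procedure and then check that each step stays within \EXPSPACE. Given a $0$-reachable global type $\GG$ (which is the standing assumption of the statement), the procedure is: first build the HMSC encoding $H(\GG)$; second, test whether $H(\GG)$ is globally-cooperative, and if it is not, report that $\GG$ is not implementable; third, if it is globally-cooperative, run Lohrey's algorithm from \cref{lm:safe-realisability-glob-coop-expspace} to decide safe realisability of the finite language $\lang_{\fin}(H(\GG))$, reporting $\GG$ as implementable exactly when that test succeeds.

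Correctness rests on the equivalence that, for a $0$-reachable $\GG$, the type $\GG$ is implementable if and only if $\lang_{\fin}(\GG)$ is implementable. The forward direction is immediate: a deadlock-free CSM $\CSM{A}$ with $\lang(\CSM{A}) = \lang(\GG)$ also satisfies $\lang_{\fin}(\CSM{A}) = \lang_{\fin}(\GG)$. The backward direction is exactly \cref{lm:finite-generalises-to-infinite}: from a deadlock-free implementation of $\lang_{\fin}(\GG)$, $0$-reachability lifts agreement to infinite words, so the same CSM implements the full language $\lang(\GG)$. Since $\lang_{\fin}(\GG) = \lang_{\fin}(H(\GG))$ follows by intersecting \cref{thm:correctnessMPSTtoHMSC} with $\Alphabet^*$, step three decides precisely whether $\lang_{\fin}(\GG)$ is implementable. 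For the gating step two: if $\GG$ is implementable then $H(\GG)$ is globally-cooperative by \cref{lm:implementability-entails-glob-coop}, so failing that test soundly certifies non-implementability; and if $H(\GG)$ is globally-cooperative, \cref{lm:safe-realisability-glob-coop-expspace} applies and its verdict is conclusive. The positive case can moreover be witnessed concretely by the erasure candidate implementation via \cref{thm:erasure-candidate-impl} and \cref{cor:erasure-implements-global-type}, but the algorithm only needs the decision, not the witness.

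For the complexity bound I would argue that each step fits in \EXPSPACE as a function of $|\GG|$. The encoding $H(\GG)$ is polynomial in $|\GG|$, so an \EXPSPACE bound in $|H(\GG)|$ is an \EXPSPACE bound in $|\GG|$. Global-cooperativity can be checked efficiently on $\semglobalsync(\GG)$, as sketched after the definition (inspecting, for each binder state, the communication graph of the shortest trace to each corresponding recursion state), which is well within the budget. The dominating cost is Lohrey's algorithm, which is \EXPSPACE by \cref{lm:safe-realisability-glob-coop-expspace}. I expect the main obstacle not to lie in this assembly but in the supporting lemmas it invokes --- chiefly \cref{lm:implementability-entails-glob-coop}, which is what makes global-cooperativity a sound necessary condition for implementable global types and is the genuinely hard ingredient. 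Here the only remaining care is to state the finite/infinite equivalence precisely and to confirm that both the encoding and the cooperativity test do not exceed the \EXPSPACE budget.
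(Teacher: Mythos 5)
Your proposal is correct and takes essentially the same route as the paper's proof: construct $H(\GG)$, use \cref{lm:implementability-entails-glob-coop} to soundly reject when $H(\GG)$ is not globally-cooperative, decide safe realisability of $\lang_{\fin}(H(\GG))$ with Lohrey's \EXPSPACE{} algorithm (\cref{lm:safe-realisability-glob-coop-expspace}), and transfer the finite-language verdict to full implementability of $\GG$ via \cref{thm:erasure-candidate-impl} and \cref{lm:finite-generalises-to-infinite}. The only cosmetic differences are that you spell out the finite/infinite implementability equivalence explicitly and use the direct cooperativity check on $\semglobalsync(\GG)$ where the paper invokes the coNP algorithm of Genest et al.; neither changes the argument or the bound.
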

\begin{proof}
Let $\GG$ be a $0$-reachable global type with sender-driven choice.
We construct $H(\GG)$ from~$\GG$ and check if it is globally-cooperative.
For this, we apply the coNP-algorithm by Genest et al.~\cite{DBLP:journals/jcss/GenestMSZ06} which is based on guessing a subgraph and checking its communication graph.
If $H(\GG)$ is not globally cooperative, we know from \cref{lm:implementability-entails-glob-coop} that $\GG$ is not implementable.
If $H(\GG)$ is globally cooperative, we check safe realisability for~$H(\GG)$.
By \cref{lm:safe-realisability-glob-coop-expspace}, this is in \EXPSPACE.
If $H(\GG)$ is not safely realisable, it trivially follows that $\GG$ is not implementable.
If $H(\GG)$ is safely realisable, $\GG$ is implemented by the erasure candidate implementation with \cref{thm:erasure-candidate-impl,lm:finite-generalises-to-infinite}.
\end{proof}

\noindent Thus, the implementability problem for global types with sender-driven~choice is decidable.

\begin{corollary}
Let $\GG$ be a $0$-reachable global type with sender-driven choice.
It is decidable whether $\GG$ is implementable and there is an algorithm to obtain its implementation.
\end{corollary}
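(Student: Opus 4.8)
The plan is to derive this corollary directly from \cref{lm:implementability-expspace} together with the constructive ingredients assembled earlier in this section. Decidability is immediate: \cref{lm:implementability-expspace} shows that checking implementability of a $0$-reachable global type with sender-driven choice lies in \EXPSPACE, and any problem in \EXPSPACE is in particular decidable. So the first half of the statement requires no further argument beyond invoking that theorem.

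For the second half --- exhibiting an algorithm that actually \emph{produces} an implementation --- I would run the decision procedure from the proof of \cref{lm:implementability-expspace} and, in the positive case, return the erasure candidate implementation. Concretely: construct $H(\GG)$, verify global-cooperativeness via the coNP check of Genest et al., and then test safe realisability using \cref{lm:safe-realisability-glob-coop-expspace}. If both checks succeed, output $\CSMl{H(\GG) \wproj_\procA}$. The key point is that, unlike the canonical candidate implementation of \cref{def:canonical-candidate-impl} (whose intermediate tree may be infinite), the projection by erasure is effectively computable and deterministic, so each component FSM can genuinely be written down.

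Correctness of the returned object then follows from \cref{cor:erasure-implements-global-type}: since $\GG$ is $0$-reachable and, in the positive case, implementable, the erasure candidate implementation implements $\GG$, with the finite-language guarantee of \cref{thm:erasure-candidate-impl} lifted to infinite behaviour by \cref{lm:finite-generalises-to-infinite}. Thus the output is a deadlock-free CSM with $\lang(\CSMl{H(\GG) \wproj_\procA}) = \lang(\GG)$, as required.

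There is no genuine obstacle here --- the corollary is a bookkeeping consequence of the surrounding results. If anything, the only point worth care is the seam between the \emph{decision} and the \emph{construction}: the algorithm must not attempt to build the implementation before confirming implementability, since the language-matching guarantee of \cref{cor:erasure-implements-global-type} is conditional on implementability. One should also simply note that all three ingredients --- the HMSC encoding (\cref{thm:correctnessMPSTtoHMSC}), the global-cooperativeness test, and the projection by erasure --- are individually effective. The substantive work (\cref{lm:implementability-entails-glob-coop} and the correctness of projection by erasure) has already been carried out in the preceding development.
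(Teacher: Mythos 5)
Your proposal is correct and follows the paper's own route: decidability is read off from \cref{lm:implementability-expspace}, and the implementation returned in the positive case is exactly the erasure candidate implementation $\CSMl{H(\GG) \wproj_\procA}$, whose effectiveness and correctness (via \cref{thm:erasure-candidate-impl}, \cref{lm:finite-generalises-to-infinite}, and \cref{cor:erasure-implements-global-type}) the paper has already established. Your remark about only invoking the construction after the decision procedure answers positively matches the conditional structure of the paper's argument, so there is nothing to add.
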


\begin{remark}[Progress]
The property deadlock freedom is sometimes also studied as \emph{progress} --- in the sense that a system should never get stuck.
For infinite executions, however, a role could starve in a non-final state by waiting for a message that is never sent~\cite[Sec.~3.2]{DBLP:journals/corr/abs-1203-0780}.
Thus, Castagna et al.~\cite{DBLP:journals/corr/abs-1203-0780} consider a stronger notion of progress (Def.~3.3: live session) which requires that every role could eventually reach a final state.
Our results also apply to this stronger notion of progress, which entails that any sent message can eventually be received.
The notion only requires it to be possible but we can ensure that no role starves in a non-final state in two ways.
First, we can impose a (strong) fairness assumption --- as Castagna et al.~\cite{DBLP:journals/corr/abs-1203-0780}.
Second, we can require that every loop branch contains at least all roles that occur in interactions of any path with which the protocol can~finish.
\end{remark}

\subparagraph*{The Odd Case of Infinite Loops Without Exits.}

In theory, one can think of protocols for which the $0$-Reachability-Assumption (p.\pageref{assumption:can-always-finish}) does not hold.
They would simply recurse indefinitely and can never terminate.
This allows interesting behaviour like two sets of roles that do not interact with each other as the following example shows.

\begin{example}
\label{ex:independent-pairs}
 Consider the following global type:
 $ \GG =
 \mu t. \,
 \msgFromTo{\procA}{\procB}{\val}. \,
 \msgFromTo{\procC}{\procD}{\val}. \,
 t
 $.
 This is basically the protocol that consists only of the first loop of $H_{\operatorname{ing}}$ (\cref{ex:implementable-not-glob-coop}).
 It describes an infinite execution with two pairs of roles that independently send and receive messages.
 This can be implemented in an infinite setting but the loop can never be exited due to the lack of synchronisation, breaking protocol fidelity upon termination.
\end{example}

\subparagraph*{Expressiveness of Local Types.}

Local types
also have a distinct expression for termination:~$0$.
Thus, if one considers the FSM of a local type, every final state has no outgoing transition.
Our proposed algorithm might yield FSMs for which this is not the case.
However, the language of such an FSM cannot be represented as local type since both our construction and FSMs for local types are deterministic.
The latter are also ancestor-recursive, free of intermediate recursion, non-merging and dense (\cref{prop:shape-of-sem-fsms}).
For FSMs from our procedure, this is not the case but the ones without final states with outgoing transitions could possibly be transformed to local types, making subtyping techniques applicable.
One could also study subtyping for FSMs as local specifications.
We leave both for future work.

\subparagraph*{On Lower Bounds for Implementability.}

For general globally-cooperative HMSCs, i.e., that are not necessary the encoding of a global type, safe realisability is \EXPSPACE-hard \cite{DBLP:journals/tcs/Lohrey03}.
This hardness result does not carry over for the HMSC encoding $H(\GG)$ of a global type~$\GG$.
The construction exploits that HMSCs do not impose any restrictions on choice.
Global types, however, require every branch to be chosen by a single sender.
 \section{MSC Techniques for MST Verification}
\label{sec:msc-techniques-for-mst}

In the previous section, we generalised results from the MSC literature to show decidability of the implementability problem for global types from MSTs, yielding an \mbox{EXPSPACE}-algorithm.
In this section, we consider further restrictions on HMSCs to obtain algorithms with better complexity for global types.
First, we transfer the algorithms for $\mathcal{I}$-closed HMSCs, which requires an HMSC not to exhibit certain anti-patterns of communication, to global types.
Second, we explain approaches for HMSCs that introduced the idea of choice to HMSCs and a characterisation of implementable MSC languages.
Third, we present a variant of the implementability problem.
It can make unimplementable global types implementable without changing a protocol's structure. From now on, we may use the term implementability for HMSCs instead of safe realisability.

\paragraph*{$\pmb{\mathcal{I}}$-closed Global Types}

For globally-cooperative HMSCs, the implementability problem is \EXPSPACE-complete.
The membership in \EXPSPACE was shown by reducing the problem to implementability of $\mathcal{I}$-closed HMSCs~\cite[Thm.~3.7]{DBLP:journals/tcs/Lohrey03}.
These require the language of an HMSC to be closed with regard to an independence relation $\mathcal{I}$, where, intuitively, two interactions are independent if there is no role which is involved in both.
Implementability for $\mathcal{I}$-closed HMSCs is \PSPACE-complete~\cite[Thm.~3.6]{DBLP:journals/tcs/Lohrey03}.
As for the \EXPSPACE-hardness for globally-cooperative HMSCs, the \PSPACE-hardness exploits features that cannot be modelled with global types and there might be algorithms with better worst-case complexity.

We adapt the definitions~\cite{DBLP:journals/tcs/Lohrey03} to the MST setting.
These consider atomic BMSCs, which are BMSCs that cannot be split further.
With the HMSC encoding for global types, it is straightforward that atomic BMSCs correspond to individual interactions for global types.
Thus, we define the independence relation $\mathcal{I}$ on the alphabet~$\AlphSync$.

\begin{definition}[Independence relation $\mathcal{I}$]
\label{def:independence-relation}
We define the independence relation $\mathcal{I}$ on
$\AlphSync$\emph{:}
\vspace{-2ex}
\[
 \mathcal{I}
    \is
 \set{(\msgFromTo{\procA}{\procB}{\val},
       \msgFromTo{\procC}{\procD}{\val'})
       \mid
       \set{\procA, \procB} \inters \set{\procC, \procD} = \emptyset)}.
\]
We lift this to words, \ie
$
 \set{(u.\,
       x_1.\, x_2.\,
       w,
       u.\,
       x_2.\, x_1.\,
       w)
       \mid
       u, w \in \AlphSync^*
       \text{ and }
       (x_1, x_2) \in \mathcal{I}
     }
$,
and obtain an equivalence relation $\equiv_{\mathcal{I}}$
as its transitive and reflexive closure.
We define its closure for language $L \subseteq \AlphSync^*$\emph{:}
$
 \indeprellang(L)
    \is
 \set{
    u \in \AlphSync^* \mid
    \exists
    w \in L \text{ with }
    u \equiv_\mathcal{I} w
 }
$.
\end{definition}

\begin{definition}[$\mathcal{I}$-closed global types]
Let $\GG$ be a global type $\GG$.
We say $\GG$ is $\mathcal{I}$-closed if
$\lang_{\fin}(\semglobalsync(\GG)) = \indeprellang(\lang_{\fin}(\semglobalsync(\GG)))$.
\end{definition}

Note that $\mathcal{I}$-closedness is defined on the state machine $\semglobalsync(\GG)$ of $\GG$ with alphabet $\AlphSync$ and not on its semantics $\lang(\GG)$ with alphabet $\AlphAsync$.

\begin{example}
The global type $\GG_{\TBPOR}$ is $\mathcal{I}$-closed.
Buyer $\buyerA$ is involved in every interaction.
Thus, for every two consecutive interactions, there is a role that is involved in both.
\end{example}

\begin{algorithm}[Checking if $\GG$ is $\mathcal{I}$-closed]
\label{alg:checking-I-closedness}
Let $\GG$ be a global type $\GG$.
We construct the state machine $\semglobalsync(\GG)$.
We need to check every consecutive occurrence of elements from $\AlphSync$ for words from $\lang(\semglobalsync(\GG))$.
For binder states,
incoming and outgoing transition labels are always $\emptystring$.
This is why we slightly modify the state machine but preserve its language.
We remove all variable states and rebend their only incoming transition to the state their only outgoing transition leads~to.
In addition, we merge binder states with their only successor.
For every state $q$ of this modified state machine, we consider the labels $x, y \in \AlphSync$ of every combination of incoming and outgoing transition of $q$.
We check if $x \equiv_{\mathcal{I}} y$.
If this is true for all $x$ and $y$, we return \emph{true}.
If~not, we return \emph{false}.
\end{algorithm}

\vspace{-1ex}
\begin{restatable}{lemma}{correctnessCheckingIClosedness}\label{lm:correctness-checking-I-closedness}
A global type $\GG$ is $\mathcal{I}$-closed iff \cref{alg:checking-I-closedness} returns \emph{true}.
\end{restatable}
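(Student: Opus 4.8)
The plan is to reduce the global closure property to a purely \emph{local} check by proving the following pivotal characterisation and then reading it off the modified automaton: $\GG$ is $\mathcal{I}$-closed if and only if no word of $\lang_{\fin}(\semglobalsync(\GG))$ contains two \emph{adjacent} independent interactions. Given this, I would connect ``adjacent independent pair inside a word'' with ``a state whose incoming and outgoing labels are independent'', which is exactly what \cref{alg:checking-I-closedness} inspects. So the proof splits into (i) justifying the preprocessing and the state/word correspondence, and (ii) proving the characterisation in both directions.

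First I would argue that the preprocessing is harmless and yields a convenient automaton. By \cref{prop:shape-of-sem-fsms} the only $\emptystring$-transitions of $\semglobalsync(\GG)$ are the forward edges out of binder states and the backward edges out of variable states; removing variable states by rebending and merging each binder with its body therefore eliminates \emph{all} $\emptystring$-transitions while preserving $\lang_{\fin}(\semglobalsync(\GG))$ (a routine $\emptystring$-elimination). Crucially, the resulting automaton is deterministic: within any choice $\sum_{i} \msgFromTo{\procA}{\procB_i}{\val_i}$ the labels are pairwise distinct by the uniqueness requirement, and the merging never combines the \emph{outgoing} transitions of two different subterms, so every state still has pairwise distinct outgoing labels and, moreover, all its outgoing interactions share one sender. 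Determinism gives each finite word a unique accepting run $q_0 \xrightarrow{a_1} q_1 \cdots \xrightarrow{a_n} q_n$; then each adjacent pair $a_i a_{i+1}$ meets at $q_i$ as its (incoming, outgoing) label pair, and conversely every state (taken reachable and co-reachable to a final state) realises each of its (incoming, outgoing) label pairs as an adjacent pair of some accepted word.

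For the characterisation, the easy direction is: if every adjacent pair in every word of $L \is \lang_{\fin}(\semglobalsync(\GG))$ is dependent, then no one-step swap of \cref{def:independence-relation} applies to any $w \in L$; since $\equiv_{\mathcal{I}}$ is the reflexive–transitive closure of such swaps, any $u \equiv_{\mathcal{I}} w$ with $w \in L$ forces $u = w$, whence $\indeprellang(L) = L$. For the hard direction I argue the contrapositive: suppose some state $q$ has an incoming label $x = \msgFromTo{\procA}{\procB}{\val}$ (from its unique predecessor $p$) and an outgoing label $y = \msgFromTo{\procC}{\procD}{\val'}$ with $\set{\procA,\procB} \cap \set{\procC,\procD} = \emptyset$. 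Pick $u$ reaching $p$ and $v$ completing from the $y$-successor to a final state, so $w \is u\,x\,y\,v \in L$. The swapped word $w' \is u\,y\,x\,v$ lies in $\indeprellang(L)$, but $w' \notin L$: by determinism, reading $u$ lands in $p$, whose outgoing interactions all have sender $\procA$, while $y$ has sender $\procC \neq \procA$; hence $p$ has no $y$-transition and $w'$ admits no run. Thus $\indeprellang(L) \neq L$ and $\GG$ is not $\mathcal{I}$-closed, which is precisely the case in which the algorithm reports \emph{false}.

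The main obstacle is the hard direction, and specifically the passage from a \emph{local} anomaly at a single state to a \emph{global} witness word outside the language. This is where two features of global types are indispensable and where the analogous statement would fail for arbitrary automata or HMSCs: determinism of the $\emptystring$-free automaton, so that the prefix $u$ pins down a unique state, and sender-driven choice, so that this unique state simply cannot fire the letter $y$ that the swap tries to move forward. A secondary technical point I would handle explicitly is co-reachability: the witness $v$ exists only if the $y$-successor can still reach a final state, so the per-state check must be taken over states co-reachable to a final state (equivalently, over a trimmed automaton); for $0$-reachable global types every state is co-reachable, so no trimming is needed and the check over ``every state'' is exactly correct.
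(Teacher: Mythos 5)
Your proposal is correct and takes essentially the same route as the paper's own proof: both justify the $\emptystring$-elimination preprocessing (language preservation plus determinism), prove the easy direction by observing that when no state has an independent (incoming, outgoing) label pair, no one-step $\mathcal{I}$-swap applies to any trace and hence the closure adds nothing, and prove the hard direction by contraposition, threading a witness word through the offending state and using determinism together with the unique-sender property of global-type choice to show the swapped word admits no run. Your explicit handling of co-reachability (needed so the witness run can be completed to a final state) is in fact slightly more careful than the paper, which tacitly assumes such a finite maximal run exists.
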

\vspace{-1ex}

The proof can be found in\iftoggle{arxiv}
{
\cref{proof:correctness-checking-I-closedness}.
}
{
the technical report~\cite{arxiv-version}.
}
This shows that the presented algorithm can be used to check $\mathcal{I}$-closedness.
The algorithm considers every state and all combinations of transitions leading to and from~it.

\begin{proposition}
For global type $\GG$, checking if $\GG$ is $\mathcal{I}$-closed is in $\bigO{\card{\GG}^2}$.
\end{proposition}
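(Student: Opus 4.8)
The plan is to bound, in turn, the size of the automaton that \cref{alg:checking-I-closedness} builds, the cost of each elementary check performed on it, and the number of such checks; multiplying these together gives the claimed bound. The proposition does not require the \emph{correctness} of the algorithm (that is \cref{lm:correctness-checking-I-closedness}), only that it can be run in quadratic time.

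First I would bound the size of $\semglobalsync(\GG)$. By \cref{def:language-global-mst} its states are the indexed syntactic subterms of $\GG$, so there are $\bigO{\card{\GG}}$ of them. Its transitions consist of one labelled transition per branch of each choice subterm, together with one $\emptystring$-transition per binder $\mu t$ and one per recursion-variable occurrence $t$; summing the branch counts over all choice subterms is bounded by the number of message prefixes in $\GG$, so the total number of transitions is also $\bigO{\card{\GG}}$. The modification the algorithm performs---deleting each variable state and rebending its single incoming edge, and merging each binder state with its unique successor---creates no new states and does not increase the number of edges, so the modified machine still has $\bigO{\card{\GG}}$ states and edges and can be produced by a single linear-time pass over $\semglobalsync(\GG)$.

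Next I would account for the work done on this machine. Each elementary step compares an incoming label $x \in \AlphSync$ with an outgoing label $y \in \AlphSync$ via the relation $\equiv_{\mathcal{I}}$; since every element of $\AlphSync$ names at most two roles from $\Procs$ and one message from $\MsgVals$, resolving this comparison inspects only a constant number of names, hence runs in $\bigO{1}$. For a state $q$ with in-degree $d^-(q)$ and out-degree $d^+(q)$ the algorithm inspects every (incoming, outgoing) pair, i.e.\ $d^-(q)\cdot d^+(q)$ pairs. Writing $E$ for the edge set of the modified machine, the total number of checks is
\[
  \sum_{q} d^-(q)\cdot d^+(q)
  \;\leq\;
  \Bigl(\sum_{q} d^-(q)\Bigr)\Bigl(\sum_{q} d^+(q)\Bigr)
  \;=\;
  \card{E}^2
  \;=\;
  \bigO{\card{\GG}^2},
\]
using $\sum_q d^-(q) = \sum_q d^+(q) = \card{E} = \bigO{\card{\GG}}$. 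Combining the $\bigO{1}$ cost per check with the linear construction cost yields an overall running time of $\bigO{\card{\GG}^2}$.

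The argument is routine rather than deep, and the only point needing a little care is the final inequality: the product-of-sums bound is valid because all degrees are non-negative, so that $\sum_q d^-(q)d^+(q)$ is dominated by the full sum $\sum_{q,q'} d^-(q)d^+(q')$ of all cross terms. I expect this step to be the main thing to get right, since a careless estimate might try to bound the degrees individually; instead no structural property of $\semglobalsync(\GG)$ beyond its linear number of edges is needed, and the quadratic factor arises solely from pairing incoming with outgoing transitions state-by-state.
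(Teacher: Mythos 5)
Your proof is correct and follows essentially the same route as the paper, whose justification is just the terse observation that the algorithm ``considers every state and all combinations of transitions leading to and from it'' over the linear-size modified $\semglobalsync(\GG)$, with each independence check taking constant time. Your explicit product-of-sums bound $\sum_q d^-(q)\,d^+(q) \leq \card{E}^2 = \bigO{\card{\GG}^2}$ simply makes rigorous what the paper leaves implicit, and it is consistent with the paper's follow-up example showing that a single binder state can indeed force $\Theta(\card{\GG}^2)$ pair checks.
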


The tree-like shape of $\semglobalsync(\GG)$ might suggest that this check can be done in linear time.
However, this example shows that recursion can lead to a quadratic number of~checks.

\begin{example}
Let us consider the following global type for some $n \in \Nat$.

\begin{minipage}{0.41\textwidth}
\hspace{-5ex}
{
\scriptsize
$
 \mu t.\,
 +
 \begin{cases}
  \msgFromTo{\procA}{\procB_0}{\val_0}.\,
  \msgFromTo{\procB_0}{\procC_0}{\val_0}.\,
  \msgFromTo{\procC_0}{\procD_0}{\val_0}.\, 0 \\
  \msgFromTo{\procA}{\procB_1}{\val_1}.\,
  \msgFromTo{\procB_1}{\procC_1}{\val_1}.\,
  \msgFromTo{\procC_1}{\procD_1}{\val_1}.\, t \\
  \quad { \vdots} \\
  \msgFromTo{\procA}{\procB_n}{\val_n}.\,
  \msgFromTo{\procB_n}{\procC_n}{\val_n}.\,
  \msgFromTo{\procC_n}{\procD_n}{\val_n}.\, t
 \end{cases}
$
}
\end{minipage}
\begin{minipage}{0.54\textwidth}
\vspace{1ex}
It is obvious that
$
  (\msgFromTo{\procA}{\procB_i}{\val_i},
  \msgFromTo{\procB_i}{\procC_i}{\val_i})
  \notin \mathcal{I}
$
and
$
  (\msgFromTo{\procB_i}{\procC_i}{\val_i},
  \msgFromTo{\procC_i}{\procD_i}{\val_i})
  \notin \mathcal{I}
$
for every~$i$.
Because of the recursion, we need to check
if
\mbox{$
  (\msgFromTo{\procC_i}{\procD_i}{\val_i},
  \msgFromTo{\procA}{\procB_j}{\val_j})
$}
is in
$\mathcal{I}$ for every $0 \neq i \neq j$.
This might lead to a quadratic number of checks.
\end{minipage}
\end{example}

\noindent If a global type $\GG$ is $\mathcal{I}$-closed, we can apply the results for its $\mathcal{I}$-closed HMSC encoding~$H(\GG)$, for which checking implementability is in \mbox{PSPACE}.
With \cref{cor:erasure-implements-global-type}, the projection by erasure implements~$\GG$.

\begin{corollary}
Checking implementability of $0$-reachable, $\mathcal{I}$-closed global types with sender-driven choice is in \mbox{PSPACE}.
\end{corollary}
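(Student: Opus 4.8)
The plan is to reuse, almost verbatim, the recipe from the proof of \cref{lm:implementability-expspace}, but to replace the realisability test for globally-cooperative HMSCs (which costs \EXPSPACE) by the sharper test available for $\mathcal{I}$-closed HMSCs. Concretely, given a $0$-reachable, $\mathcal{I}$-closed global type $\GG$ with sender-driven choice, I would first build its HMSC encoding $H(\GG)$, which is of polynomial size in $\card{\GG}$; then invoke the \PSPACE\ algorithm for safe realisability of $\mathcal{I}$-closed HMSCs~\cite[Thm.~3.6]{DBLP:journals/tcs/Lohrey03} on $H(\GG)$; and finally read off implementability of $\GG$ from its answer. If one does not wish to rely on the promise that $\GG$ is $\mathcal{I}$-closed, the precondition itself can be verified in polynomial time by \cref{lm:correctness-checking-I-closedness} together with the quadratic-time bound of the preceding proposition, which stays comfortably within the \PSPACE\ budget.

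The key connecting step is to show that $\mathcal{I}$-closedness of $\GG$ — which \cref{def:independence-relation} phrases as $\lang_{\fin}(\semglobalsync(\GG)) = \indeprellang(\lang_{\fin}(\semglobalsync(\GG)))$ over the synchronous alphabet $\AlphSync$ — transfers to $\mathcal{I}$-closedness of $H(\GG)$ in Lohrey's sense. For this I would use that each atomic BMSC of $H(\GG)$ corresponds to exactly one interaction, i.e.\ one letter of $\AlphSync$, so that the sequences of atomic BMSCs traced by the maximal finite paths of $H(\GG)$ are precisely the words of $\lang_{\fin}(\semglobalsync(\GG))$. Closure of the latter language under $\equiv_{\mathcal{I}}$ is then literally closure of $H(\GG)$ under swapping adjacent independent atomic BMSCs, which is exactly the HMSC notion of $\mathcal{I}$-closedness. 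The hard part will be this transfer: it is not a purely formal rewriting but requires unfolding the encoding $H(\hole)$ and checking that its atomic-BMSC structure matches the per-interaction structure of $\semglobalsync(\GG)$, in particular that the independence relation on $\AlphSync$ agrees with independence of the corresponding atomic BMSCs.

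Granting the transfer, correctness of the reduction follows as in \cref{lm:implementability-expspace}. Since $\lang(\GG) = \lang(H(\GG))$ by \cref{thm:correctnessMPSTtoHMSC}, a deadlock-free CSM realises $H(\GG)$ iff it implements $\GG$; restricting to finite words, $\lang_{\fin}(\GG) = \lang_{\fin}(H(\GG))$, and an implementation of $\GG$ yields in particular an implementation of $\lang_{\fin}(\GG)$. Hence, if the \PSPACE\ test reports that $\lang_{\fin}(H(\GG))$ is not safely realisable, then $\lang_{\fin}(\GG)$ is not implementable and so $\GG$ is not implementable. Conversely, if it reports safe realisability, then $\lang_{\fin}(\GG)$ is implementable, so by \cref{thm:erasure-candidate-impl} the erasure candidate implementation $\CSMl{H(\GG) \wproj_\procA}$ is deadlock-free and agrees with $\GG$ on finite words; invoking $0$-reachability through \cref{lm:finite-generalises-to-infinite} (equivalently \cref{cor:erasure-implements-global-type}) lifts this to the full, possibly infinite, language, so $\GG$ is implementable.

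Finally, I would account for the complexity. Constructing $H(\GG)$ is polynomial, and the only potentially expensive step, the realisability test of \cite[Thm.~3.6]{DBLP:journals/tcs/Lohrey03}, runs in \PSPACE\ in the size of $H(\GG)$ and hence in $\card{\GG}$; the optional $\mathcal{I}$-closedness check is polynomial. The erasure candidate implementation is needed only to certify correctness in the positive case and need not be constructed within the space bound, so the overall decision procedure is in \PSPACE, as claimed.
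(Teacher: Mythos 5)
Your proposal is correct and follows essentially the same route as the paper: encode $\GG$ as $H(\GG)$, apply Lohrey's \PSPACE{} test for safe realisability of $\mathcal{I}$-closed HMSCs to the finite language, and lift the answer to the full (possibly infinite) language via the erasure candidate implementation together with $0$-reachability. The transfer step you single out as the hard part is precisely what the paper's definition of $\mathcal{I}$-closedness for global types was designed to make immediate, since the atomic BMSCs of $H(\GG)$ correspond one-to-one to the interactions in $\AlphSync$.
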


\begin{example}\label{ex:implementable-but-not-I-closed}
This implementable global type is not $\mathcal{I}$-closed:
{
$
\msgFromTo{\procA}{\procB}{\val}.\,
\msgFromTo{\procC}{\procD}{\val}.\, 0 \;.
$
}
\end{example}

\paragraph*{Detecting Non-local Choice in HMSCs}

For HMSCs, there are no restrictions on branching.
Similar to choice for global types, the idea of imposing restrictions on choice was studied for HMSCs~\cite{DBLP:conf/tacas/Ben-AbdallahL97,DBLP:conf/fase/Muccini03,DBLP:conf/fase/MooijGR05,DBLP:conf/sdl/Helouet01,DBLP:journals/jcss/GenestMSZ06}.
We refer to \cite{DBLP:conf/concur/MajumdarMSZ21} for an overview.
Here, we focus on results that seem most promising for developing algorithms to check implementability of global types with better worst-case complexity.
The work by Dan et al.~\cite{DBLP:conf/sefm/DanHC10} centers around the idea of non-local choice.
Intuitively, non-local choice yields scenarios that make it impossible to implement the language.
In fact, if a language is not implementable, there is some non-local choice.
Thus, checking implementability amounts to checking non-local choice freedom.
For this definition, they showed insufficiency of Baker's condition~\cite{DBLP:conf/sigsoft/BakerBJKTMB05} and reformulated the closure conditions for safe realisability by Alur et al.~\cite{DBLP:journals/tse/AlurEY03}.
In particular, they provide a definition that is based on projected words of a language in contrast to explicit choice.
While it is straightforward to check their definition for finite collections of $k$ BMSCs with $n$ events in $\bigO{k^2 \cdot \card{\Procs} + n \cdot \card{\Procs}}$,
it is unclear how to check their condition for languages with infinitely many elements.
The design of such a check is far from trivial as their definition does not give any insight about local behaviour and their algorithm heavily relies on the finite nature of finite collections of BMSCs.

\paragraph*{Payload Implementability}

A deadlock-free CSM implements a global type if their languages are precisely the same.
In the HMSC domain, a variant of the implementability problem has been studied.
Intuitively, it allows to add fresh data to the payload of an existing message and protocol fidelity allows to omit the additional payload data.
This allows to add synchronisation messages to existing interactions and can make unimplementable global types implementable while preserving the structure of the protocol.
It can also be used if a global type is rejected by a projection operator or the run time of the previous algorithms is not acceptable.

\begin{definition}[Payload implementability]
Let $L$ be a language with message alphabet~$\MsgVals_1$.
We say that $L$ is \emph{payload implementable} if
there is a message alphabet $\MsgVals_2$
for a deadlock-free CSM~$\CSM{A}$ with
$A_\procA$ over
$\set{ \snd{\procA}{\procB}{\val}, \rcv{\procB}{\procA}{\val} \mid \procB \in \Procs,\; \val \in \MsgVals_1 \times \MsgVals_2}$
such that its language is the same when projecting onto the message alphabet $\MsgVals_1$, i.e.,
$
 \interswaplang(L)
    =
 \lang(\CSM{A}) \wproj_{\MsgVals_1}
$,
where
$
    (\snd{\procA}{\procB}{(\val_1, \val_2)})
        \wproj_{\MsgVals_1}
    \is
    \snd{\procA}{\procB}{\val_1}
    \text{ and }
    (\rcv{\procA}{\procB}{(\val_1, \val_2)})
        \wproj_{\MsgVals_1}
    \is
    \rcv{\procA}{\procB}{\val_1}
$ and is lifted to words and languages as expected.
\end{definition}

The finite language $\lang_{\fin}(H)$ of a \emph{local} HMSC $H$ is always payload implementable with a deadlock-free CSM of linear size.

\begin{definition}[Local HMSCs \cite{DBLP:journals/jcss/GenestMSZ06}]
Let
$H = (V, \edges, \vertexA^I\negmedspace, V^T\negmedspace\!, \mu)$
be an HMSC.
We call $H$ \emph{local} if
$\mu(\vertexA^I)$ has a unique minimal event and
there is a function $\operatorname{root} \from V \to \Procs$
such that
for every $(\vertexA, \vertexB) \in \edges$, it holds that $\mu(\vertexB)$ has a unique minimal event $e$ and $e$ belongs to $\operatorname{root}(\vertexA)$, i.e.,
for
$\mu(\vertexB) = (\eventnodes, p, f, l, (\leq_\procA)_{\procA \in \Procs }) $, we have that
$p(e) = \operatorname{root}(\vertexA)$ and $e \leq e'$ for every $e' \in \eventnodes$.
\end{definition}

\begin{proposition}[Prop.~21~\cite{DBLP:journals/jcss/GenestMSZ06}]
For any local HMSC $H\negthinspace$, $\lang_{\fin}(H)$ is payload implementable.
\end{proposition}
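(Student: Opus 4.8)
The plan is to use locality to expose every control-flow choice of $H$ to all participating roles through the extra payload, thereby removing the non-local choice that otherwise obstructs safe realisability; the implementation will be (essentially) the erasure candidate implementation applied to an annotated copy of $H$. The structural fact that drives everything is the following consequence of locality. Fix an edge $(\vertexA, \vertexB) \in \edges$. By assumption $\mu(\vertexB)$ has a unique minimal event $e$ for $\leq_{\mu(\vertexB)}$, performed by $\operatorname{root}(\vertexA)$. A receive $f(s)$ always has its matching send $s$ as a predecessor, so the minimal event $e$ must be a \emph{send}, and every event of $\mu(\vertexB)$ is $\geq e$. A short covering-chain argument then shows that for every role $\procB \neq \operatorname{root}(\vertexA)$ participating in $\mu(\vertexB)$, the first event of $\procB$ in $\mu(\vertexB)$ is a \emph{receive}: the last covering step that reaches $\procB$ cannot be a process-order step (that would contradict first-ness among $\procB$'s events), hence it is a message reception. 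Thus at every vertex a single role, $\operatorname{root}(\vertexA)$, initiates the successor BMSC and fixes the choice of successor, while every other participating role is told which successor was chosen by the first message it receives.

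Next I would turn this into payload. Set $\MsgVals_2 \is V$ and form the annotated HMSC $\tilde H$ with the same graph $(V, \edges, \vertexA^I, V^T)$ but message alphabet $\MsgVals_1 \times \MsgVals_2$, where every message inside $\mu(\vertexB)$ carries second component $\vertexB$. Then $\tilde H$ is again local with the same $\operatorname{root}$, and $\lang_{\fin}(\tilde H) \wproj_{\MsgVals_1} = \lang_{\fin}(H)$. Without loss of generality I would first trim $H$ (hence $\tilde H$) so that every vertex reaches some vertex of $V^T$; this removes only vertices contributing nothing to $\lang_{\fin}(H)$. I take as candidate the erasure candidate implementation $\CSMl{\tilde H \wproj_{\procA}}$. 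By the observation above each role's guessing is in fact unnecessary: when $\procA = \operatorname{root}(\vertexA)$ must leave $\vertexA$ it performs an internal send choice and writes the chosen $\vertexB$ into the payload, and otherwise $\procA$ re-enters the protocol only through a receive whose payload names the current vertex. Hence the projections are already deterministic, determinisation causes no blow-up, and the CSM has size $\bigO{\card{H}}$.

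Finally I would establish $\interswaplang(\lang_{\fin}(H)) = \lang(\CSMl{\tilde H \wproj_{\procA}}) \wproj_{\MsgVals_1}$ together with deadlock-freedom. The inclusion from $\tilde H$ into the CSM is routine: every finite path of $\tilde H$ is executable by the annotated projections, and by \cref{lm:projection-by-erasure-correct} role $\procA$ realises exactly $\lang(\tilde H) \wproj_{\Alphabet_\procA}$. The main obstacle is the converse together with deadlock-freedom, i.e.\ showing that every reachable configuration of the CSM corresponds to a prefix of a genuine path of $\tilde H$ and can always be completed. This is exactly where locality and the annotations pay off: a role reading payload $\vertexB$ is forced onto branch $\vertexB$, and since the unique initiator $\operatorname{root}(\vertexA)$ is the only role permitted to fix the successor, the local views can never disagree about which edge is taken, so a waiting receiver always finds the expected message at its channel head. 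Completability uses the trimming step — from the current vertex a terminal vertex is reachable, so the initiators can drive any configuration to a final one with empty channels — which rules out deadlocks. Combining both inclusions with deadlock-freedom and projecting away $\MsgVals_2$ yields payload implementability of $\lang_{\fin}(H)$ by a linear-size, deadlock-free CSM.
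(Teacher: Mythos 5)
Your structural lemma is correct as far as it goes: for an edge $(v,u)$, the unique minimal event of $\mu(u)$ is a send by $\operatorname{root}(v)$, and every other participant of $\mu(u)$ begins with a receive, so the \emph{participants} of a block learn the current vertex from their first reception. The gap is the jump from this to ``the local views can never disagree about which edge is taken.'' Locality, as defined in the paper (and as you use it), constrains only the minimal events of successor blocks; it does \emph{not} force $\operatorname{root}(v)$ to have any event in $\mu(v)$, nor anywhere along the path by which the run reached $v$. So the one role entitled to choose the successor of $v$ may never have received any information that the run is at $v$ at all, and your payload annotation cannot help, because payload travels only on messages that this role actually receives. Concretely, take the paper's own example $\mu t.\,(\msgFromTo{\procA}{\procB}{\val_1}.\,\msgFromTo{\procC}{\procD}{\val_2}.\,t \;+\; \msgFromTo{\procA}{\procB}{\val_3}.\,0)$: its HMSC encoding is local (the root of the $\msgFromTo{\procA}{\procB}{\val_1}$-vertex is $\procC$, the root of the $\msgFromTo{\procC}{\procD}{\val_2}$-vertex is $\procA$), yet $\procC$ never receives anything, so no CSM --- with or without payload --- can make the number of $\procC\to\procD$ messages track the number of iterations chosen by $\procA$; the finite language is not payload implementable. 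On this input your construction gives $\procC$ an ``internal send choice'' (send another $\val_2$ or stop) that is uncorrelated with $\procA$'s choice, and protocol fidelity fails. Note that determinism of the determinised projection is not the point: a deterministic automaton can still offer an uninformed internal choice between two differently-labelled sends, which is exactly what happens here.

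For comparison: the paper does not prove this proposition at all --- it imports it from Genest et al.\ and, immediately after stating it, flags precisely this caveat. The result presupposes that the HMSC is in a maximal (collapsed) form in which every vertex with a single successor is merged with that successor; the paper's words are that one must ``first minimise it and then check whether it is local.'' Under that normal form the problematic configurations disappear: collapsing the $\msgFromTo{\procA}{\procB}{\val_1}$-vertex with the $\msgFromTo{\procC}{\procD}{\val_2}$-vertex yields a BMSC with two incomparable minimal events, so the HMSC is no longer local and the proposition does not apply. Read literally against the paper's definition of \emph{local}, the statement you set out to prove is false, so your argument cannot be repaired by tightening individual steps; a self-contained proof must build the maximality normalisation (or an equivalent condition guaranteeing that the root of each vertex has learned, via its own receives or its own prior choices, that the run is at that vertex) into the hypotheses before your choice-propagation argument can go through.
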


With \cref{lm:finite-generalises-to-infinite}, we can use the implementation of a local $H(\GG)$ for a $0$-reachable global type~$\GG$.

\begin{corollary}
Let $\GG$ be a $0$-reachable for which $H(\GG)$ is local.
Then, $\GG$ can be implemented with a CSM of linear size.
\end{corollary}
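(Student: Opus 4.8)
The plan is to chain three results already established in the excerpt: the language correspondence between a global type and its HMSC encoding (\cref{thm:correctnessMPSTtoHMSC}), the payload implementability of finite languages of local HMSCs (the preceding Prop.~21~\cite{DBLP:journals/jcss/GenestMSZ06}), and the generalisation from finite to infinite languages for $0$-reachable types (\cref{lm:finite-generalises-to-infinite}). Since the ambient section concerns \emph{payload} implementability, ``implemented'' here is read as ``payload-implemented''.

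First I would invoke \cref{thm:correctnessMPSTtoHMSC} to obtain $\lang(\GG) = \lang(H(\GG))$, and in particular $\lang_{\fin}(\GG) = \lang_{\fin}(H(\GG))$. Since $H(\GG)$ is local by hypothesis, Prop.~21 yields a deadlock-free CSM $\CSM{A}$ over an extended message alphabet $\MsgVals_1 \times \MsgVals_2$ whose projection $\lang(\CSM{A})\wproj_{\MsgVals_1}$ agrees with $\interswaplang(\lang_{\fin}(H(\GG)))$ on finite words. The linear-size claim comes from inspecting this construction: locality means that at every vertex a single \emph{root} role emits the unique minimal event of each successor block, so a CSM in which the root tags this first message (via the fresh $\MsgVals_2$-component) with the chosen outgoing edge lets every role track the traversed path. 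Such a machine carries essentially one control state per event of $H(\GG)$, hence has size linear in $\card{\GG}$.

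It then remains to account for infinite executions, and this is where $0$-reachability enters. \cref{lm:finite-generalises-to-infinite} states that once a deadlock-free CSM realises $\lang_{\fin}(\GG)$, $0$-reachability forces $\lang_{\inf}(\CSM{A}) = \lang_{\inf}(\GG)$, so the full languages coincide. Applying this modulo the payload projection $\wproj_{\MsgVals_1}$ gives $\lang(\CSM{A})\wproj_{\MsgVals_1} = \lang(\GG)$, which is exactly payload implementability of $\GG$, witnessed by a CSM of linear size.

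The main obstacle is the mismatch between \cref{lm:finite-generalises-to-infinite}, phrased for an exact implementation $\lang_{\fin}(\CSM{A}) = \lang_{\fin}(\GG)$, and the payload setting, where equality holds only after projecting onto $\MsgVals_1$. I expect this to be routine rather than deep: the lemma's argument relies only on deadlock freedom of $\CSM{A}$, agreement on finite words, and the fact that $0$-reachability lets every finite prefix be completed to a terminating run, all of which transfer verbatim once one reasons about $\lang(\CSM{A})\wproj_{\MsgVals_1}$ in place of $\lang(\CSM{A})$. The one point to verify carefully is that projecting away the $\MsgVals_2$-component commutes with taking limits of finite prefixes; this holds because $\wproj_{\MsgVals_1}$ acts letter-wise and hence preserves the prefix order, so an infinite projected word is the limit of its projected finite prefixes.
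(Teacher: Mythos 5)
Your proposal follows essentially the same route as the paper, whose entire justification is the one-line chaining of \cref{thm:correctnessMPSTtoHMSC} (so that $\lang(\GG)=\lang(H(\GG))$), Prop.~21 of Genest et al.\ (payload implementability of $\lang_{\fin}$ of a local HMSC, with the linear-size CSM coming from their cited construction), and \cref{lm:finite-generalises-to-infinite} for $0$-reachable global types. Your additional step — noting that \cref{lm:finite-generalises-to-infinite} is stated for exact agreement $\lang_{\fin}(\CSM{A})=\lang_{\fin}(\GG)$ and arguing that its proof transfers when equality holds only modulo the letter-wise projection $\wproj_{\MsgVals_1}$ — is a point the paper silently glosses over, and your treatment of it is correct rather than a deviation.
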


The algorithm to construct a deadlock-free CSM \cite[Sec.~5.2]{DBLP:journals/jcss/GenestMSZ06} suggests that the BMSCs for such HMSCs need to be maximal -- in the sense that any vertex with a single successor is collapsed with its successor.
If this was not the case, the result would claim that the language of the following global type is payload implementable:
{ \scriptsize $
 \mu t. \,
 +
 \begin{cases}
 \msgFromTo{\procA}{\procB}{\val_1}. \,
 \msgFromTo{\procC}{\procD}{\val_2}. \, t
 \\
 \msgFromTo{\procA}{\procB}{\val_3}. \, 0
 \end{cases}
$. }
However, is is easy to see that it is not payload implementable since there is no interaction between~$\procA$, which decides whether to stay in the loop or not, and $\procC$.
Thus, we cannot simply check whether $H(\GG)$ is local.
In fact, it would always be.
Instead, we first need to minimise it and then check whether it is local.
If we collapse the two consecutive vertices with independent pairs of roles in this example, the HMSC is not local.
The representation of the HMSC matters which shows that local as property is rather a syntactic than a semantic~notion.

\begin{algorithm}[Checking if $H(\GG)$ is local -- directly on $\semglobalsync(\GG)$]
\label{alg:checking-local}
Let $\GG$ be a global type $\GG$.
We consider the finite trace~$w'$ of every longest branch-free, loop-free and non-initial run in the state machine $\semglobalsync(\GG)$.
We split the (synchronous) interactions into asynchronous events: $w = \SyncToAsync(w') = w_1 \ldots w_n$.
We need to check if there is $u \interswap w$ with $u = u_1 \ldots u_n$ such that $u_1 \neq w_1$.
For this, we can construct an MSC for $w$ \cite[Sec.~3.1]{DBLP:journals/fuin/GenestKM07} and check if there is a single minimal event.
This works because MSCs are closed under $\interswap$~\cite[Lm.~5]{DBLP:journals/corr/abs-2209-10328}.
If the MSC of every trace $w'$ has a single minimal event, we return \emph{true}.
If~not, we return \emph{false}.
\end{algorithm}

It is straightforward that this mimics the corresponding check for the HMSC $H(\GG)$ and, with similar modifications as for \cref{alg:checking-I-closedness}, the check can be done in $\bigO{\card{\GG}}$.

\begin{proposition}For a global type $\GG$, \cref{alg:checking-local} returns \emph{true} iff $H(\GG)$ is local.
\end{proposition}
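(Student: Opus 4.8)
The plan is to prove the equivalence by matching, vertex by vertex, the condition in the definition of \emph{local} HMSCs with the trace-based test of \cref{alg:checking-local}, using the encoding $H(\hole)$ as the bridge. First I would recall from the definition of $H(\GG)$ that the BMSCs labelling the vertices of $H(\GG)$ are exactly the straight-line blocks of $\GG$, i.e.\ the maximal interaction sequences lying between two successive choice points of $\semglobalsync(\GG)$. Reading this off the encoding, a longest branch-free, loop-free, non-initial run of $\semglobalsync(\GG)$ traverses precisely the interactions of one non-initial block, and applying $\SyncToAsync$ to its trace $w'$ turns those synchronous interactions into the send/receive event nodes of the corresponding BMSC. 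Hence the MSCs of the words $\SyncToAsync(w')$ considered by the algorithm are, up to isomorphism, exactly the BMSCs $\mu(\vertexB)$ at the non-initial vertices $\vertexB$ of $H(\GG)$ (the initial vertex being treated analogously, its block beginning with the first interaction of $\GG$).

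Next I would reduce the partial-order condition ``$\mu(\vertexB)$ has a unique minimal event'' to the first-letter test performed by the algorithm. Fix a BMSC $M$ with a linearisation $w = \SyncToAsync(w')$. Because MSCs are closed under $\interswap$, the set of traces of $M$ coincides with the $\interswap$-class of $w$ restricted to finite words, and the set of possible first letters of these traces is exactly the set of labels of the $\leq_M$-minimal events of $M$. Two distinct minimal events cannot lie on the same role, since the per-role orders $\leq_\procA$ are total, so they lie on different roles and therefore carry distinct labels. Consequently $M$ has a \emph{unique} minimal event iff every $u \interswap w$ satisfies $u_1 = w_1$, which is precisely what the algorithm checks (equivalently, iff the constructed MSC of $w$ has a single minimal event). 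Thus \cref{alg:checking-local} returns \emph{true} iff every non-initial BMSC of $H(\GG)$ has a unique minimal event.

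It then remains to show that, for $H(\GG)$, the second ingredient of locality — existence of a consistent function $\operatorname{root} \from V \to \Procs$ — is automatic, so that locality collapses to the unique-minimal-event requirement. At a vertex $\vertexA$ with several successors, the corresponding state of $\semglobalsync(\GG)$ is a choice $\sum_{i} \msgFromTo{\procA}{\procB_{i}}{\val_i.G_i}$ with a \emph{single} sender $\procA$ (sender-driven choice). Each successor block begins with $\snd{\procA}{\procB_{i}}{\val_i}$, so whenever that send is the unique minimal event of the block its role is $\procA$, uniformly across all successors; hence $\operatorname{root}(\vertexA) \is \procA$ is well defined, and for a vertex with a single successor $\operatorname{root}$ may take the role of that successor's minimal event. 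Therefore a consistent root function exists exactly when every successor BMSC has a unique minimal event, giving: $H(\GG)$ is local iff every BMSC of $H(\GG)$ has a unique minimal event. Combining this with the reduction of the previous paragraph yields the claimed equivalence.

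The main obstacle I anticipate is the first step: making the block-to-BMSC correspondence fully precise, in particular handling the $\emptystring$-transitions from recursion binder and variable states (cf.\ \cref{prop:shape-of-sem-fsms}) so that ``longest branch-free, loop-free, non-initial run'' really does enumerate the non-initial vertices of $H(\GG)$ without duplication or omission, and pinning down the initial vertex so that its BMSC is checked as well. A secondary point requiring care is the claim that distinct minimal events carry distinct labels; this relies on the totality of each $\leq_\procA$ together with the FIFO assumption on MSCs, and it is exactly what guarantees that the syntactic test $u_1 \neq w_1$ faithfully detects the presence of more than one minimal event.
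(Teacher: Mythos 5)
The paper never actually proves this proposition: it is asserted right after \cref{alg:checking-local} with the single remark that the algorithm ``mimics the corresponding check for the HMSC $H(\GG)$''. Your three-step reconstruction therefore fills a real gap, and its mathematical core is sound. The reduction of ``unique minimal event'' to the first-letter test is correct --- minimal events of a BMSC are necessarily send events (every receive is strictly preceded by its send), two distinct minimal events must lie on distinct roles by totality of each $\leq_\procA$ and hence carry distinct labels, and the set of linearisations is a single $\interswap$-class by the closure lemma the paper cites --- and your observation that sender-driven choice makes the $\operatorname{root}$ function automatic (every successor block of a branching vertex starts with a send of the unique choosing sender, which must be \emph{the} minimal event whenever there is only one) is exactly why the HMSC notion collapses, for global types, to a per-block test.

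Two concrete points need repair, though. First, your step 1 misstates the encoding: by \cref{def:hmsc-encoding}, every vertex of $H(\GG)$ carries either an empty BMSC or a BMSC containing a \emph{single} interaction, so the block-to-BMSC correspondence you claim to ``recall from the definition'' is false for the literal $H(\GG)$; the paper itself stresses that the literal encoding ``would always be'' local and that one must first \emph{minimise} (collapse single-successor vertices) and check locality of the result. Your argument goes through once restated for that collapsed HMSC, but this is not a formalisation detail to defer --- it is precisely the representation-dependence of locality that the paper highlights, and the proposition is only meaningful under that reading. Second, the initial block: \cref{alg:checking-local} inspects only \emph{non-initial} runs, whereas the definition of a local HMSC also requires $\mu(\vertexA^I)$ to have a unique minimal event, and you silently check the initial block as well (``treated analogously''). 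Your inclusion is in fact what the iff needs --- for $\msgFromTo{\procA}{\procB}{\val}.\,\msgFromTo{\procC}{\procD}{\val}.\,0$ the collapsed encoding's initial BMSC has two minimal events, so it is not local, yet an algorithm that skips initial runs has nothing to inspect and returns \emph{true} --- but it diverges from the algorithm as written. You must either justify why initial runs may be omitted (in general they may not), or state explicitly that you prove the proposition for the algorithm amended to include the initial run; as it stands, your proof and the quoted algorithm are checking different things.
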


Ben{-}Abdallah and Leue~\cite{DBLP:conf/tacas/Ben-AbdallahL97} introduced local-choice HMSCs, which are as expressive as local HMSCs.
Their condition also uses a $\operatorname{root}$-function and minimal events but quantifies over paths.
Every local HMSC is a local-choice HMSC and
every local-choice HMSC can be translated to a local HMSC that accepts the same language with a quadratic blow-up~\cite{DBLP:journals/jcss/GenestMSZ06}.
It is straightforward to adapt the \cref{alg:checking-local} to check if a global type is local-choice. If this is the case, we translate the protocol and use the implementation for the translated protocol.

 \section{Implementability with Intra-role Reordering}
\label{intraswap-implementability-undecidable}

In this section, we introduce a generalisation of the implementability problem that relaxes the total event order for each role. We prove that this generalisation is undecidable in general.

\paragraph*{A Case for More Reordering}

\noindent
From the perspective of a single role, each word in its language consists of a sequence of send and receive events.
Choice in global types happens by sending (and not by receiving).
Because of this, one can argue that a role should be able to receive messages from different senders in any order between sending two messages. In practice, receiving a message can induce a task with non-trivial computation that our model does not account for.
Therefore, such a reordering for a sequence of receive events can have outsized performance benefits.
In addition, there are global types that can be implemented with regard to this generalised relation even if no (standard) implementation exists.

\begin{example}[Example for intra-role reordering]
\label{ex:task-coordinator}
Let us consider a global type where a central coordinator $\procA$ distributes independent tasks to different roles in rounds:

\vspace{-3ex}
{ \scriptsize
\[
 \GG_{\operatorname{TC}}
    \is
 \mu t.\,
 +
 \begin{cases}
 \msgFromTo{\procA}{\procB_1}{\operatorname{task}}.\,
 \ldots \,
 \msgFromTo{\procA}{\procB_n}{\operatorname{task}}.\,
 \msgFromTo{\procB_1}{\procA}{\operatorname{result}}.\,
 \ldots \,
 \msgFromTo{\procB_n}{\procA}{\operatorname{result}}.\, t
 \\
 \msgFromTo{\procA}{\procB_1}{\operatorname{done}}.\,
 \ldots \,
 \msgFromTo{\procA}{\procB_n}{\operatorname{done}}.\, 0
 \end{cases}
 .
\]
}
\hspace{-1.5ex} Since all tasks in each round are independent, $\procA$ can benefit from receiving the results in the order they arrive instead of busy-waiting.
\end{example}

We generalise the indistinguishability relation $\interswap$ accordingly. 

\begin{definition}[Intra-role indistinguishability relation $\intraswap$]
We define a family of \emph{intra-role indistinguishability relations}
${\intraswap_i} \subseteq \Sigma^* \times \Sigma^*$, for $i\geq 0$
as follows.
For all $w, u \in\Sigma^*$, $w \interswap_i u$ entails $w \intraswap_i u$.
For $i=1$, we define:
if $\procB ≠ \procC$, then
{ \small
$
 w.\rcv{\procB}{\procA}{\val}.\rcv{\procC}{\procA}{\val'}.u
 \; \intraswap_{1} \;
 w.\rcv{\procC}{\procA}{\val'}.\rcv{\procB}{\procA}{\val}.u .
$
}
Based on this, we define $\intraswap$ analogously to $\interswap$.
Let $w$, $w'$, $w''$ be words s.t.~$w \intraswap_1 w'$ and $w' \intraswap_i w''$ for some~$i$.
Then $w \intraswap_{i+1} w''$.
We define $w \intraswap u$ if $w \intraswap_n u$ for some $n$.
It is straightforward that $\intraswap$ is an equivalence relation.
Define $u \preceq_\intraswap v$ if there is $w\in\Sigma^*$ such that $u.w \intraswap v$.
Observe that $u \interswap v$ iff
$u \preceq_\intraswap v$ and $v \preceq_\intraswap u$.
We extend $\intraswap$ to infinite words and languages as for $\interswap$.
\end{definition}

\begin{definition}[Implementability w.r.t.\ $\intraswap$]
\label{def:intra-role-implementability}
A global type $\GG$ is \emph{implementable with regard to $\intraswap$} if there exists a deadlock-free CSM $\CSM{A}$ such that
(i)~$\lang(\GG)
            \subseteq
        \intraswaplang(\lang(\CSM{A}))$ and
 (ii) $\intraswaplang(\lang(\GG))
            =
        \intraswaplang(\lang(\CSM{A}))$.
We say that $\CSM{A}$ $\intraswap$-implements~$\GG$.\end{definition}

In this section, we emphasise the indistinguishability relation, e.g., $\intraswap$-implementable.
We could have also followed the definition of $\interswap$-implementability and required
$\intraswaplang(\lang(\GG)) = \lang(\CSM{A})$.
This, however, requires the CSM to be closed under $\intraswap$.
In general, this might not be possible with finitely many states.
In particular, if there is a loop without any send events for a role, the labels in the loop would introduce an infinite~closure if we require that
$\intraswaplang(\lang(\GG)) \wproj_{\Alphabet_\procA} = \lang(A_\procA)$.

\begin{example}
We consider a variant of $\GG_{\operatorname{TC}}$
 from \cref{ex:task-coordinator} with $n = 2$ where $\procB_1$ and~$\procB_2$ send a log message to $\procC$ after receiving the task and before sending the result back:

\vspace{-3ex}
{ \scriptsize
\[
 \GG_{\operatorname{TCLog}}
    \is
 \mu t.\,
 +
 \begin{cases}
 \msgFromTo{\procA}{\procB_1}{\operatorname{task}}.\,
 \msgFromTo{\procA}{\procB_2}{\operatorname{task}}.\,
 \msgFromTo{\procB_1}{\procC}{\operatorname{log}}.\,
 \msgFromTo{\procB_2}{\procC}{\operatorname{log}}.\,
 \msgFromTo{\procB_1}{\procA}{\operatorname{result}}.\,
 \msgFromTo{\procB_2}{\procA}{\operatorname{result}}.\, t
 \\
 \msgFromTo{\procA}{\procB_1}{\operatorname{done}}.\,
 \msgFromTo{\procA}{\procB_2}{\operatorname{done}}.\, 0
 \end{cases}
 .
\]
}
\hspace{-1.6ex} There is no FSM for $\procC$ that precisely accepts
$\intraswaplang(\lang(\GG_{\operatorname{TCLog}})) \wproj_{\Alphabet_\procC}$.
If we rely on the fact that $\procB_1$ and $\procB_2$ send the same number of $\operatorname{log}$-messages to $\procC$, we can use an FSM $A_\procC$ with a single state (both initial and final) with two transitions: one for the $\operatorname{log}$-message from $\procB_1$ and $\procB_2$ each, that lead back to the only state.
For this, it holds that
$\intraswaplang(\lang(\GG_{\operatorname{TCLog}})) \wproj_{\Alphabet_\procC} \subseteq \lang(A_\procC)$.
If we cannot rely on this, the FSM would need to keep track of the difference, which can be unbounded and thus not recognisable by an FSM.
\end{example}

This is why we chose a more permissive definition which is required to cover at least as much as specified in the global type (i) and the $\intraswap$-closure of both are the same (ii).

It is trivial that any $\interswap$-implementation for a global type does also $\intraswap$-implement it.

\begin{proposition}
Let $\GG$ be a global type that is $\interswap$-implemented by the CSM~$\CSM{A}$.
Then, $\CSM{A}$ also $\intraswap$-implements $\GG$.
\end{proposition}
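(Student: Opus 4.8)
The plan is to unfold both notions and observe that the entire statement follows from the single fact handed to us by $\interswap$-implementability, namely the exact language equality $\lang(\GG) = \lang(\CSM{A})$, together with the fact that $\intraswaplang(\hole)$ is a closure operator. By hypothesis $\CSM{A}$ is a deadlock-free CSM with $\lang(\GG) = \lang(\CSM{A})$. This very CSM, and its deadlock-freedom, are precisely what \cref{def:intra-role-implementability} requires, so the only work is to verify conditions (i) and (ii) of that definition for $\CSM{A}$.

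First I would record that $\intraswaplang(\hole)$ is \emph{extensive}, i.e.\ $L \subseteq \intraswaplang(L)$ for every language $L$. Since $\intraswap$ is an equivalence relation, it is reflexive on finite words, and $\preceq_\intraswap^\omega$ is reflexive on infinite words (each finite prefix witnesses itself). Reading this off the definition of $\intraswaplang$ (which mirrors that of $\interswaplang$): every finite $w \in L$ is $\intraswap$-related to itself and every infinite $w \in L$ satisfies $w \preceq_\intraswap^\omega w$, hence $w \in \intraswaplang(L)$. With extensivity in hand, condition (i) is the chain $\lang(\GG) = \lang(\CSM{A}) \subseteq \intraswaplang(\lang(\CSM{A}))$, where the equality is the hypothesis and the inclusion is extensivity at $L = \lang(\CSM{A})$. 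Condition (ii) is even more immediate: applying the map $\intraswaplang(\hole)$ to both sides of $\lang(\GG) = \lang(\CSM{A})$ yields $\intraswaplang(\lang(\GG)) = \intraswaplang(\lang(\CSM{A}))$. Together with the preserved deadlock-freedom of $\CSM{A}$, this establishes that $\CSM{A}$ $\intraswap$-implements $\GG$.

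I do not anticipate a genuine obstacle: the statement is essentially a sanity check that the more permissive $\intraswap$-notion subsumes the standard $\interswap$-notion. The only point deserving explicit care is that \cref{def:intra-role-implementability} asks for a set inclusion and a closure-equality, both of which are \emph{weaker} than the exact equality $\lang(\GG) = \lang(\CSM{A})$ supplied by $\interswap$-implementability; making the extensivity of $\intraswaplang$ explicit is what turns that stronger equality into the two required conditions.
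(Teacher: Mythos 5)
Your proposal is correct: the paper itself offers no proof of this proposition, merely prefacing it with ``It is trivial that any $\interswap$-implementation \ldots does also $\intraswap$-implement it,'' and your argument is precisely the intended unfolding of the definitions --- the hypothesis gives a deadlock-free $\CSM{A}$ with $\lang(\GG) = \lang(\CSM{A})$, condition (i) follows from this equality plus extensivity of $\intraswaplang(\hole)$ (reflexivity of $\intraswap$ on finite words and of $\preceq_\intraswap^\omega$ on infinite ones), and condition (ii) follows by applying $\intraswaplang(\hole)$ to both sides of the equality. Your explicit justification of extensivity is the only step with any content, and it is handled correctly.
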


For instance,
the erasure candidate implementation is a $\interswap$-implementation as well as a $\intraswap$-implementation for  the task coordination protocol
$\GG_{\operatorname{TC}}$
from \cref{ex:task-coordinator}.
Still, \mbox{$\intraswap$-implementability} gives more freedom and allows to consider all possible combinations of arrivals of results.
In addition, $\intraswap$-implementability renders some global types implementable which would not be~otherwise.
For instance, those with a role that would need to receive different sequences, related by $\intraswap$ though, in different branches it cannot distinguish (yet).

\begin{example}[$\intraswap$-implementable but not $\interswap$-implementable]
Let us consider the following global type:
{
\small
$
(
 \msgFromTo{\procA}{\procB}{l}.\,
 \msgFromTo{\procA}{\procC}{m}.\,
 \msgFromTo{\procB}{\procC}{m}.\, 0
)
+
(
 \msgFromTo{\procA}{\procB}{r}.\,
 \msgFromTo{\procB}{\procC}{m}.\,
 \msgFromTo{\procA}{\procC}{m}.\, 0
)
$.
}
This cannot be $\interswap$-implemented because $\procC$ would need to know about the choice to receive the messages from $\procA$ and $\procB$ in the correct order.
However, it is $\intraswap$-implementable.
The FSMs for $\procA$ and $\procB$ can be obtained with projection by erasure.
For $\procC$, we can have an FSM that only accepts
 $\rcv{\procA}{\procC}{m}.\,
 \rcv{\procB}{\procC}{m}$
 but also an FSM which accepts
 $\rcv{\procB}{\procC}{m}.\,
 \rcv{\procA}{\procC}{m}$
 in addition.
Note that $\procC$ does not learn the choice in the second FSM even if it branches.
Hence, it would not be implementable if it sent different messages in both branches later on.
However, it could still learn by receiving and, afterwards, send different messages.
\end{example}

\paragraph*{Implementability with Intra-role Reordering is Undecidable}
\label{sec:pc-undec}

\noindent
Unfortunately, checking implementability with regard to $\intraswap$ for global types (with directed choice) is undecidable.
Intuitively, the reordering allows roles to drift arbitrarily far apart as the execution progresses which makes it hard to learn which choices were~made.

We reduce the \emph{Post Correspondence Problem} (PCP) \cite{NO-DBLP:Post1946AVO} to the problem of checking implementability with regard to $\intraswap$.
An instance of PCP over an alphabet~$\Delta$, $\card{\Delta} > 1$, is given by two finite lists $(u_1, u_2, \ldots, u_n)$ and $(v_1,v_2,\ldots,v_n)$ of finite words over~$\Delta$, also called tile sets.
A solution to the instance is a sequence of indices $(i_{j})_{1\leq j\leq k}$ with $k\geq 1$ and $1\leq i_{j}\leq n$ for all $1\leq j \leq k$,
such that $u_{i_{1}}\ldots u_{i_{k}}=v_{i_{1}}\ldots v_{i_{k}}$.
To be precise, we present a reduction from the modified~PCP (MPCP) \cite[Sec.~5.2]{DBLP:books/daglib/0086373}, which is also undecidable.
It simply requires that a match starts with a specific pair --- in our case we choose the pair with index $1$.
It is possible to directly reduce from PCP but the reduction from MPCP is more concise.
Intuitively, we require that the solution starts with the first pair so there exists no trivial solution and choosing a single pair is more concise than all possible ones.
Our encoding is the following global type where $x \in \set{u, v}$,
$[x_i]$ denotes a sequence of message interactions with message $x_{i}[1], \ldots, x_{i}[k]$ each for
$x_i$ of length $k$,
message $\textit{c-}x$ indicates choosing tile set~$x$, and
message $\textit{ack-}x$ indicates acknowledging the tile set $x$:

\begin{figure}[t]

\centering
\def\svgwidth{0.75\textwidth}
\begingroup \makeatletter \providecommand\color[2][]{\errmessage{(Inkscape) Color is used for the text in Inkscape, but the package 'color.sty' is not loaded}\renewcommand\color[2][]{}}\providecommand\transparent[1]{\errmessage{(Inkscape) Transparency is used (non-zero) for the text in Inkscape, but the package 'transparent.sty' is not loaded}\renewcommand\transparent[1]{}}\providecommand\rotatebox[2]{#2}\newcommand*\fsize{\dimexpr\f@size pt\relax}\newcommand*\lineheight[1]{\fontsize{\fsize}{#1\fsize}\selectfont}\ifx\svgwidth\undefined \setlength{\unitlength}{774.58717166bp}\ifx\svgscale\undefined \relax \else \setlength{\unitlength}{\unitlength * \real{\svgscale}}\fi \else \setlength{\unitlength}{\svgwidth}\fi \global\let\svgwidth\undefined \global\let\svgscale\undefined \makeatother \begin{picture}(1,0.34530324)\lineheight{1}\setlength\tabcolsep{0pt}\put(0,0){\includegraphics[width=\unitlength,page=1]{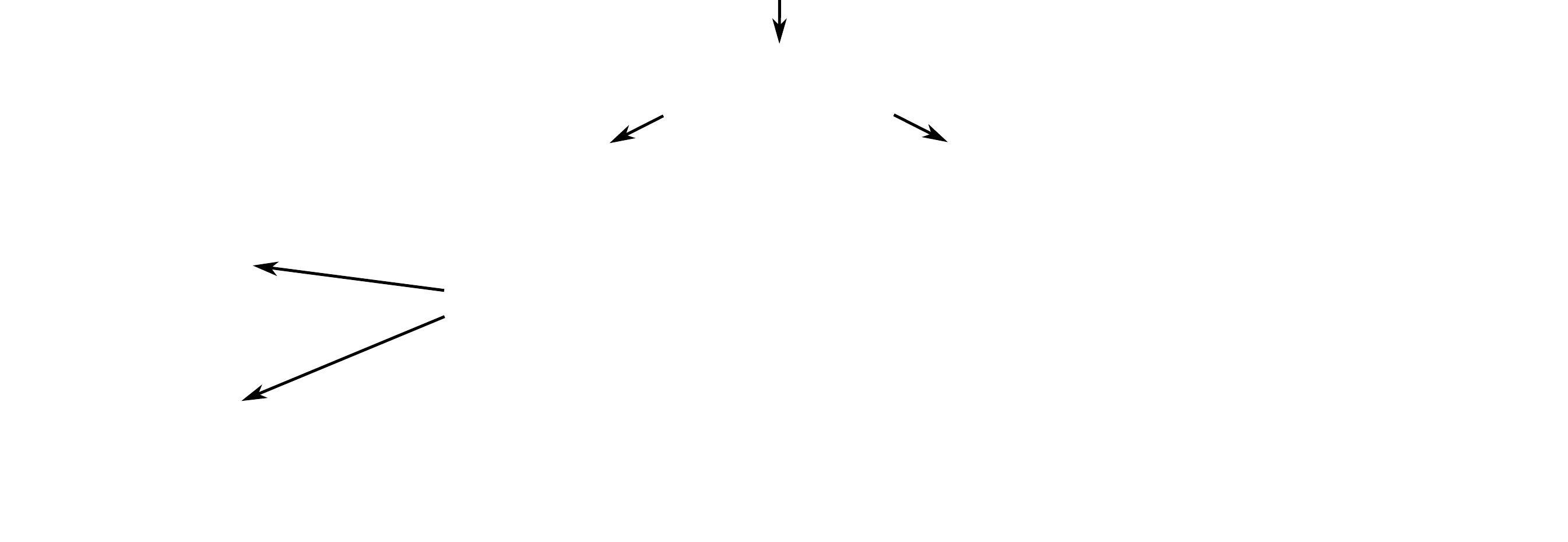}}\put(0.42748298,0.29847817){\color[rgb]{0,0,0}\makebox(0,0)[lt]{\lineheight{1.25}\smash{\begin{tabular}[t]{l}{\scriptsize $\procA$}\end{tabular}}}}\put(0,0){\includegraphics[width=\unitlength,page=2]{figs/hmsc-mpcp-encoding.pdf}}\put(0.2600398,0.27861764){\color[rgb]{0,0,0}\makebox(0,0)[lt]{\lineheight{1.25}\smash{\begin{tabular}[t]{l}{\tiny $\textit{c-}u$}\end{tabular}}}}\put(0.3156623,0.22173455){\color[rgb]{0,0,0}\makebox(0,0)[lt]{\lineheight{1.25}\smash{\begin{tabular}[t]{l}{\tiny $[u_1]$}\end{tabular}}}}\put(0.26876969,0.25623815){\color[rgb]{0,0,0}\makebox(0,0)[lt]{\lineheight{1.25}\smash{\begin{tabular}[t]{l}{\tiny $1$}\end{tabular}}}}\put(0.33316407,0.24034016){\color[rgb]{0,0,0}\makebox(0,0)[lt]{\lineheight{1.25}\smash{\begin{tabular}[t]{l}{\tiny $1$}\end{tabular}}}}\put(0,0){\includegraphics[width=\unitlength,page=3]{figs/hmsc-mpcp-encoding.pdf}}\put(0.08737807,0.03310352){\color[rgb]{0,0,0}\makebox(0,0)[lt]{\lineheight{1.25}\smash{\begin{tabular}[t]{l}{\tiny $[u_1]$}\end{tabular}}}}\put(0.04048539,0.06729812){\color[rgb]{0,0,0}\makebox(0,0)[lt]{\lineheight{1.25}\smash{\begin{tabular}[t]{l}{\tiny $1$}\end{tabular}}}}\put(0.10487985,0.05164879){\color[rgb]{0,0,0}\makebox(0,0)[lt]{\lineheight{1.25}\smash{\begin{tabular}[t]{l}{\tiny $1$}\end{tabular}}}}\put(0,0){\includegraphics[width=\unitlength,page=4]{figs/hmsc-mpcp-encoding.pdf}}\put(0.10563762,0.17445457){\color[rgb]{0,0,0}\makebox(0,0)[lt]{\lineheight{1.25}\smash{\begin{tabular}[t]{l}{\tiny \doneshort}\end{tabular}}}}\put(0.02975806,0.14078956){\color[rgb]{0,0,0}\makebox(0,0)[lt]{\lineheight{1.25}\smash{\begin{tabular}[t]{l}{\tiny $\textit{ack-}u$}\end{tabular}}}}\put(0,0){\includegraphics[width=\unitlength,page=5]{figs/hmsc-mpcp-encoding.pdf}}\put(0.3146029,0.03222686){\color[rgb]{0,0,0}\makebox(0,0)[lt]{\lineheight{1.25}\smash{\begin{tabular}[t]{l}{\tiny $[u_n]$}\end{tabular}}}}\put(0.26871023,0.06682158){\color[rgb]{0,0,0}\makebox(0,0)[lt]{\lineheight{1.25}\smash{\begin{tabular}[t]{l}{\tiny $n$}\end{tabular}}}}\put(0.17904244,0.05279385){\color[rgb]{0,0,0}\makebox(0,0)[lt]{\lineheight{1.25}\smash{\begin{tabular}[t]{l}\textit{...}\end{tabular}}}}\put(0.33310464,0.05148574){\color[rgb]{0,0,0}\makebox(0,0)[lt]{\lineheight{1.25}\smash{\begin{tabular}[t]{l}{\tiny $n$}\end{tabular}}}}\put(0.4904877,0.29866708){\color[rgb]{0,0,0}\makebox(0,0)[lt]{\lineheight{1.25}\smash{\begin{tabular}[t]{l}{\scriptsize $\procB$}\end{tabular}}}}\put(0.55517307,0.29732859){\color[rgb]{0,0,0}\makebox(0,0)[lt]{\lineheight{1.25}\smash{\begin{tabular}[t]{l}{\scriptsize $\procC$}\end{tabular}}}}\put(0,0){\includegraphics[width=\unitlength,page=6]{figs/hmsc-mpcp-encoding.pdf}}\put(0.6364881,0.27948344){\color[rgb]{0,0,0}\makebox(0,0)[lt]{\lineheight{1.25}\smash{\begin{tabular}[t]{l}{\tiny $\textit{c-}v$}\end{tabular}}}}\put(0.69219289,0.22227469){\color[rgb]{0,0,0}\makebox(0,0)[lt]{\lineheight{1.25}\smash{\begin{tabular}[t]{l}{\tiny $[v_1]$}\end{tabular}}}}\put(0.6449834,0.2562057){\color[rgb]{0,0,0}\makebox(0,0)[lt]{\lineheight{1.25}\smash{\begin{tabular}[t]{l}{\tiny $1$}\end{tabular}}}}\put(0.70937696,0.24087031){\color[rgb]{0,0,0}\makebox(0,0)[lt]{\lineheight{1.25}\smash{\begin{tabular}[t]{l}{\tiny $1$}\end{tabular}}}}\put(0,0){\includegraphics[width=\unitlength,page=7]{figs/hmsc-mpcp-encoding.pdf}}\put(0.9244207,0.03334701){\color[rgb]{0,0,0}\makebox(0,0)[lt]{\lineheight{1.25}\smash{\begin{tabular}[t]{l}{\tiny $[v_n]$}\end{tabular}}}}\put(0.87902821,0.06749055){\color[rgb]{0,0,0}\makebox(0,0)[lt]{\lineheight{1.25}\smash{\begin{tabular}[t]{l}{\tiny $n$}\end{tabular}}}}\put(0.94342258,0.05159228){\color[rgb]{0,0,0}\makebox(0,0)[lt]{\lineheight{1.25}\smash{\begin{tabular}[t]{l}{\tiny $n$}\end{tabular}}}}\put(0,0){\includegraphics[width=\unitlength,page=8]{figs/hmsc-mpcp-encoding.pdf}}\put(0.69219289,0.03289537){\color[rgb]{0,0,0}\makebox(0,0)[lt]{\lineheight{1.25}\smash{\begin{tabular}[t]{l}{\tiny $[v_1]$}\end{tabular}}}}\put(0.64492371,0.06679008){\color[rgb]{0,0,0}\makebox(0,0)[lt]{\lineheight{1.25}\smash{\begin{tabular}[t]{l}{\tiny $1$}\end{tabular}}}}\put(0.81802558,0.0506373){\color[rgb]{0,0,0}\rotatebox{-180}{\makebox(0,0)[lt]{\lineheight{1.25}\smash{\begin{tabular}[t]{l}\textit{...}\end{tabular}}}}}\put(0.70931766,0.05089171){\color[rgb]{0,0,0}\makebox(0,0)[lt]{\lineheight{1.25}\smash{\begin{tabular}[t]{l}{\tiny $1$}\end{tabular}}}}\put(0,0){\includegraphics[width=\unitlength,page=9]{figs/hmsc-mpcp-encoding.pdf}}\put(0.04086324,0.18843561){\color[rgb]{0,0,0}\makebox(0,0)[lt]{\lineheight{1.25}\smash{\begin{tabular}[t]{l}{\tiny \doneshort}\end{tabular}}}}\put(0,0){\includegraphics[width=\unitlength,page=10]{figs/hmsc-mpcp-encoding.pdf}}\put(0.10563762,0.15122597){\color[rgb]{0,0,0}\makebox(0,0)[lt]{\lineheight{1.25}\smash{\begin{tabular}[t]{l}{\tiny \doneshort}\end{tabular}}}}\put(0,0){\includegraphics[width=\unitlength,page=11]{figs/hmsc-mpcp-encoding.pdf}}\put(0.94454514,0.17216115){\color[rgb]{0,0,0}\makebox(0,0)[lt]{\lineheight{1.25}\smash{\begin{tabular}[t]{l}{\tiny \doneshort}\end{tabular}}}}\put(0.86797851,0.13936641){\color[rgb]{0,0,0}\makebox(0,0)[lt]{\lineheight{1.25}\smash{\begin{tabular}[t]{l}{\tiny $\textit{ack-}v$}\end{tabular}}}}\put(0,0){\includegraphics[width=\unitlength,page=12]{figs/hmsc-mpcp-encoding.pdf}}\put(0.87977077,0.18614219){\color[rgb]{0,0,0}\makebox(0,0)[lt]{\lineheight{1.25}\smash{\begin{tabular}[t]{l}{\tiny \doneshort}\end{tabular}}}}\put(0,0){\includegraphics[width=\unitlength,page=13]{figs/hmsc-mpcp-encoding.pdf}}\put(0.94454514,0.14893255){\color[rgb]{0,0,0}\makebox(0,0)[lt]{\lineheight{1.25}\smash{\begin{tabular}[t]{l}{\tiny \doneshort}\end{tabular}}}}\put(0,0){\includegraphics[width=\unitlength,page=14]{figs/hmsc-mpcp-encoding.pdf}}\end{picture}\endgroup  

\caption{HMSC encoding $H(\GG_{\MPCP})$ of the MPCP encoding}
\label{fig:hmsc-mpcp-encoding}
\vspace{-2ex}
\end{figure}

\vspace{-2ex}
{ \scriptsize \[
\GG_{\MPCP}
    \quad \is \quad
+
\begin{cases}
G(u, \msgFromTo{\procC}{\procA}{\textit{ack-}u}. \, 0)
\\
G(v, \msgFromTo{\procC}{\procA}{\textit{ack-}v}. \, 0)
\end{cases}
\text{ with }
\]
}
\vspace{-4ex}
{
\scriptsize \[
G(x, X)
    \quad \is \quad
    \msgFromTo{\procA}{\procB}{\textit{c-}x}. \,
    \msgFromTo{\procA}{\procB}{1}. \,
    \msgFromTo{\procA}{\procC}{1}. \,
    \msgFromTo{\procB}{\procC}{[x_1]}. \,
\mu t. \,
+
    \begin{cases}
    \msgFromTo{\procA}{\procB}{1}. \,
    \msgFromTo{\procA}{\procC}{1}. \,
    \msgFromTo{\procB}{\procC}{[x_1]}. \, t
    \vspace{-1ex}
    \\
    \quad \vdots
    \\
    \msgFromTo{\procA}{\procB}{n}. \,
    \msgFromTo{\procA}{\procC}{n}. \,
    \msgFromTo{\procB}{\procC}{[x_n]}. \, t
    \\
    \msgFromTo{\procA}{\procB}{\doneshort}. \,
    \msgFromTo{\procA}{\procC}{\doneshort}. \,
    \msgFromTo{\procB}{\procC}{\doneshort}. \,
    X
\end{cases}
.
\]
}
\hspace{-1.5ex} The HMSC encoding $H(\GG_{\MPCP})$ is illustrated in~\cref{fig:hmsc-mpcp-encoding}.
Intuitively, $\procC$~eventually needs to know which branch was taken in order to match $\textit{ack-}x$ with $\textit{c-}x$ from the beginning.
However, it can only know if there is no solution to the MPCP instance.
In the full proof in\iftoggle{arxiv}
{
\cref{proof:undec-intraswap},}
{
technical report~\cite{arxiv-version},}
we show that $\GG_{\MPCP}$ is $\intraswap$-implementable
iff
the MPCP instance has no~solution.

\begin{restatable}{theorem}{implementabilityIntraswapReorderingUndecidable}
\label{thm:implementability-intraswap-reordering-undecidable}
Checking implementability with regard to $\intraswap$ for $0$-reachable global types with directed choice is undecidable.
\end{restatable}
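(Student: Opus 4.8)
The plan is to reduce the modified Post correspondence problem (MPCP), which is undecidable, to checking $\intraswap$-implementability, using the global type $\GG_{\MPCP}$ displayed above as the image of an instance $(u_1,\dots,u_n),(v_1,\dots,v_n)$. First I would record the two easy structural facts: $\GG_{\MPCP}$ is $0$-reachable, since every loop has a $\doneshort$-branch leading through $X$ to $0$, and it uses only directed choice, since every branch point is a choice made by $\procA$ with receiver $\procB$. It then remains to prove the key equivalence that $\GG_{\MPCP}$ is $\intraswap$-implementable if and only if the instance has no solution; undecidability of the problem follows at once.

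For ``solution $\Rightarrow$ not $\intraswap$-implementable'', I would take a solution $(i_1,\dots,i_k)$, so that $u_{i_1}\cdots u_{i_k}=v_{i_1}\cdots v_{i_k}$, and run $\GG_{\MPCP}$ through the $u$-branch and the $v$-branch along exactly this index sequence. Role $\procC$ receives its indices from $\procA$ and its payload symbols from $\procB$; because the two tile concatenations coincide and the index stream is identical, $\procC$'s multiset of receive events is the same in both runs, and hence so is its entire set of $\intraswap$-reordered local views. Any CSM component for $\procC$ therefore behaves identically in both runs, while condition~(i) of \cref{def:intra-role-implementability} forces it to be able to emit both $\snd{\procC}{\procA}{\textit{ack-}u}$ and $\snd{\procC}{\procA}{\textit{ack-}v}$ after this common view. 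Choosing the wrong acknowledgement in the $u$-run produces a CSM trace containing $\msgFromTo{\procA}{\procB}{\textit{c-}u}$ together with $\textit{ack-}v$; since $\intraswap$ only permutes receive events and never alters the message multiset, this trace lies outside $\intraswaplang(\lang(\GG_{\MPCP}))$, contradicting condition~(ii).

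For the converse ``no solution $\Rightarrow$ $\intraswap$-implementable'', I would exhibit the erasure candidate implementation $\CSMl{H(\GG_{\MPCP}) \wproj_{\procA}}$ and argue that it $\intraswap$-implements $\GG_{\MPCP}$. The components for $\procA$ and $\procB$ are routine, so the work is all in $\procC$. The combinatorial heart is that, with no solution, every index sequence admitted by the loop --- all of which start with tile~$1$ --- satisfies $u_{i_1}\cdots u_{i_k}\neq v_{i_1}\cdots v_{i_k}$, so it has a first tile at which the $u$- and $v$-predictions diverge. Reading indices from $\procA$ and payloads from $\procB$ in the order that the projection by erasure records, $\procC$ consumes the (identical) tiles where $u_i=v_i$ in a finite-state loop and, upon reaching the first divergent tile, detects the divergence within look-ahead bounded by that tile's length --- using the fact that index-receptions and payload-receptions are distinguishable events to delimit tiles --- and so commits to the correct branch before it must send $\textit{ack-}x$. \cref{lm:projection-by-erasure-correct} and \cref{cor:erasure-canonical} provide the finite deterministic machine, \cref{thm:erasure-candidate-impl} and \cref{lm:finite-generalises-to-infinite} lift finite agreement to the full language, and \cref{lm:csm-closed-interswap} together with the analogous $\intraswap$-closure reconciles $\procC$'s synchronised reading with the reordered views required by $\intraswaplang(\hole)$.

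The hard part will be exactly this converse direction. The difficulty of the whole problem is concentrated in showing that ``no solution'' is strong enough to make $\procC$'s branch decision both \emph{finitely} determined and \emph{deadlock-free} in spite of the intra-role reordering, which lets $\procA$'s indices and $\procB$'s payloads drift arbitrarily far apart at $\procC$. The two delicate points I expect to spend the most effort on are verifying that the bounded-look-ahead tile delimitation genuinely survives the $\intraswap$-closure --- that no reordered view of a $u$-run can be mistaken for a $v$-run, nor conversely --- and that the erasure candidate is deadlock-free on every reachable configuration; by comparison, the ``solution $\Rightarrow$ not implementable'' direction is a clean indistinguishability argument.
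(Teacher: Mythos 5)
Your first direction (a solution forces non-$\intraswap$-implementability) is essentially the paper's argument and is sound in substance: since the index streams are identical and the tile concatenations coincide, the two per-sender message streams reaching $\procC$ are the same in the $u$-run and the $v$-run, so any local run of $\procC$'s machine witnessing the $u$-word (which exists by condition~(i) of \cref{def:intra-role-implementability}) can be replayed verbatim in the $v$-context, and the wrong acknowledgement then breaks condition~(ii) or deadlock-freedom. (Minor wording issues: $\intraswap$ preserves the per-sender subsequences, not just the multiset; and if the wrong trace cannot be completed to a maximal trace, what is violated is deadlock-freedom rather than condition~(ii). The paper is similarly terse here.)

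The converse direction, however, has a genuine gap: the erasure candidate implementation $\CSMl{H(\GG_{\MPCP}) \wproj_\procA}$ does \emph{not} $\intraswap$-implement $\GG_{\MPCP}$ when the instance has no solution --- it is not even deadlock-free. Take $n=1$ with $u_1 = ab$ and $v_1 = a$; this instance has no solution. The FSM $H(\GG_{\MPCP}) \wproj_\procC$ accepts both the $u$-views and the $v$-views, so the state reached after $\rcv{\procA}{\procC}{1}.\,\rcv{\procB}{\procC}{a}$ has a transition $\rcv{\procB}{\procC}{b}$ (the $u$-continuation) \emph{and} transitions $\rcv{\procA}{\procC}{1}$, $\rcv{\procA}{\procC}{\doneshort}$ (the $v$-continuations). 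In the run where $\procA$ chose the $u$-branch with index sequence $(1)$ and all sends have happened, the channels to $\procC$ hold $\doneshort$ (from $\procA$) and $b,\doneshort$ (from $\procB$), so $\rcv{\procA}{\procC}{\doneshort}$ is enabled; taking it leaves $\procC$ in a state whose only transition is $\rcv{\procB}{\procC}{\doneshort}$ while the head of the $\procB$-channel is $b$, and the whole configuration is a deadlock. The moral is that determinism of the FSM on \emph{words} does not tame the CSM's scheduling nondeterminism: absence of a solution excludes a \emph{complete} wrong parse, but not a \emph{partial} one that strands $\procC$. For the same reason the lemmas you invoke do not carry the weight you place on them: \cref{thm:erasure-candidate-impl} presupposes that $\lang_{\fin}(\GG)$ is implementable in the \emph{standard} sense (false for this instance, by the same deadlock), and there is no ``$\intraswap$-analogue'' of \cref{lm:csm-closed-interswap} --- the paper explicitly notes that CSM languages cannot in general be closed under $\intraswap$ with finitely many states, which is exactly why \cref{def:intra-role-implementability} is phrased with closures on both sides. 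Your bounded-look-ahead claim is also false: with $u_1=ab$, $v_1=a$ there is no ``first divergent tile'' detectable by look-ahead, since the divergence is a pure length mismatch visible only when the $\doneshort$ markers arrive. What the paper builds instead, and what your proposal is missing, is a purpose-built \emph{non-blocking} machine for $\procC$: it tracks the $u$-hypothesis (current tile index, position within the tile, number of $\doneshort$ messages received) while always being able to consume whichever message arrives next, falls back to an accept-everything $v$-mode as soon as the $u$-matching fails, and commits to $\textit{ack-}u$ or $\textit{ack-}v$ only after both $\doneshort$ messages; correctness of that final choice is precisely the statement that no index sequence yields equal concatenations.
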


This result carries over to HMSCs if we consider safe realisability with regard to $\intraswap$.

\begin{definition}[Safe realisability with regard to $\intraswap$]
 An HMSC $H$ is said to be \emph{safely realisable with regard to $\intraswap$} if there exists a deadlock-free CSM $\CSM{A}$ such that the following holds:
 (i)~$\lang(H)
            \subseteq
        \intraswaplang(\lang(\CSM{A}))$ and
 (ii) $\intraswaplang(\lang(H))
            =
        \intraswaplang(\lang(\CSM{A}))$.
\end{definition}

\begin{corollary}
Checking safe realisability with regard to $\intraswap$ for HMSCs is undecidable.
\end{corollary}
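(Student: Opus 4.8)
The plan is to reduce the modified Post correspondence problem (MPCP) to checking $\intraswap$-implementability by establishing the biconditional asserted just before the statement: the global type $\GG_{\MPCP}$ — which uses only directed choice and is $0$-reachable, hence lies in the scope of the theorem — is $\intraswap$-implementable if and only if the underlying MPCP instance has \emph{no} solution. Since MPCP is undecidable, a decision procedure for $\intraswap$-implementability would decide whether the instance has a solution, a contradiction. The stated corollary for HMSCs then follows immediately from \cref{thm:correctnessMPSTtoHMSC} (which gives $\lang(H(\GG_{\MPCP})) = \lang(\GG_{\MPCP})$) together with the syntactic match between the two notions of implementability/safe realisability with regard to $\intraswap$.

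For the forward direction (a solution implies non-implementability), I would assume the instance has a solution $(i_1,\ldots,i_k)$ with $i_1 = 1$ and $u_{i_1}\cdots u_{i_k} = v_{i_1}\cdots v_{i_k}$, and suppose towards a contradiction that a deadlock-free CSM $\CSM{A}$ $\intraswap$-implements $\GG_{\MPCP}$. Consider the two finite executions $\rho_u$ and $\rho_v$ of $\GG_{\MPCP}$ following, respectively, the $u$- and the $v$-branch while using exactly this index sequence before emitting $\doneshort$ and the acknowledgement. The decisive observation is that, because the two concatenations coincide, the messages $\procC$ receives (the indices relayed by $\procA$ and the spelled-out letters relayed by $\procB$) are literally the same in $\rho_u$ and $\rho_v$; hence $\rho_u\wproj_{\Alphabet_\procC}$ and $\rho_v\wproj_{\Alphabet_\procC}$ agree up to the final send event, so $\procC$ cannot tell the branches apart. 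Using condition~(i) I obtain runs of $\CSM{A}$ realising both executions up to $\intraswap$, and I then recombine the $\procA$- and $\procB$-behaviour of the $u$-branch with the $\procC$-behaviour that consumes the very same messages and emits $\snd{\procC}{\procA}{\textit{ack-}v}$. This is a genuine run, and its trace contains both $\msgFromTo{\procA}{\procB}{\textit{c-}u}$ and $\textit{ack-}v$: either $\procA$ cannot consume $\textit{ack-}v$, giving a reachable deadlock that contradicts deadlock-freedom, or it can, in which case $\lang(\CSM{A})$ contains a word to which no element of $\lang(\GG_{\MPCP})$ is $\intraswap$-related — since $\intraswap$ never rewrites message content — contradicting condition~(ii).

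For the backward direction (no solution implies implementability), I would exhibit an implementation explicitly. The machines for $\procA$ and $\procB$ come from projection by erasure and are unproblematic, since both are driven by the sender-chosen indices. For $\procC$ I let it read the index stream from $\procA$ and the letter stream from $\procB$ in the specified interleaving and reconstruct the branch: as no solution exists, for every index sequence with $i_1 = 1$ we have $u_{i_1}\cdots u_{i_k} \neq v_{i_1}\cdots v_{i_k}$, so once $\procC$ has seen the two $\doneshort$ messages the received data determines the chosen branch and $\procC$ emits the matching acknowledgement. I would then verify deadlock-freedom and conditions~(i) and~(ii), and finally lift correctness from finite to infinite executions via \cref{lm:finite-generalises-to-infinite}, which applies because $\GG_{\MPCP}$ is $0$-reachable.

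The hard part will be this backward direction. I must argue that $\procC$'s branch-reconstruction is realisable by a \emph{finite}-state machine even though a naive comparison of $u_{i_1}\cdots u_{i_k}$ against $v_{i_1}\cdots v_{i_k}$ appears to need unbounded memory, and I must ensure that condition~(ii) holds with \emph{equality} rather than mere inclusion once $\procC$'s two input streams are allowed to drift apart under $\intraswap$, i.e.\ that the composition produces no spurious words. The cleanest route is to take the regular language $\lang(\GG_{\MPCP})\wproj_{\Alphabet_\procC}$, determinise it to obtain $\procC$, and then show that no two $\intraswap$-related linearisations force conflicting acknowledgements — which is precisely the point where the absence of an MPCP solution is used. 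The forward direction, by contrast, is the conceptually clean indistinguishability argument sketched above and should go through with little friction.
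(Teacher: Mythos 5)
Your overall route to the corollary is the paper's: prove that $\GG_{\MPCP}$ is $\intraswap$-implementable iff the MPCP instance has \emph{no} solution, then transfer undecidability to HMSCs via the encoding $H(\cdot)$ and \cref{thm:correctnessMPSTtoHMSC}, observing that safe realisability of $H(\GG_{\MPCP})$ with regard to $\intraswap$ coincides with $\intraswap$-implementability of $\GG_{\MPCP}$. Your forward direction is also essentially the paper's argument and is sound, with one caveat of precision: the two projections $\rho_u\wproj_{\Alphabet_\procC}$ and $\rho_v\wproj_{\Alphabet_\procC}$ are \emph{not} literally equal (the tile boundaries make the canonical interleavings differ); only the per-sender streams, i.e.\ the channel contents towards $\procC$, coincide. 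Fortunately your recombination argument needs exactly that and nothing more, so this imprecision is not load-bearing.

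The genuine gap is in the backward direction, precisely where you declare "the cleanest route": taking $\procC$ to be the determinisation of $\lang(\GG_{\MPCP})\wproj_{\Alphabet_\procC}$ fails, because the resulting CSM has reachable deadlocks \emph{even when the MPCP instance has no solution}. Concretely, take $u_1 = a$, $v_1 = aa$, $u_2 = b$, $v_2 = c$ (no solution, since every $u$-concatenation is strictly shorter than the corresponding $v$-concatenation), and let $\procA$ choose the $u$-branch with index sequence $1,1$, so channel $\channel{\procA}{\procC}$ carries $1,1,\doneshort$ and channel $\channel{\procB}{\procC}$ carries $a,a,\doneshort$. After consuming $\rcv{\procA}{\procC}{1}\,.\,\rcv{\procB}{\procC}{a}$, the determinised DFA has a live transition on $\rcv{\procB}{\procC}{a}$ (the $v$-interpretation: second letter of $v_1$) as well as on the next index from $\procA$; both channel heads are available, so the CSM may fire $\rcv{\procB}{\procC}{a}$ and then $\rcv{\procA}{\procC}{1}$, reaching a state whose only outgoing transitions expect the letters of $v_1$ from $\procB$ --- while both channel heads are now $\doneshort$. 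Role $\procC$ is stuck in a non-final state: a reachable deadlock, so this CSM is not an implementation in either sense. The moral, and the idea your sketch is missing, is that $\procC$'s machine must be designed so that a misinterpretation can never strand it; the paper achieves this, and thereby also resolves your unbounded-memory worry, with an explicit bounded machine: $\procC$ stores only the current tile index $\texttt{i}$, the position $\texttt{j}$ inside $u_{\texttt{i}}$, a $\doneshort$-counter $\texttt{d}$, and a one-bit hypothesis $\texttt{x}$; it alternately consumes one index from $\procA$ and then matches the next $\card{u_{\texttt{i}}}$ letters from $\procB$ against $u_{\texttt{i}}$, and on the \emph{first} mismatch it flips $\texttt{x}$ to $v$ and thereafter consumes all remaining messages in any order. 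Absence of an MPCP solution is used only to show this on-the-fly matching never misclassifies the $v$-branch (which gives conditions (i) and (ii)); without replacing your DFA by such a machine, the backward direction, and hence the corollary, does not go through.
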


It is obvious that a terminal vertex is reachable from every vertex in $H(\GG_{\MPCP})$.
In fact, the HMSC encoding for $\GG_{\MPCP}$ also satisfies a number of channel restrictions.
The HMSC $H(\GG_{\MPCP})$ is existentially $1$-bounded, $1$-synchronisable and half-duplex~\cite{DBLP:journals/corr/abs-2209-10328}.
For details on these channel restrictions, we refer to work by Stutz and Zufferey \cite[Sec.~3.1]{DBLP:journals/corr/abs-2209-10328}.

The MPCP encoding only works since receive events can be reordered unboundedly in an execution.
If we amended the definition of $\intraswap$ to give each receive event a budget that depletes with every reordering, this encoding would not be possible.
We leave a detailed analysis for future work.
 \section{Related Work}
\label{sec:related}

In this section, we solely cover related work which was not discussed before.

\smallskip\noindent
{\sffamily\bfseries
Multiparty Session Types.}
Session types originate in process algebra and were first introduced by Honda et al.~\cite{DBLP:conf/concur/Honda93} for binary sessions.
For systems with more than two roles, they have been extended to multiparty session types~\cite{DBLP:conf/popl/HondaYC08}.
We explained MST frameworks with classical projection operators.
Other approaches do not focus on projection but only apply ideas from MST without the need for global types~\cite{DBLP:journals/pacmpl/ScalasY19,DBLP:conf/cav/LangeY19}.

\smallskip\noindent
{\sffamily\bfseries
Completeness and Sender-driven Choice.}
Our decidability result applies to global types with sender-driven choice.
To the best of our knowledge, the work by Castagna et al.~\cite{DBLP:journals/corr/abs-1203-0780} is the only one to attempt completeness for global types with sender-driven choice.
However, their definition of completeness  is ''less demanding then other ones``~\cite[Abs.]{DBLP:journals/corr/abs-1203-0780}.
For one global type, they accept different implementations that generate different sets of traces~\cite[Def.\ 4.1 and Sec.\ 5.3]{DBLP:journals/corr/abs-1203-0780}.
Their conditions, given as inference rules, are not effective and their algorithmically checkable conditions can only exploit local information to disambiguate choices.
In contrast, Majumdar et al.~\cite{DBLP:conf/concur/MajumdarMSZ21} employ a global availability analysis but, as classical projection operator, it suffers from the shortcomings presented in this work.
For a detailed overview of MST frameworks with sender-driven choice, we refer to their work~\cite{DBLP:conf/concur/MajumdarMSZ21}.
The global types by Castellani et al.~\cite{DBLP:journals/corr/abs-2203-12876} specify send and receive events independently and allow to receive from different senders.
Dagnino et al.~\cite{DBLP:conf/coordination/DagninoGD21} consider similar global types but each term requires to send to a single receiver and to receive from a single sender upon branching though.

\smallskip\noindent
{\sffamily\bfseries
On the Synchronous Implementability Problem.}
We could not find a reference that shows decidability of the implementability problem in a synchronous setting, i.e., without channels.
Before giving a proof sketch, let us remark that there are global types that can be implemented synchronously but not asynchronously, e.g.,
{
\scriptsize
$
\msgFromTo{\procA}{\procB}{l}.\,
 \msgFromTo{\procC}{\procB}{l}.\, 0
+
\msgFromTo{\procA}{\procB}{r}.\,
 \msgFromTo{\procC}{\procB}{r}.\, 0
$
}
because $\procB$ can force the right choice by $\procC$.
We sketch how to prove decidability of the synchronous implementability problem for global types (with sender-driven choice).
One defines the synchronous semantics of CSMs and HMSCs as expected.
For global types, one uses the independence relation $\mathcal{I}$ (Def.~\ref{def:independence-relation}), which defines reasonable reorderings for synchronous events in a distributed setting, similar to the indistinguishability relation $\interswap$.
It is straightforward that the HMSC encoding $H(\hole)$ for global types~\cite{DBLP:journals/corr/abs-2209-10328} also works for the synchronous setting (cf.~Thm.~\ref{thm:correctnessMPSTtoHMSC}).
Thus, every implementation for $H(\GG)$ is also an implementation for $\GG$.
For the asynchronous setting, we used \cite[Thm.~13]{DBLP:journals/tse/AlurEY03}, which shows that the canonical candidate implementation implements an HMSC if it is implementable.
Alur et al.~\cite[Sec.\ 8]{DBLP:journals/tse/AlurEY03} also considered the synchronous setting.
They observe that both Theorem 5 and 8, basis for Theorem 13, stay valid under these modified conditions.
Together with our results, the erasure candidate implementation implements a global type if it is implementable.
Because of the synchronous semantics, its state space is finite and can be model-checked against the global type, yielding a~\mbox{PSPACE}-procedure for the synchronous implementability problem, and thus decidability.
Closest are works by Jongmans and Yoshida~\cite{DBLP:conf/esop/JongmansY20} and Glabbeek et al.~\cite{DBLP:conf/lics/GlabbeekHH21}.
Jongmans and Yoshida consider quite restrictive synchronous semantics for global types \cite[Ex.\ 3]{DBLP:conf/esop/JongmansY20} that does not allow the natural reorderings in a distributed setting, as enabled by $\mathcal{I}$, e.g.,
$\msgFromTo{\procA}{\procB}{\val}.\,
 \msgFromTo{\procC}{\procD}{\val}.\, 0$ (Ex.\ \ref{ex:implementable-but-not-I-closed})
is considered unimplementable.
Glabbeek et al.~\cite{DBLP:conf/lics/GlabbeekHH21} present a projection operator that is complete for various notions of lock-freedom, a typical liveness property, and investigate how much fairness is required for~those.

\smallskip\noindent
{\sffamily\bfseries
Subtyping and MST Extensions.}
In this work, we do not distinguish between local types and implementations but use local types directly as implementations.
Intuitively, subtyping studies how to give freedom in the implementation while preserving the correctness properties.
The intra-role indistinguishability relation~$\intraswap$, which allows to reorder receive events for a role, resembles subtyping to some extent, e.g., the work by Cutner et al.~\cite{DBLP:conf/ppopp/CutnerYV22}.
A detailed investigation of this relation is beyond the scope of this work.
For details on subtyping, we refer to work by
Chen et al.~\cite{DBLP:conf/ppdp/ChenDY14,DBLP:journals/lmcs/ChenDSY17},
Lange and Yoshida~\cite{DBLP:conf/fossacs/LangeY17}, and Bravetti et al.~\cite{DBLP:journals/tcs/BravettiCZ18}.
Various extensions to make MST verification applicable to more scenarios were studied:
for instance delegation~\cite{DBLP:conf/esop/HondaVK98,DBLP:conf/popl/HondaYC08,DBLP:journals/tcs/CastellaniDGH20},
dependent session types~\cite{DBLP:conf/ppdp/ToninhoCP11,DBLP:journals/corr/abs-1208-6483,DBLP:conf/fossacs/ToninhoY18},
parametrised session types~\cite{dblp:journals/scp/charalambidesda16,DBLP:journals/corr/abs-1208-6483},
gradual session types~\cite{DBLP:journals/jfp/IgarashiTTVW19}, or dynamic self-adaption \cite{DBLP:conf/ecoop/00020DG21}.
Context-free session types~\cite{dblp:conf/icfp/thiemannv16,dblp:journals/toplas/keizerbp22} provide a more expressive way to specify protocols. Research on fault-tolerant MSTs \cite{DBLP:journals/pacmpl/VieringHEZ21,DBLP:conf/concur/BarwellSY022} investigates ways to weaken the strict assumptions about reliable channels.

\smallskip\noindent
{\sffamily\bfseries
Communicating State Machines.}
The connection of MSTs and CSMs was studied soon after MSTs had been proposed~\cite{DBLP:conf/esop/DenielouY12}. CSMs are known to be Turing-powerful~\cite{DBLP:journals/jacm/BrandZ83}.
Decidable classes have been obtained for different semantics, e.g.,
half-duplex communication for two roles~\cite{DBLP:journals/iandc/CeceF05},
input-bounded~\cite{DBLP:conf/concur/BolligFS20}, and unreliable/lossy channels~\cite{DBLP:conf/cav/AbdullaBJ98}, as well for restricted communication topology~\cite{DBLP:journals/acta/PengP92, DBLP:conf/tacas/TorreMP08}.
Similar restrictions for CSMs are existential boundedness~\cite{DBLP:journals/fuin/GenestKM07} and synchronisability~\cite{DBLP:conf/cav/BouajjaniEJQ18,DBLP:conf/fossacs/GiustoLL20}.
It was shown that global types can only express existentially $1$-bounded, $1$-synchronisable and half-duplex communication~\cite{DBLP:journals/corr/abs-2209-10328} while Bollig et al.~\cite{DBLP:conf/concur/BolligGFLLS21} established a connection between synchronisability and MSO logic.

\smallskip\noindent
{\sffamily\bfseries
High-level Message Sequence Charts}
Globally-cooperative HMSCs were independently introduced by Morin~\cite{DBLP:conf/stacs/Morin02} as c-HMSCs.
Their communication graph is weakly connected.
The class of bounded HMSCs \cite{DBLP:conf/concur/AlurY99} requires it to be strongly connected.
Historically, it was introduced before the class of globally-cooperative HMSCs and, after the latter has been introduced, safe realisability for bounded HMSCs was also shown to be EXPSPACE-complete~\cite{DBLP:journals/tcs/Lohrey03}.
This class was independently introduced as regular HMSCs by Muscholl and Peled \cite{DBLP:conf/mfcs/MuschollP99}.
Both terms are justified:
the language generated by a \emph{regular} HMSC is regular and
every \emph{bounded} HMSC can be implemented with universally bounded channels.
In fact, a HMSC is bounded if and only if it is a globally-cooperative and it has universally bounded channels~\cite[Prop.~4]{DBLP:journals/jcss/GenestMSZ06}.
 \section{Conclusion}

We have proven decidability of the implementability problem for global types with generalised choice from MSTs --- under the mild assumption that protocols can (but do not need to) terminate.
To point at the origin for incompleteness of classical projection operators, we gave a visual explanation of the projection with various merge operators on finite state machines, which define the semantics of global and local types.
To prove decidability, we formally related the implementability problem for global types with the safe realisability problem for HMSCs.
While safe realisability is undecidable in general, we showed that implementable global types do always belong to the class of globally-cooperative HMSCs.
There are global types that are outside of this class but the syntax of global types allowed us to prove that those cannot be implemented. Another key was the extension of the HMSC results to infinite executions.
We also gave a comprehensive overview of MSC techniques and adapted some to the MST setting.
Furthermore, we introduced a performance-oriented generalisation of the implementability problem  which, however, we proved to be undecidable in general.
 
\phantomsection\label{paper-last-page}

\clearpage

\clearpage
\appendix

\section{Definitions for \cref{sec:mst}: \newline Multiparty Session Types}

\subsection{Semantics of Communicating State Machines} \label{app:semantics-csm}

With $\channels = \set{\channel{\procA}{\procB} \mid \procA,\procB\in \Procs, \procA\neq \procB}$, we denote the set of channels.
The set of global states of a CSM is given by $\prod_{\procA \in \Procs} Q_\procA$.
Given a global state $q$, $q_\procA$ denotes the state of $\procA$ in $q$.
A~\emph{configuration} of a CSM $\CSM{A}$ is a pair $(q, \xi)$, where $q$ is a global state and
$\xi : \channels \rightarrow \MsgVals^\infty$ is a mapping of each channel to its current content.
The initial configuration $(q_0, \xi_\emptystring)$ consists of  a global state $q_0$ where the state of each role is the initial state  $q_{0,\procA}$ of $A_\procA$ and a mapping~$\xi_\emptystring$, which maps each channel to the empty word~$\emptystring$.
A~configuration $(q, \xi)$ is said to be \emph{final} iff each individual local state $q_\procA$ is final for every $\procA$ and $\xi$ is $\xi_{\emptystring}$.

The global transition relation $\rightarrow$ is defined as follows:
\vspace{-1ex}
\begin{itemize}
\item
$(q,\xi) \xrightarrow{\snd{\procA}{\procB}{\val}} (q',\xi')$ if
$(q_\procA, \snd{\procA}{\procB}{\val}, q'_\procA)\in\delta_\procA$,
$q_\procC = q'_\procC$ for every role $\procC \neq \procA$,
$\xi'(\channel{\procA}{\procB}) =  \xi(\channel{\procA}{\procB})\cdot\val$ and $\xi'(c) = \xi(c)$ for every other channel $c\in \channels$.

\item
$(q,\xi) \xrightarrow{\rcv{\procA}{\procB}{\val}} (q',\xi')$ if
$(q_\procB, \rcv{\procA}{\procB}{\val}, q'_\procB)\in\delta_\procB$,
$q_\procC = q'_\procC$ for every role $\procC \neq \procB$,
$\xi(\channel{\procA}{\procB}) = \val\cdot \xi'(\channel{\procA}{\procB})$
and $\xi'(c) = \xi(c)$ for every other channel $c\in \channels$.

\item
$(q,\xi) \xrightarrow{\emptystring} (q',\xi)$ if
$(q_\procA, \emptystring, q'_\procA)\in\delta_\procA$ for some role
$\procA$, and
$q_\procB = q'_\procB$ for every role $\procB \neq \procA$.
\end{itemize}
\vspace{-1ex}
\noindent
A run of the CSM always starts with an initial configuration $(q_0, \xi_0)$, and is a finite or infinite sequence
$(q_0, \xi_0) \xrightarrow{w_0} (q_1, \xi_1) \xrightarrow{w_1} \ldots$
for which  $(q_i,\xi_i) \xrightarrow{w_i} (q_{i+1},\xi_{i+1})$.
The word $w_0w_1 \ldots \in\Sigma^\infty$ is said to be the \emph{trace} of the run.
A run is called maximal if it is infinite or finite and ends in a final configuration. As before, the trace of a maximal run is maximal.
The language $\lang(\CSM{A})$ of the CSM $\CSM{A}$ consists of its set of maximal traces.
A deadlock of $\CSM{A}$ is a reachable configuration without outgoing transitions that is not final.

\subsection{Semantics for Local Types}
\label{app:semantics-local-types}

\begin{definition}[Semantics for local types]
\label{def:language-local-mst}
Given a local type $L$ for role $\procA$, we index syntactic subterms as for the semantics of global types.
We construct a state machine $\semlocal(L) = (Q, \Sigma_\procA, δ, q₀, F)$ where
\vspace{-1ex}
\begin{itemize}
\item $Q$ is the set of all indexed syntactic subterms in $L$,
\item $δ$ is the smallest set containing \\
            $(
            [\IntCh_{i ∈ I} \snd{}{\procB_i}{\val_i.[L_i, k_i]}, k], \snd{\procA}{\procB_i}{\val_i},
            [L_i, k_i]
            )$ and
            $(
            [\ExtCh_{i ∈ I} \rcv{\procB_i}{}{\val_i.[L_i, k_i]}, k],
            \rcv{\procB_i}{\procA}{\val_i},
            [L_i, k_i])$
            for each $i ∈ I$,
as well as
            $([μ t. [L', k'_2], k'_1], ε, [L', k'_2])$
            and $([t, k'_3], ε, [μ t. [L', k'_2], k'_1])$,
\item $q₀ = [L, 1]$ and
$F = \set{[0, k] \mid k \text{ is an index for subterm }0}$.
\end{itemize}
\vspace{-1ex}
We define the semantics of $L$ as language of this automaton:
$\lang(L) = \lang(\semlocal(L))$.
\end{definition}
 \section{Additional Explanation for Different Merge Operators on FSMs from \cref{sec:from-global-to-local}}
\label{sec:additional-explanation}

\subparagraph*{Visual Explanation of the Parametric Projection Operator: Collapsing Erasure} 

Here, we describe \emph{collapsing erasure} more formally.
Let $\GG$ be some global type and $\procC$ be the role onto which we project.
We apply the parametric projection operator to the state machine $\semglobalsync(\GG)$.
It projects each transition label onto the respective event for role~$\procC$:
every forward transition label
$\msgFromTo{\procA}{\procB}{\val}$
turns to
$\snd{\procA}{\procB}{\val}$
if $\procC = \procA$,
$\rcv{\procA}{\procB}{\val}$
if $\procC = \procB$,
and $\emptystring$ otherwise.
Then, it collapses neutral states with a single successor: $q_{1\mid2}$ replaces two states $q_1$ and $q_2$ if
$q_1 \xrightarrow{\emptystring} q_2$ is the only forward transition for $q_1$.
In case there is only a backward transition from $q_1$ to $q_2$, the state $q_{1\mid2}$ is also final.
This accounts for loops a role is not part of.

We call this procedure collapsing erasure as it erases interactions that do not belong to a role and collapses some states.
It is common to all the presented merge operators.
This procedure yields a state machine over $\Alphabet_\procC$.
It is straightforward that it is still ancestor-recursive, free from intermediate recursion and non-merging.
However, it might not be dense.
In fact, it is not dense if $\procC$ is not involved in some choice with more than one branch.

\subparagraph*{Parametric Merge in the Visual Explanation}
The parametric projection operator applies the merge operator for these cases. Visually, these correspond precisely to the remaining neutral states (since all neutral states with a single successor have been collapsed). For instance, we have a neutral state $q_1$ with $q_1 \xrightarrow{\emptystring} q_2$ and $q_1 \xrightarrow{\emptystring} q_3$ for $q_2 \neq q_3$.
Through the parametric projection operator, the merge operator may be indirectly called recursively.
Thus, we explain the merge operators for two states (and their cones) in general.
No information is propagated when the merge operator recurses and recursion variables are never unfolded.
Thus, we can ignore backward transitions and consider the cones of $q_2$ and~$q_3$. Intuitively, we iteratively apply the merge operator from lower to higher levels. However, we might need descend again when merge is applied recursively.
Similar to the syntactic version, we do only explain the $2$-ary case but the reasoning easily lifts to the $n$-ary case.

\subparagraph*{Visual Explanation of Plain Merge} 

The plain merge is not applied recursively.
Thus, we consider $q_1$ with $q_1 \xrightarrow{\emptystring} q_2$ and $q_1 \xrightarrow{\emptystring} q_3$ for $q_2 \neq q_3$ such that $q_1$ has the lowest level for which this holds.
Hence, we can assume that each cone of $q_2$ and $q_3$ does not contain neutral states. Then, the plain merge is only defined if there is an isomorphism between the states of both cones that satisfy the following conditions:
\vspace{-3.5ex}
\begin{itemize}
 \item it preserves the transition labels and hence the kind of states, and
 \item if a state has a backward transition to a state outside of the cone, its isomorphic state has a transition to the same state
\end{itemize}
\vspace{-2.5ex}
If defined, the result is $q_1$ with its cone (and $q_2$ with its cone is removed).

\subparagraph*{Visual Explanation of Semi-full Merge}The semi-full merge applies itself recursively.
Thus, we consider two states $q_2 \neq q_3$ that shall be merged.
As before, we can assume that each cone of $q_2$ and $q_3$ does not contain neutral states. In addition to plain merge, the semi-full merge allows to merge receive states.
For these, we introduce a new receive state $q_{2\mid3}$ from which all new transitions start.
For all possible transitions from $q_2$ and $q_3$, we check if
there is a transition with the same label from the other state.
For the ones not in common, we simply add the respective transition with the state it leads to and its respective cone.
For the ones in common, we recursively check if the two states, which both transitions lead to, can be merged.
If not, the semi-full merge is undefined.
If so, we add the original transition to the state of the respective merge and keep its cone.

\subparagraph*{Visual Explanation of Full Merge}
Intuitively, the full merge simply applies the idea of the semi-full merge to another case.
For the semi-full merge, one can recursively apply the merge operator when a reception was common between two states to merge.
The full merge operator allows to descend for recursion variable binders.
 \section{Formalisation for \cref{sec:standard-implementability-decidable}: \newline Implementability for Global Types from MSTs is Decidable}

\subsection{Definitions for \cref{sec:hmscs}}
\label{app:hmscs}

\begin{definition}[Concatenation of MSCs] \label{def:concatenation-msc}
Let $M_i = (\eventnodes_i, p_i, f_i, l_i, (\leq^i_p)_{\procA\in𝓟})$ for $i \in \set{1,2}$ where $M_1$ is a BMSC and $M_2$ is an MSC
with disjoint sets of events, i.e., $\eventnodes_1\cap \eventnodes_2 = \emptyset$.
We define their \emph{concatenation} $M_1\cdot M_2$ as the MSC
$M = (\eventnodes, p, f, l, (\leq_p)_{\procA\in𝓟})$ where:
\vspace{-1ex}
\begin{itemize}
 \item $\eventnodes
 \quad \is \quad
 \eventnodes₁ \; \union \; \eventnodes₂$,
 \item $
            \text{for }
            \zeta \in \set{p, f, l}:
            \quad
            \zeta(e) \is
            \begin{cases}
              \zeta(e)         & \text{if } e \in \eventnodes₁ \\
              \zeta(e)         & \text{if } e \in \eventnodes₂
            \end{cases}
       $, and
 \item $\forall \procA \in 𝓟: \quad$
$
                \leq_{\procA}
                    \quad \is \quad
                \leq^1_{\procA}
                \; \union \;
                \leq^2_{\procA}
                \; \union \;
                \set{(e₁,e₂) \mid  \, e₁\in \eventnodes₁ \, \land \, e₂\in \eventnodes₂ \, \land  p(e_1) = p(e_2)=\procA }.
       $
\end{itemize}
\end{definition}

\begin{definition}[Language of an HMSC] \label{def:lang-hmsc}
Let $H = (V, \edges, \vertexA^I\negmedspace, V^T\negmedspace\!, \mu)$ be an HMSC.
The language of $H$ is defined as
\begin{small}
\begin{align*}
\lang(H) \is
& \;
\set{
    w
    \mid
    w \in \lang(\mu(\vertexA_1) \mu(\vertexA_2) \ldots \mu(\vertexA_n))
    \text{ with }
    \vertexA_1 = v^I
    \land
    \forall \, 0 \leq i < n: \,
    (v_i, v_{i+1}) \in \edges
    \land
    v_n \in V^T
}
\\
&
\;
\union \;
\set{
    w \mid
    w \in \lang(\mu(\vertexA_1) \mu(\vertexA_2) \ldots )
    \text{ with }
    \vertexA_1 = v^I
    \land
    \forall \, i \geq 0: \,
    (v_i, v_{i+1}) \in \edges
} \;.
\end{align*}
\end{small}
\end{definition}

\subsection{HMSC Encoding for Global Types}
\label{app:hmsc-encoding}

\begin{definition}[Encoding global types as HMSCs~\cite{DBLP:journals/corr/abs-2209-10328}]
\label{def:hmsc-encoding}
In the translation, the following notation is used:
$M_\emptyset$ is the empty BMSC ($\eventnodes = ∅$) and
$M( \msgFromTo{\procA}{\procB}{\val} )$ is the BMSC with two event nodes: $e₁$, $e₂$ such that
    $f(e₁) = e₂$,
    $l(e₁) = \snd{\procA}{\procB}{\val}$, and
    $l(e₂) = \rcv{\procA}{\procB}{\val} \,$.

Let $\GG$ be a global type, we construct an HMSC
$H(\GG) = (V,\edges,v^I,V^T,μ)$~with

\smallskip
\begin{small}
$
\begin{array}{llll}
V     = \; & \set{G' \; \mid \; G' \text{ is a subterm of } \GG } \; ∪ \;
\\ &
        \set{ (\sum_{i ∈ I} \msgFromTo{\procA}{\procB_i}{\val_i}.G_i, j) \; \mid \;
\sum_{i ∈ I} \msgFromTo{\procA}{\procB_i}{\val_i}.G_i  \text{ occurs in } \GG ∧ j∈ I }
        \phantom{some }
    \vspace{1.5ex}
        \\
\edges = \; & \set{ (μt.G', G') \; \mid \; μt.G'  \text{ occurs in } \GG }  \; ∪ \;  \set{ (t, μt.G') \; \mid \; t, μt.G'  \text{ occurs in } \GG }
        \vspace{0.5ex}
            \\
           &  ∪ \; \set{ (\sum_{i ∈ I} \msgFromTo{\procA}{\procB_i}{\val_i}.G_i, (\sum_{i ∈ I} \msgFromTo{\procA}{\procB_i}{\val_i}.G_i,j))  \; \mid \;
(\sum_{i ∈ I} \msgFromTo{\procA}{\procB_i}{\val_i}.G_i,j) ∈ V}
        \vspace{0.5ex}
          \\
          & ∪ \; \set{ ( (\sum_{i ∈ I} \msgFromTo{\procA}{\procB_i}{\val_i}.G_i, j), G_j) \; \mid \;
(\sum_{i ∈ I} \msgFromTo{\procA}{\procB_i}{\val_i}.G_i, j) ∈ V }
    \vspace{1.5ex}
          \\
v^I  = \; & \GG \qquad \;
V^T  = \;  \set{0} \qquad \;
μ(v) = \;
\begin{cases}
    M( \msgFromTo{\procA}{\procB_i}{\val_j})   & \text{if } v = (\sum_{i ∈ I} \msgFromTo{\procA}{\procB_i}{\val_i}.G_i\}, j) \\
    M_\emptyset                     & \text{otherwise}
\end{cases}
\end{array}
$
\end{small}
\end{definition}

\subsection{Proof of \cref{lm:projection-by-erasure-correct}: Projection by Erasure is Correct}
\label{proof:projection-by-erasure-correct}

\projectionByErasureCorrect*

\begin{proof}
Let $H = (V, \edges, \vertexA^I\negmedspace, V^T\negmedspace\!, \mu)$ be an HMSC.
For every $\vertexA \in V$, it is straightforward that
the construction of $\mu(v) \wproj_\procA$ yields
$\lang(\mu(v)) \wproj_{\Alphabet_\procA}
    =
 \lang(\mu(v) \wproj_\procA)$ $(1)$.
We recall that $\interswap$ does not reorder events by the same role:
$w \interswap w'$ for $w \in \Alphabet_\procA$ iff $w = w'$ $(2)$.

The following reasoning proves the claim where the first equivalence follows from the construction of the transition relation of $H \wproj_\procA$:
\vspace{-2ex}
\begin{align*}
    & w \in \lang(H \wproj_\procA) \\
\Leftrightarrow \quad
    & w = w_1 \ldots
    \text{, there is a path } \vertexA_1, \ldots \text{ in } H
    \text{ and }
    w_i \in \lang(\mu(\vertexA_i) \wproj_\procA)
    \text{ for every } i \\
\overset{(1)}{\Leftrightarrow} \quad
    & w = w_1 \ldots
    \text{, there is a path } \vertexA_1, \ldots \text{ in } H
    \text{ and }
    w_i \in \lang(\mu(\vertexA_i)) \wproj_{\Alphabet_\procA}
    \text{ for every } i \\
\overset{(2)}{\Leftrightarrow} \quad
    & w \in \lang(H) \wproj_{\Alphabet_\procA}
\end{align*}
\end{proof}

\subsection{Proof of \cref{thm:erasure-candidate-impl}: \newline
Erasure Candidate Implementation is Sufficient}
\label{proof:erasure-candidate-impl}

{
\renewcommand{\footnote}[1]{}
\erasureCandidateImpl*
}

\begin{proof}
We first use the correctness of the global type encoding (\cref{thm:correctnessMPSTtoHMSC}) to observe that $\lang_{\fin}(\GG) = \lang_{\fin}(H(\GG))$.
Theorem 13 by Alur et al.~\cite{DBLP:journals/tse/AlurEY03} states that the canonical candidate implementation implements $\lang_{\fin}(H(\GG))$ if it is implementable.
\Cref{cor:erasure-canonical} and the fact that the FSM for each role is deterministic by construction allows us to replace every $A_\procA$ from the canonical candidate implementation with the projection by erasure $H(\GG) \wproj_\procA$ for every role $\procA$ which proves the claim.
\end{proof}

\subsection{Proof of \cref{lm:finite-generalises-to-infinite}:
''Finite Implementation`` Generalises to Infinite Case for $\pmb{0}$-reachable Global Types}
\label{proof:finite-generalises-to-infinite}

{
\finiteGeneralisesToInfinite*
}

\begin{proof}
By assumption, we know that $\CSM{A}$ is deadlock-free and
$\lang_{\fin}(\CSM{A}) = \lang_{\fin}(\GG)$.We prove the claim by showing both inclusions.

\textbf{First}, we show that
$\lang_{\inf}(\CSM{A}) \subseteq \lang_{\inf}(\GG)$.
For this direction, let $w$ be a word in $\lang_{\inf}(\CSM{A})$.
We need to show that there is a run $\rho$ in $\semglobalsync(\GG)$ such that $w \preceq_\interswap^\omega \SyncToAsync(\trace(\rho))$.
Since $\GG$ is $0$-reachable, we know that for every $u \in \pref(w)$, it holds that $u \in \pref(\lang_{\fin}(\GG))$.
Thus, there exists a finite run $\rho$ (that does not necessarily end in a final state) and $u'$ such that $u.u' \interswap \trace(\rho)$.
We call $\rho$ a witness run.
Intuitively, we need to argue that every such witness run for $u$ can be extended when appending the next event $x$ from $w$ to obtain $ux$.
In general, this does not hold for every choice of witness run.
However, because of monotonicity, any run (or rather a prefix of it) for an extension $ux$ can also be used as witness run for $u$.
Thus, we make use of the idea of prophecy variables \cite{DBLP:conf/lics/AbadiL88} and assume an oracle which picks the correct witness run for every prefix $u$.
This oracle does not restrict the next possible events in any way.
From here, we apply the same idea as Majumdar et al.\ for the proof of Lemma~41~\cite{DBLP:conf/concur/MajumdarMSZ21}.
We construct a tree~$\mathcal{T}$ such that each node represents a run~$\rho$ of some finite prefix $w'$ of $w$.
The root's label is the empty run.
For every node labelled with $\rho$, the children's labels extend $\rho$ by a single transition.
The tree $\mathcal{T}$ is finitely branching by construction of $\semglobalsync(\GG)$.
With König's Lemma, we obtain an infinite path in $\mathcal{T}$ and thus an infinite run~$\rho$ in $\semglobal(\GG)$ with $w \preceq^\omega_\interswap \SyncToAsync(\trace(\rho))$.
From this, it follows that
$w \in \lang_{\inf}(\GG)$.

\textbf{Second}, we show that
$\lang_{\inf}(\GG) \subseteq \lang_{\inf}(\CSM{A})$.
Let $w$ be a word in $\lang_{\inf}(\GG)$.
Eventually, we will apply the same reasoning with König's lemma to obtain an infinite run in $\CSM{A}$ for $w$.
Inspired from the first statement of Lemma 25 by Majumdar et al.~\cite{DBLP:conf/concur/MajumdarMSZ21}, we show:
\vspace{-1ex}
\begin{enumerate}[(i)]
 \item for every prefix $w' \in \pref(w)$, there is a run $\rho'$ in $\CSM{A}$ such that $w' \preceq \trace(\rho')$, and\label{claim:prefix-run}
 \item for every extension $w'x$ where $x$ is the next event in $w$, the run $\rho'$ can be extended. \label{claim:extendable}
\end{enumerate}
\vspace{-1ex}
We prove Claim \ref{claim:prefix-run} first.
We first observe that, since $\GG$ is $0$-reachable, there is an extension $w''$ of $w'$ with $w'' \in \lang(\GG)$.
By construction, we know that there is a run $\rho''$ in $\CSM{A}$ for~$w''$.
For $\rho'$, we can simply take the prefix of $\rho''$ that matches~$w'$.
This proves Claim \ref{claim:prefix-run}.

Now, let us prove Claim \ref{claim:extendable}.
Similar to the first case, we will use prophecy variables and an oracle to pick the correct witness run that we can extend.
Again, because of monotonicity, any run (or rather a prefix of it) for an extension $w'x$ can also be used as witness run for $w'$.
As before, we make use of the idea of prophecy variables \cite{DBLP:conf/lics/AbadiL88}, assume an oracle which picks the correct witness run for every prefix $w'$, and this oracle does not restrict the roles in any way.
From this, Claim \ref{claim:extendable} follows.

From here, we (again) use the same reasoning as Majumdar et al.\ for the proof of Lemma~41~\cite{DBLP:conf/concur/MajumdarMSZ21}.
We construct a tree $\mathcal{T}$ such that each node represents a run $\rho$ of some finite prefix $w'$ of~$w$.
The root's label is the empty run.
For every node labelled with $\rho$, the children's labels extend~$\rho$ by a single transition.
The tree $\mathcal{T}$ is finitely branching by construction of $A_\procA$ for every role~$\procA$.
With König's Lemma, we obtain an infinite path in~$\mathcal{T}$ and, thus, an infinite run~$\rho$ in $\CSM{A}$ with $w \preceq^\omega_\interswap \trace(\rho)$.
From this, it follows that
$w \in \lang(\CSM{A})$.
\end{proof}

\subsection{Formalisation for \cref{lm:implementability-entails-glob-coop}: \newline Implementability entails Globally Cooperative}
\label{proof:imp-entails-glob-coop}

\begin{definition}[Matching Sends and Receptions] In a word $w = e_1 \ldots \in \Alphabet^\infty$,
a send event $e_i = \snd{\procA}{\procB}{\val}$ is said to be \emph{matched} by a receive event $e_j = \rcv{\procA}{\procB}{\val}$ if $i < j$ and
$\MsgVals((e_1 \ldots e_i) \wproj_{\snd{\procA}{\procB}{\_}})$
=
$\MsgVals((e_1 \ldots e_j) \wproj_{\rcv{\procA}{\procB}{\_}})$.
\end{definition}

\implementabilityEntailsGlobCoop*

\begin{proof}
We prove our claim by contraposition:
assume there is a loop $\vertexA_1, \ldots, \vertexA_n$ such that the communication graph of $\mu(\vertexA_1)\ldots\mu(\vertexA_n)$ is not weakly connected.
By construction of $H(\GG)$, we know that every vertex is reachable so
there is a path
$
\vertexB_1 \ldots \vertexB_m
\vertexA_1 \ldots \vertexA_n
$
in $H(\GG)$
for some~$m$ and vertices $\vertexB_1$ to $\vertexB_m$ such that $\vertexB_1 = v^I$.
Because $\GG$ is $0$-reachable,
this path can be completed to end in a terminal vertex to obtain
$
\vertexB_1 \ldots \vertexB_m
\vertexA_1 \ldots \vertexA_n
\vertexB_{m+1} \ldots \vertexB_{m+k}
$
for some $k$ and vertices $\vertexB_{m+1}$ to $\vertexB_{m+k}$ such that $\vertexB_{k+m} \in V^T$.
By the syntax of global types and the construction of $H(\GG)$, there is a role $\procA$ that is the (only) sender in $\vertexA_1$ and $\vertexB_{m+1}$.

Without loss of generality, let $\mathcal{S}_1$ and $\mathcal{S}_2$ be the two sets of (active) roles whose communication graphs of
$\vertexA_1 \ldots \vertexA_n$
are weakly connected and their union consists of all active roles.
Similar reasoning applies if there are more than two sets.

We want to consider specific linearisations from the language of the BMSC of each subpath.
Intuitively, these simply follow the order prescribed by the global type and do not exploit the partial order of BMSC or the closure of the semantics for global types.
For this, we say that $w_1$ is the \emph{canonical word for path} $\vertexB_1, \ldots \vertexB_m$ if
$w_1 \in \set{w'_1 \ldots w'_m \mid w'_i \in \lang(\mu(\vertexB_i)) \text{ for } 1 \leq i \leq m}$.
Analogously, let $w_2$ be the canonical word for $\vertexA_1 \ldots \vertexA_n$ and $w_3$ be the canonical word for $\vertexB_{m+1} \ldots \vertexB_{m+k}$.
Without loss of generality,
$\mathcal{S}_1$ contains the sender of the first element in $w_2$ and $w_3$ --- basically the role which decides when to exit the loop for the considered loop branch.
Let $\CSMl{H(\GG) \wproj_\procA}$ be the erasure candidate implementation.
By its definition and the correctness of $H(\GG)$, it holds that:
$
 \lang(\GG) = \lang(H(\GG)) .
$
With the equivalence of the canonical candidate implementation
(\cref{cor:erasure-canonical}),
the reasoning for Lemma 3.2 by Lohrey~\cite{DBLP:journals/tcs/Lohrey03}, and the fact that it generalises to infinite executions (\cref{lm:finite-generalises-to-infinite}),
the erasure candidate implementation admits at least the language specified by $H(\GG)$:
\vspace{-2ex}
\[
 \lang(H(\GG))
 \subseteq
 \lang(\CSMl{H(\GG) \wproj_\procA}) .
\]
Thus, it holds that
$
\lang(\GG)
=
\lang(\CSMl{H(\GG) \wproj_\procA})
$
if $\GG$ is implementable.
Therefore, we know that
$w_1\,.\,w_2\,.\,w_3 \in \lang(\CSMl{H(\GG) \wproj_\procA})$.

From the construction of $H(\GG)$ and the construction of $w_i$ for $i \in \set{1,2,3}$,
it also holds that
$w_1\,.\,(w_2)^h\,.\,w_3 \in \lang(H(\GG)) \subseteq \lang(\CSMl{H(\GG) \wproj_\procA})$ for any $h > 0$.

By construction of $\mathcal{S}_1$ and $\mathcal{S}_2$, no two roles from both sets communicate with each other in~$w_2$: there are no $\procC \in \mathcal{S}_1$ and $\procD \in \mathcal{S}_2$ such that
$\snd{\procC}{\procD}{\val}$ is in $w_2$
or
$\snd{\procD}{\procC}{\val}$ is in $w_2$
(and consequently
$\rcv{\procD}{\procC}{\val}$ is in $w_2$
or
$\rcv{\procC}{\procD}{\val}$ is in $w_2$)
for any $\val$.

From the previous two observations, it follows that
\vspace{-2ex}
\[
w_1\,.\,w_2\,.\,(w_2 \wproj_{\Alphabet_{\mathcal{S}_1}})^h.\,w_3 \in \lang(\CSMl{H(\GG) \wproj_\procA})
\]
for any $h$
where
$\Alphabet_{\mathcal{S}_1}
    =
 \Union_{\procC \in \mathcal{S}_1} \Alphabet_{\procC}$.
Intuitively, this means that the set of roles with the role to decide when to exit the loop can continue longer in the loop than the roles in $\mathcal{S}_2$.

With $\lang(\GG) = \lang(H(\GG))$, it suffices to show the following to find a contradiction:
$w_1\,.\,w_2\,.\,(w_2 \wproj_{\Alphabet_{\mathcal{S}_1}}).\,w_3 \notin \lang(H(\GG))$.

Towards a contradiction, we assume the membership holds.
By determinacy of $H(\GG)$, we need to find a path $\vertexA'_1 \ldots \vertexA'_{m'}$, that starts at the beginning of the loop, i.e., $\vertexA'_1 = \vertexA_1$,
with canonical word $w_4$ such that
$w_2 \wproj_{\Alphabet_{\mathcal{S}_1}}.\,w_3 \interswap w_4$.

We show such a path cannot exist and that we would need to diverge during the loop.

We denote $w_2\,.\,w_3$ with $x \is x_1 \ldots x_{l}$ and $w_2 \wproj_{\Alphabet_{\mathcal{S}_1}}.\,w_3$ with $x' \is x'_1 \ldots x'_{l'}$.
We know that $x'$ is a subsequence of $x$.
Let $x_1 \ldots x_j = x'_1 \ldots x'_j$ denote the maximal prefix on which both agree.
Since $\mathcal{S}_2$ is not empty, we know that $j$ can be at most $\card{w_2 \wproj_{\Alphabet_{\mathcal{S}_1}}\negmedspace}$.
(Intuitively, $j$ cannot be so big that it reaches $w_3$ because there will be mismatches due to $w_2 \wproj_{\Alphabet_{\mathcal{S}_2}}$ before.)
We also claim that the next event $x_{j+1}$ cannot be a receive event.
If it was, there was a matching send event in $x_1 \ldots x_j$ (which is equal to $x'_1 \ldots x'_j$ by construction).
Such a matching send event exists by construction of $x$ from a path in $H(\GG)$.
By definition of $\wproj_{\hole}$, the matching receive event must be $x'_{j+1}$ which would contradict the maximality of $j$.
Thus, $x_{j+1}$ must be a send event.

By determinacy of $H(\GG)$ and $j \leq \card{w_2 \wproj_{\Alphabet_{\mathcal{S}_1}}\negmedspace}$, we know that $x_1 \ldots x_j = x'_1 \ldots x'_j$ share a path $\vertexA_1 \ldots \vertexA_{n'}$ which is a part of the loop, i.e., $x_1 \ldots x_j \in \lang(\mu(\vertexA_1) \cdots \mu(\vertexA_{n'}))$ with $n < n'$.
For $M( \msgFromTo{\procA}{\procB}{\val} )$ --- the BMSC with solely this interaction from~\cref{def:hmsc-encoding}, we say that $\procA$ is its \emph{sender}.
The syntax of global types prescribes that choice is deterministic and the sender in a choice is unique.
This is preserved for $H(\GG)$: for every vertex, all its successors have the same sender.
Therefore, the path for $x'$ can only diverge, but also needs to diverge, from the loop $\vertexA_1 \ldots \vertexA_n$ after the common prefix $\vertexA_1 \ldots \vertexA_{n'}$ with a different send event but with the same sender.
Let $\vertexA_{l}$ be next vertex after $\vertexA_1 \ldots \vertexA_{n'}$ on the loop $\vertexA_1, \ldots, \vertexA_n$ for which $\mu(\vertexA_{l})$ is not $M_\emptystring$ --- the BMSC with an empty set of event nodes from \cref{def:hmsc-encoding}.
Note that $x_{j+1}$ belongs to $\vertexA_l$:
$x_{j+1} \in \pref(\lang(\mu(\vertexA_l)))$.

We do another case analysis whether $x_{j+1}$ belongs to $\mathcal{S}_1$ or not, i.e., if $x_{j+1} \in \Alphabet_{\mathcal{S}_1}$.

If $x_{j+1} \notin \Alphabet_{\mathcal{S}_1}$, there cannot be a path that continues for $x'_{j+1} \in \Alphabet_{\mathcal{S}_1}$ as the sender for $\mu(\vertexA_{l})$ is not in $\mathcal{S}_1$.
If $x_{j+1} \in \Alphabet_{\mathcal{S}_1}$, the choice of $j$ was not maximal which yields a contradiction.
\end{proof}

\subsection{Further Explanation for \cref{ex:implementable-not-glob-coop}}
\label{app:explanation-implementable-not-glob-coop}

Here, we show that any trace of the CSM in \cref{fig:csm-ing} is specified by the HMSC in \cref{fig:implementable-hmsc-not-glob-coop}.
Let us consider a finite execution of the CSM for which we want to find a path in the HMSC.
Let us assume there are $i$ interactions between $\procA$ and $\procB$ and $j$ interactions between $\procC$ and $\procD$.
In our asynchronous setting, these interactions are split and can be interleaved.
From the CSM, it is easy to see that both $i$ and $j$ are each at least $2$.
The simplest path goes through the first loop once and accounts for $i - 1$ iterations in the second loop and $j - 1$ iterations in the third one.
A more involved path could account for $\min(i,j)-1$ iterations of the first loop, as many as possible, and
$i - \min(i,j) + 1$ iterations of the second loop as well as
$j - \min(i,j) + 1$ iterations of the third loop.
The key that both paths are valid possibilities is that the interactions of $\procA$ and $\procB$ in the first and second loop are indistinguishable, i.e., the executions can be reordered with $\interswap$ such that both is possible.
The syntactic restriction on choice does prevent this for global types (and this protocol cannot be represented with a global type).
Intuitively, one cannot make up for a different number of loop iterations, which are the consequence of missing synchronisation, in global types because the ''loop exit``-message will be distinct (compared to staying in the loop) and anything specified afterwards cannot be reordered with $\interswap$ in front of~it.
It~is straightforward to adapt the protocol so final states do not have outgoing transitions.
We add another vertex with a BMSC at the bottom, which has the same structure as the top one but with another message $l$ instead of $m$.
We add an edge from the previous terminal vertex to the new vertex and make the new one the only terminal vertex.
With this, $\procA$ and $\procC$ can eventually decide not to send $m$ anymore and indicate their choice with the distinct message $l$ to the other two roles.
 \section{Proof for \cref{lm:correctness-checking-I-closedness}: \newline Correctness of \cref{alg:checking-I-closedness} to check $\pmb{\mathcal{I}}$-closedness of Global Types}
\label{proof:correctness-checking-I-closedness}

\correctnessCheckingIClosedness*

\begin{proof}
It is obvious that the language is preserved by the changes to the state machine.
(We basically turned an unambiguous state machine into a deterministic one.)

For soundness, we assume that \cref{alg:checking-I-closedness} returns \emph{true} and let $w$ be a word in $\indeprellang(\lang(\semglobalsync(\GG)))$.
By definition, there is a run with trace $w'$ in $\semglobalsync(\GG)$ such that $w' \equiv_{\mathcal{I}} w$.
The conditions in \cref{alg:checking-I-closedness} ensure that $w = w'$ because no two adjacent elements in $w'$ can be reordered with $\equiv_{\mathcal{I}}$.
Therefore, $w \in \lang(\semglobalsync(\GG))$ which proves the claim.

For completeness, we assume that \cref{alg:checking-I-closedness} returns \emph{false} and show that there is $w \in \indeprellang(\lang_{\fin}(\GG))$ such that
$w \notin \lang_{\fin}(\semglobalsync(\GG))$.
Without loss of generality, let $q_2$ be the state for which an incoming label $x$ and outgoing label $y$ can be reordered,
i.e.,
$x \equiv_{\mathcal{I}} y$,
and let $q_1$ be the state from which the transition with label $x$ originates: $q_1 \xrightarrow{x} q_2 \in \delta_{\semglobalsync(\GG)}$.
We consider a word $w'$ which is the trace of a maximal run that passes $q$ and the transitions labelled with $x$ and~$y$.
By construction, it holds that $w' \in \lang_{\fin}(\semglobalsync(\GG))$.
We swap $x$ and $y$ in $w'$ to obtain~$w$.
We denote
$x$ with $\msgFromTo{\procA}{\procB}{\val}$ and
$y$ with $\msgFromTo{\procC}{\procD}{\val'}$
such that
$\set{\procA, \procB} \inters \set{\procC, \procD} \neq \emptyset$.
From the syntactic restrictions of global types, we know that any transition label from $q_1$ has sender $\procA$ while every transition label from $q_2$ has sender $\procC$.
Because of this and determinacy of the state machine, there is no run in $\semglobalsync(\GG)$ with trace $w'$.
Thus, $w \notin \lang_{\fin}(\semglobalsync(\GG))$ which concludes the proof.
\end{proof}
 \section{Proof for \cref{thm:implementability-intraswap-reordering-undecidable}: Implementability with regard to \newline Intra-role Reordering for Global Types from MSTs is Undecidable}
\label{proof:undec-intraswap}

\implementabilityIntraswapReorderingUndecidable*

\begin{proof}
Let $\set{(u_1,u_2,\ldots,u_n),(v_1,v_2,\ldots,v_n)}$ be an instance of MPCP where $1$ is the special index which each solution needs to start with.
We construct a global type where, for a word $w = a_1a_2\cdots a_m \in \Delta\negmedspace^*$,  a message labelled $[w]$ denotes a sequence of individual message interactions with message $a_1$, $a_2$, \ldots, $a_m$, each of size $1$.
We define a parametric global type where $x \in \set{u, v}$: 

\vspace{-3ex}
{
\scriptsize \[
G(x, X)
    \quad \is \quad
    \msgFromTo{\procA}{\procB}{\textit{c-}x}. \,
    \msgFromTo{\procA}{\procB}{1}. \,
    \msgFromTo{\procA}{\procC}{1}. \,
    \msgFromTo{\procB}{\procC}{[x_1]}. \,
\mu t. \,
+
    \begin{cases}
    \msgFromTo{\procA}{\procB}{1}. \,
    \msgFromTo{\procA}{\procC}{1}. \,
    \msgFromTo{\procB}{\procC}{[x_1]}. \, t
    \\
    \cdots
    \\
    \msgFromTo{\procA}{\procB}{n}. \,
    \msgFromTo{\procA}{\procC}{n}. \,
    \msgFromTo{\procB}{\procC}{[x_n]}. \, t
    \\
    \msgFromTo{\procA}{\procB}{\doneshort}. \,
    \msgFromTo{\procA}{\procC}{\doneshort}. \,
    \msgFromTo{\procB}{\procC}{\doneshort}. \,
    X
\end{cases}
\]
}

\noindent where $\textit{c-}x$ indicates \emph{choosing} tile set $x$.
We use the message $\textit{ack-}x$ to indicate \emph{acknowledgement} of tile set $x$.
With this, we obtain our encoding:

\vspace{-2ex}
{ \scriptsize \[
\GG_{\MPCP}
    \quad \is \quad
+
\begin{cases}
G(u, \msgFromTo{\procC}{\procA}{\textit{ack-}u}. \, 0)
\\
G(v, \msgFromTo{\procC}{\procA}{\textit{ack-}v}. \, 0)
\end{cases}
. \] }

\begin{figure}[t]
\centering
\def\svgwidth{0.75\textwidth}
\begingroup \makeatletter \providecommand\color[2][]{\errmessage{(Inkscape) Color is used for the text in Inkscape, but the package 'color.sty' is not loaded}\renewcommand\color[2][]{}}\providecommand\transparent[1]{\errmessage{(Inkscape) Transparency is used (non-zero) for the text in Inkscape, but the package 'transparent.sty' is not loaded}\renewcommand\transparent[1]{}}\providecommand\rotatebox[2]{#2}\newcommand*\fsize{\dimexpr\f@size pt\relax}\newcommand*\lineheight[1]{\fontsize{\fsize}{#1\fsize}\selectfont}\ifx\svgwidth\undefined \setlength{\unitlength}{774.58717166bp}\ifx\svgscale\undefined \relax \else \setlength{\unitlength}{\unitlength * \real{\svgscale}}\fi \else \setlength{\unitlength}{\svgwidth}\fi \global\let\svgwidth\undefined \global\let\svgscale\undefined \makeatother \begin{picture}(1,0.34530324)\lineheight{1}\setlength\tabcolsep{0pt}\put(0,0){\includegraphics[width=\unitlength,page=1]{figs/hmsc-mpcp-encoding.pdf}}\put(0.42748298,0.29847817){\color[rgb]{0,0,0}\makebox(0,0)[lt]{\lineheight{1.25}\smash{\begin{tabular}[t]{l}{\scriptsize $\procA$}\end{tabular}}}}\put(0,0){\includegraphics[width=\unitlength,page=2]{figs/hmsc-mpcp-encoding.pdf}}\put(0.2600398,0.27861764){\color[rgb]{0,0,0}\makebox(0,0)[lt]{\lineheight{1.25}\smash{\begin{tabular}[t]{l}{\tiny $\textit{c-}u$}\end{tabular}}}}\put(0.3156623,0.22173455){\color[rgb]{0,0,0}\makebox(0,0)[lt]{\lineheight{1.25}\smash{\begin{tabular}[t]{l}{\tiny $[u_1]$}\end{tabular}}}}\put(0.26876969,0.25623815){\color[rgb]{0,0,0}\makebox(0,0)[lt]{\lineheight{1.25}\smash{\begin{tabular}[t]{l}{\tiny $1$}\end{tabular}}}}\put(0.33316407,0.24034016){\color[rgb]{0,0,0}\makebox(0,0)[lt]{\lineheight{1.25}\smash{\begin{tabular}[t]{l}{\tiny $1$}\end{tabular}}}}\put(0,0){\includegraphics[width=\unitlength,page=3]{figs/hmsc-mpcp-encoding.pdf}}\put(0.08737807,0.03310352){\color[rgb]{0,0,0}\makebox(0,0)[lt]{\lineheight{1.25}\smash{\begin{tabular}[t]{l}{\tiny $[u_1]$}\end{tabular}}}}\put(0.04048539,0.06729812){\color[rgb]{0,0,0}\makebox(0,0)[lt]{\lineheight{1.25}\smash{\begin{tabular}[t]{l}{\tiny $1$}\end{tabular}}}}\put(0.10487985,0.05164879){\color[rgb]{0,0,0}\makebox(0,0)[lt]{\lineheight{1.25}\smash{\begin{tabular}[t]{l}{\tiny $1$}\end{tabular}}}}\put(0,0){\includegraphics[width=\unitlength,page=4]{figs/hmsc-mpcp-encoding.pdf}}\put(0.10563762,0.17445457){\color[rgb]{0,0,0}\makebox(0,0)[lt]{\lineheight{1.25}\smash{\begin{tabular}[t]{l}{\tiny \doneshort}\end{tabular}}}}\put(0.02975806,0.14078956){\color[rgb]{0,0,0}\makebox(0,0)[lt]{\lineheight{1.25}\smash{\begin{tabular}[t]{l}{\tiny $\textit{ack-}u$}\end{tabular}}}}\put(0,0){\includegraphics[width=\unitlength,page=5]{figs/hmsc-mpcp-encoding.pdf}}\put(0.3146029,0.03222686){\color[rgb]{0,0,0}\makebox(0,0)[lt]{\lineheight{1.25}\smash{\begin{tabular}[t]{l}{\tiny $[u_n]$}\end{tabular}}}}\put(0.26871023,0.06682158){\color[rgb]{0,0,0}\makebox(0,0)[lt]{\lineheight{1.25}\smash{\begin{tabular}[t]{l}{\tiny $n$}\end{tabular}}}}\put(0.17904244,0.05279385){\color[rgb]{0,0,0}\makebox(0,0)[lt]{\lineheight{1.25}\smash{\begin{tabular}[t]{l}\textit{...}\end{tabular}}}}\put(0.33310464,0.05148574){\color[rgb]{0,0,0}\makebox(0,0)[lt]{\lineheight{1.25}\smash{\begin{tabular}[t]{l}{\tiny $n$}\end{tabular}}}}\put(0.4904877,0.29866708){\color[rgb]{0,0,0}\makebox(0,0)[lt]{\lineheight{1.25}\smash{\begin{tabular}[t]{l}{\scriptsize $\procB$}\end{tabular}}}}\put(0.55517307,0.29732859){\color[rgb]{0,0,0}\makebox(0,0)[lt]{\lineheight{1.25}\smash{\begin{tabular}[t]{l}{\scriptsize $\procC$}\end{tabular}}}}\put(0,0){\includegraphics[width=\unitlength,page=6]{figs/hmsc-mpcp-encoding.pdf}}\put(0.6364881,0.27948344){\color[rgb]{0,0,0}\makebox(0,0)[lt]{\lineheight{1.25}\smash{\begin{tabular}[t]{l}{\tiny $\textit{c-}v$}\end{tabular}}}}\put(0.69219289,0.22227469){\color[rgb]{0,0,0}\makebox(0,0)[lt]{\lineheight{1.25}\smash{\begin{tabular}[t]{l}{\tiny $[v_1]$}\end{tabular}}}}\put(0.6449834,0.2562057){\color[rgb]{0,0,0}\makebox(0,0)[lt]{\lineheight{1.25}\smash{\begin{tabular}[t]{l}{\tiny $1$}\end{tabular}}}}\put(0.70937696,0.24087031){\color[rgb]{0,0,0}\makebox(0,0)[lt]{\lineheight{1.25}\smash{\begin{tabular}[t]{l}{\tiny $1$}\end{tabular}}}}\put(0,0){\includegraphics[width=\unitlength,page=7]{figs/hmsc-mpcp-encoding.pdf}}\put(0.9244207,0.03334701){\color[rgb]{0,0,0}\makebox(0,0)[lt]{\lineheight{1.25}\smash{\begin{tabular}[t]{l}{\tiny $[v_n]$}\end{tabular}}}}\put(0.87902821,0.06749055){\color[rgb]{0,0,0}\makebox(0,0)[lt]{\lineheight{1.25}\smash{\begin{tabular}[t]{l}{\tiny $n$}\end{tabular}}}}\put(0.94342258,0.05159228){\color[rgb]{0,0,0}\makebox(0,0)[lt]{\lineheight{1.25}\smash{\begin{tabular}[t]{l}{\tiny $n$}\end{tabular}}}}\put(0,0){\includegraphics[width=\unitlength,page=8]{figs/hmsc-mpcp-encoding.pdf}}\put(0.69219289,0.03289537){\color[rgb]{0,0,0}\makebox(0,0)[lt]{\lineheight{1.25}\smash{\begin{tabular}[t]{l}{\tiny $[v_1]$}\end{tabular}}}}\put(0.64492371,0.06679008){\color[rgb]{0,0,0}\makebox(0,0)[lt]{\lineheight{1.25}\smash{\begin{tabular}[t]{l}{\tiny $1$}\end{tabular}}}}\put(0.81802558,0.0506373){\color[rgb]{0,0,0}\rotatebox{-180}{\makebox(0,0)[lt]{\lineheight{1.25}\smash{\begin{tabular}[t]{l}\textit{...}\end{tabular}}}}}\put(0.70931766,0.05089171){\color[rgb]{0,0,0}\makebox(0,0)[lt]{\lineheight{1.25}\smash{\begin{tabular}[t]{l}{\tiny $1$}\end{tabular}}}}\put(0,0){\includegraphics[width=\unitlength,page=9]{figs/hmsc-mpcp-encoding.pdf}}\put(0.04086324,0.18843561){\color[rgb]{0,0,0}\makebox(0,0)[lt]{\lineheight{1.25}\smash{\begin{tabular}[t]{l}{\tiny \doneshort}\end{tabular}}}}\put(0,0){\includegraphics[width=\unitlength,page=10]{figs/hmsc-mpcp-encoding.pdf}}\put(0.10563762,0.15122597){\color[rgb]{0,0,0}\makebox(0,0)[lt]{\lineheight{1.25}\smash{\begin{tabular}[t]{l}{\tiny \doneshort}\end{tabular}}}}\put(0,0){\includegraphics[width=\unitlength,page=11]{figs/hmsc-mpcp-encoding.pdf}}\put(0.94454514,0.17216115){\color[rgb]{0,0,0}\makebox(0,0)[lt]{\lineheight{1.25}\smash{\begin{tabular}[t]{l}{\tiny \doneshort}\end{tabular}}}}\put(0.86797851,0.13936641){\color[rgb]{0,0,0}\makebox(0,0)[lt]{\lineheight{1.25}\smash{\begin{tabular}[t]{l}{\tiny $\textit{ack-}v$}\end{tabular}}}}\put(0,0){\includegraphics[width=\unitlength,page=12]{figs/hmsc-mpcp-encoding.pdf}}\put(0.87977077,0.18614219){\color[rgb]{0,0,0}\makebox(0,0)[lt]{\lineheight{1.25}\smash{\begin{tabular}[t]{l}{\tiny \doneshort}\end{tabular}}}}\put(0,0){\includegraphics[width=\unitlength,page=13]{figs/hmsc-mpcp-encoding.pdf}}\put(0.94454514,0.14893255){\color[rgb]{0,0,0}\makebox(0,0)[lt]{\lineheight{1.25}\smash{\begin{tabular}[t]{l}{\tiny \doneshort}\end{tabular}}}}\put(0,0){\includegraphics[width=\unitlength,page=14]{figs/hmsc-mpcp-encoding.pdf}}\end{picture}\endgroup  \caption{HMSC encoding $H(\GG_{\MPCP})$ of the MPCP encoding (same as in \cref{fig:hmsc-mpcp-encoding})}
\label{fig:hmsc-mpcp-encoding-2}
\vspace{-2.5ex}
\end{figure}

\Cref{fig:hmsc-mpcp-encoding-2} illustrates its HMSC encoding
$H(\GG_{\MPCP})$.

It suffices to show the following equivalences:
\vspace{-1ex}
\begin{align*}
&
\quad
\GG_{\MPCP} \text{ is } \intraswap \text{-implementable} \\
\Leftrightarrow_1
&
\quad
\intraswaplang(\lang(G(u, 0))) \wproj_{\Alphabet_\procC}
    \inters
\intraswaplang(\lang(G(v, 0))) \wproj_{\Alphabet_\procC}
    =
\emptyset
\\
\Leftrightarrow_2
&
\quad
\text{MPCP instance has no solution}
\end{align*}

We prove $\Rightarrow_1$ by contraposition.
Let $w \in
\intraswaplang(\lang(G(u, 0))) \wproj_{\Alphabet_\procC}
    \inters
\intraswaplang(\lang(G(v, 0))) \wproj_{\Alphabet_\procC}$.
For $x \in \set{u, v}$,
let $w_x \in
\intraswaplang(\lang(G(x, 0)))$
such that $w_x \wproj_{\Alphabet_\procC} = w$.
By construction of $\GG_{\MPCP}$, we know that
$w_x\,.\,\snd{\procC}{\procA}{\textit{ack-}x}\,.\,\rcv{\procC}{\procA}{\textit{ack-}x}
\in
\intraswaplang(\lang(\GG_{\MPCP}))$.

Suppose that CSM $\CSM{A}$ $\intraswap$-implements $\GG_{\MPCP}$.
Then, it holds that
\vspace{-2ex}
\[
w_x\,.\,\snd{\procC}{\procA}{\textit{ack-}x}\,.\,\rcv{\procC}{\procA}{\textit{ack-}x}
\in
\intraswaplang(\lang(\CSM{A}))
\]
by (ii) from \cref{def:intra-role-implementability}.
We also know that
$w_x\,.\,\snd{\procC}{\procA}{\textit{ack-}y}\,.\,\rcv{\procC}{\procA}{\textit{ack-}y}
\notin
\intraswaplang(\lang(\GG_{\MPCP}))$ for $x \neq y$ where $x, y \in \set{u,v}$.
By the choice of $w_u$ and $w_v$, it holds that
$w_u \wproj_{\Alphabet_\procC} = w = w_v \wproj_{\Alphabet_\procC}$.
Therefore, $\procC$ needs to be in the same state of~$A_\procC$ after processing
$w_u \wproj_{\Alphabet_\procC}$ or $w_v \wproj_{\Alphabet_\procC}$ and it can either send both $\textit{ack-}u$ and $\textit{ack-}v$, only one of them or none of them to $\procA$.
Thus, either one of the following is true:
\vspace{-1ex}
\begin{itemize}
 \item [a)] (sending both) $w_x\,.\,\snd{\procC}{\procA}{\textit{ack-}y} \in \pref(\intraswaplang(\lang(\CSM{A})))$ for $x \neq y$ where $x,y \in \set{u,v}$, or
 \item [b)] (sending $u$ without loss of generality) $w_v\,.\,\snd{\procC}{\procA}{\textit{ack-}u} \notin \pref(\intraswaplang(\lang(\CSM{A})))$, or
 \item [c)] (sending none) $w_x\,.\,\snd{\procC}{\procA}{\textit{ack-}x} \notin \pref(\intraswaplang(\lang(\CSM{A})))$ for $x \in \set{u,v}$.
\end{itemize}
\vspace{-1ex}
All cases lead to deadlocks in $\CSM{A}$.
For a) and for b) if $\textit{c-}v$ was chosen in the beginning, $\procA$ cannot receive the sent message as it disagrees with its choice from the beginning $\textit{c-}x$.
In all other cases, $\procA$ waits for a message while no message will ever be sent.
Having deadlocks contradicts the assumption that $\CSM{A}$ $\intraswap$-implements~$\GG$ (and there cannot be any CSM that $\intraswap$-implements $\GG$).

We prove $\Leftarrow_1$ next.
The language $\intraswaplang(\lang(\GG_{\MPCP}))$ is obviously non-empty.
Therefore, let $w' \in \intraswaplang(\lang(\GG_{\MPCP}))$.
We split $w'$ to obtain:
\vspace{-2ex}
\[
w' = w \,.\,\snd{\procC}{\procA}{\textit{ack-}x}\,.\,\rcv{\procC}{\procA}{\textit{ack-}x} \text{ for some $w$ and $x \in \set{u,v}$}.
\]
By construction of $\GG_{\MPCP}$, we know that
\vspace{-2ex}
\[
w
    \in
\intraswaplang(\lang(G(u, 0)))
    \; \union \;
\intraswaplang(\lang(G(v, 0))).
\]
By assumption, it follows that exactly one of the following holds:
\vspace{-2ex}
\[
    w \wproj_{\Alphabet_\procC}
        \in
    \intraswaplang(\lang(G(u, 0))) \wproj_{\Alphabet_\procC}
    \quad
    \text{ or }
    \quad
    w \wproj_{\Alphabet_\procC}
        \in
    \intraswaplang(\lang(G(v, 0))) \wproj_{\Alphabet_\procC}.
\]

We give a $\intraswap$-implementation for $\GG_{\MPCP}$.
It is straightforward to construct FSMs for both $\procA$ and $\procB$.
They are involved in the initial decision and $\intraswap$ does not affect their projected languages.
Thus, the projection by erasure can be applied to obtain FSMs $A_\procA$ and $A_\procB$.
We construct an FSM $A_\procC$ for $\procC$ with control state
$\texttt{i} \in \set{1, \ldots, n}$,
$\texttt{j} \in \set{1, \ldots, \max(\card{u_i} \mid i \in \set{1,\ldots,n})}$,
$\texttt{d} \in \set{0,1,2}$, and
$\texttt{x} \in \set{u, v}$,
where $\card{w}$ denotes the length of a word.
The FSM is constructed in a way such that
\vspace{-2ex}
\begin{align*}
    & w \wproj_{\Alphabet_\procC}
        \in
    \intraswaplang(\lang(G(u, 0))) \wproj_{\Alphabet_\procC}
\quad \text{ if and only if } \quad
\texttt{d} \text{ is } 2 \text{ and }
\texttt{x} \text{ is } u
\qquad \text{ as well as }
\\
    & w \wproj_{\Alphabet_\procC}
        \in
    \intraswaplang(\lang(G(v, 0))) \wproj_{\Alphabet_\procC}
\quad \text{ if and only if } \quad
\texttt{d} \text{ is } 2 \text{ and }
\texttt{x} \text{ is } v.
\end{align*}

We first explain that this characterisation suffices to show that $\CSM{A}$ $\intraswap$-implements~$\GG$.
The control state $\texttt{d}$ counts the number of received $\doneshort$-messages.
Thus, there will be no more messages to $\procC$ in any channel once $\texttt{d}$ is $2$ by construction of $\GG_{\MPCP}$.
Once in a state for which $\texttt{d}$ is~$2$, $\procC$ sends message $\textit{ack-}u$ to $\procA$ if $\texttt{x}$ is $u$ and message $\textit{ack-}v$ if $\texttt{x}$ is $v$.
With the characterisation, this message $\textit{ack-}x$ matches the message $\textit{c-}x$ sent from $\procA$ to $\procB$ in the beginning and, thus, $\procA$ will be able to receive it and conclude the execution.

Now, we will explain how to construct the FSM $A_\procC$.
Intuitively, $\procC$ keeps a tile number, which it tries to match against, and stores this in $\texttt{i}$.
It is initially set to $0$ to indicate no tile has been chosen yet.
The index $\texttt{j}$ denotes the position of the letter it needs to match in tile $u_{\texttt{i}}$ next and, thus, is  initialised to $1$.
The variable $\texttt{d}$ indicates the number of $\doneshort$-messages received so far, so initially $\texttt{d}$ is $0$.
With this, $\procC$ knows when it needs to send $\textit{ack-}x$.
The FSM for $\procC$ tries to match the received messages against the tiles of $u$, so $\texttt{x}$ is initialised to~$u$.
If this matching fails at some point, $\texttt{x}$ is set to $v$ as it learned that $v$ was chosen initially by~$\procA$.

\noindent In any of the following cases: if a received message is a $\doneshort$-message, $\texttt{d}$ is solely increased by~$1$:
\vspace{-1ex}
\begin{itemize}
 \item If $\texttt{x}$ is $u$ and $\texttt{i}$ is $0$, $\procC$ receives a message $z$ from $\procA$ and sets $\texttt{i}$ to $z$ (technically the integer represented by $z$).
 \item If $\texttt{x}$ is $u$ and $\texttt{i}$ is not $0$, $\procC$ receives a message $z$ from $\procB$.
    \vspace{-1ex}
 \begin{itemize}
  \item If $z$ is the same as $u_{\texttt{i}}[\texttt{j}]$, we increment $\texttt{j}$ by $1$ and \\ check if $\texttt{j} > \card{u_{\texttt{i}}}$ and, if so, set $\texttt{i}$ to $0$ and $\texttt{j}$ to $1$
  \item If not, we set $\texttt{x}$ to $v$
 \end{itemize}
    \vspace{-1ex}
 \item Once $\texttt{x}$ is $v$, $\procC$ can simply receive all remaining messages in any order. \end{itemize}
\vspace{-1ex}
The described FSM can be used for $\procC$ because it reliably checks whether a presented sequence of indices and words belongs to tile set $u$ or $v$.
It can do so because
$\intraswaplang(\lang(G(u, 0))) \wproj_{\Alphabet_\procC}
    \inters
\intraswaplang(\lang(G(v, 0))) \wproj_{\Alphabet_\procC} = \emptyset$ by assumption.

We prove $\Rightarrow_2$ by contraposition.
Suppose the MPCP instance has a solution.
Let $i_1$, $\ldots$, $i_k$ be a non-empty sequence of indices such that
$
 u_{i_1} u_{i_2} \cdots u_{i_{k}}
 =
 v_{i_1} v_{i_2} \cdots v_{i_{k}}
$
and $i_1 = 1$.
It~is easy to see that
\vspace{-2ex}
{ \small
\[
 w_x \is
 \rcv{\procA}{\procC}{i_1}
 \rcv{\procB}{\procC}{[x_{i_1}]}.\,
 \cdots \,.\,
 \rcv{\procA}{\procC}{i_k}.\,
 \rcv{\procB}{\procC}{[x_{i_k}]}.\,
 \rcv{\procA}{\procC}{\doneshort}.\,
 \rcv{\procB}{\procC}{\doneshort}
\in \lang(G(x, 0)) \wproj_{\Alphabet_\procC}
\text{ for } x \in \set{u, v}.
\]
}
By definition of $\intraswap$, we can re-arrange the previous sequences such that

\vspace{-3ex}
{ \small
\[
 \rcv{\procA}{\procC}{i_1}.\,
 \cdots \,.\,
 \rcv{\procA}{\procC}{i_k}.\,
 \rcv{\procB}{\procC}{[x_{i_1}]}.\,
 \cdots \,.\,
 \rcv{\procB}{\procC}{[x_{i_k}]}.\,
 \rcv{\procA}{\procC}{\doneshort}.\,
 \rcv{\procB}{\procC}{\doneshort}
 \in \intraswaplang(\lang(G(x, 0))) \wproj_{\Alphabet_\procC}
\text{ for } x \in \set{u, v}.
\]
}
Because $i_1, \ldots, i_k$ is a solution to the instance of MPCP, it holds that
\vspace{-2ex}
\[
 \rcv{\procB}{\procC}{[u_{i_1}]}.\,
 \cdots \,.\,
 \rcv{\procB}{\procC}{[u_{i_k}]}
 =
 \rcv{\procB}{\procC}{[v_{i_1}]}.\,
 \cdots \,.\,
 \rcv{\procB}{\procC}{[v_{i_k}]}
\]
and, thus,
\vspace{-2ex}
\[
 \rcv{\procA}{\procC}{i_1}.\,
 \cdots \,.\,
 \rcv{\procA}{\procC}{i_k}.\,
 \rcv{\procB}{\procC}{[u_{i_1}]}.\,
 \cdots \,.\,
 \rcv{\procB}{\procC}{[u_{i_k}]}.\,
 \rcv{\procA}{\procC}{\doneshort}.\,
 \rcv{\procB}{\procC}{\doneshort}
 \text{ is in }
 \intraswaplang(\lang(G(v, 0))) \wproj_{\Alphabet_\procC}.
\]
This shows that
$
 \intraswaplang(\lang(G(u, 0))) \wproj_{\Alphabet_\procC}
    \inters
 \intraswaplang(\lang(G(v, 0))) \wproj_{\Alphabet_\procC}
    \neq
 \emptyset
$.

Lastly, we prove $\Leftarrow_2$.
We know that the MPCP instance has no solution.
Thus, there cannot be a non-empty sequence of indices $i_1$, $i_2$, \ldots, $i_k$ such that
$
 u_{i_1} u_{i_2} \cdots u_{i_{k}}
 =
 v_{i_1} v_{i_2} \cdots v_{i_{k}}
$
and $i_1 = 1$.
For any possible word $w_u \in
\intraswaplang(\lang(G(u, 0))) \wproj_{\Alphabet_\procC}$
and word $w_v \in
\intraswaplang(\lang(G(v, 0))) \wproj_{\Alphabet_\procC}$.

We consider the sequence of receive events $w_x \wproj_{\rcv{\procA}{\procC}{\_}}$ with sender $\procA$ and the sequence of messages $w_x \wproj_{\rcv{\procB}{\procC}{\_}}$ from $\procB$ for $x \in \set{u,v}$.
The intra-role indistinguishability relation $\intraswap$ allows to reorder events of both but for a non-empty intersection of both sets, we would still need to find a word $w_u$ and $w_v$ such that
\vspace{-2ex}
\[
    w_u \wproj_{\rcv{\procA}{\procC}{\_}}
        =
    w_v \wproj_{\rcv{\procA}{\procC}{\_}}
    \quad \text{ and } \quad
    w_u \wproj_{\rcv{\procB}{\procC}{\_}}
        =
    w_v \wproj_{\rcv{\procB}{\procC}{\_}}.
\]
However, $G(x,0)$ for $x \in \set{u,v}$ is constructed in a way that this is only possible if the MPCP instance has a solution.
Therefore, the intersection is empty which proves our claim.
\end{proof}

\end{document}